\documentclass[letterpaper,11pt]{article}

\PassOptionsToPackage{table,xcdraw}{xcolor}
\usepackage[utf8]{inputenc}
\usepackage[T1]{fontenc}
\usepackage{amsmath, amsthm, amssymb, thm-restate}
\usepackage{dsfont}
\usepackage{algorithmicx}
\usepackage[ruled,vlined,linesnumbered]{algorithm2e}
\usepackage{setspace}
\usepackage{mathtools}
\usepackage[numbers]{natbib}
\usepackage{comment} 
\usepackage[most]{tcolorbox} 
\usepackage{xfrac}
\usepackage{hyperref}
\usepackage{multirow}
\usepackage{caption}
\usepackage{bm}
\usepackage{float}
\usepackage{newfloat}
\usepackage{enumitem}
\usepackage{dblfloatfix} 
\usepackage{wrapfig}
\usepackage{tikz}
\usepackage{csquotes}

\usepackage{fancyhdr}

\SetCommentSty{mycommfont}
\let\oldnl\nl
\newcommand{\nonl}{\renewcommand{\nl}{\let\nl\oldnl}}
\newcommand{\dmax}{\Delta_{\mathsf{max}}}
\usepackage[margin=1in]{geometry}

\renewcommand{\epsilon}{\varepsilon}
\newcommand{\eps}{\varepsilon}

\newcommand*{\E}{\mathbf{E}}

\newcommand{\hiddencomment}[1]{}

\newcommand{\mc}[1]{\ensuremath{\mathcal{#1}}}

\newcommand{\Bernoulli}{\text{Bernoulli}}
\newcommand{\one}{\mathds{1}}
\renewcommand{\dh}{\widehat{d}}
\newcommand{\RGMM}{\textsc{RGMM}}

\newcommand{\GMM}{\textsc{GMM}}
\newcommand{\False}{\textsc{False}}
\newcommand{\True}{\textsc{True}}

\newcommand{\R}{\mathbb{R}}
\newcommand{\norm}[1]{\left\lVert #1 \right\rVert}

\DeclareMathOperator{\poly}{\mathsf{poly}}

\usepackage[noabbrev,nameinlink]{cleveref}
\crefname{lemma}{Lemma}{Lemmas}
\crefname{algorithm}{Algorithm}{Algorithms}
\crefname{theorem}{Theorem}{Theorems}
\crefname{property}{Property}{Properties}
\crefname{claim}{Claim}{Claims}
\crefname{definition}{Definition}{Definitions}
\crefname{observation}{Observation}{Observations}
\crefname{proposition}{Proposition}{Propositions}
\crefname{assumption}{Assumption}{Assumptions}
\crefname{line}{Line}{Lines}
\crefname{figure}{Figure}{Figures}
\creflabelformat{property}{(#1)#2#3}
\crefname{equation}{}{}
\crefname{section}{Section}{Sections}
\crefname{appendix}{Appendix}{Appendices}

\newtheorem{theorem}{Theorem}
\newtheorem{lemma}{Lemma}[section]
\newtheorem{proposition}[lemma]{Proposition}
\newtheorem{corollary}[lemma]{Corollary}

\newtheorem{definition}[lemma]{Definition}
\newtheorem{claim}[lemma]{Claim}
\newtheorem{fact}[lemma]{Fact}

\newtheorem{remark}[lemma]{Remark}
\newtheorem{assumption}[lemma]{Assumption}

\newcommand{\ceil}[1]{{\left\lceil{#1}\right\rceil}}
\newcommand{\floor}[1]{{\left\lfloor{#1}\right\rfloor}}

\newcommand{\card}[1]{\lvert#1\rvert}

\DeclareMathOperator*{\Exp}{\ensuremath{{\mathbb{E}}}}
\DeclareMathOperator*{\Prob}{\ensuremath{\textnormal{Pr}}}
\renewcommand{\Pr}{\Prob}

\definecolor{mylightgray}{RGB}{240,240,240}

\algnewcommand{\IIf}[2]{\textbf{if} #1 \textbf{then} #2}
\algnewcommand{\EndIIf}{\unskip\ \algorithmicend\ \algorithmicif}
\algnewcommand{\IElse}[1]{\textbf{else} #1}

\newenvironment{graytbox}{
\par\addvspace{0.1cm}
\begin{tcolorbox}[width=\textwidth,
                  boxsep=5pt,
                  left=1pt,
                  right=1pt,
                  top=2pt,
                  bottom=2pt,
                  boxrule=1pt,
                  arc=0pt,
                  colback=mylightgray,
                  colframe=black,
                  ]
}{
\end{tcolorbox}
}

\newenvironment{whitetbox}{
\par\addvspace{0.1cm}
\begin{tcolorbox}[width=\textwidth,
                  boxsep=5pt,
                  left=1pt,
                  right=1pt,
                  top=2pt,
                  bottom=2pt,
                  boxrule=1pt,
                  arc=0pt,
                  colframe=black,
                  colback=white
                  ]
}{
\end{tcolorbox}
}

\newcounter{algCounter}

\allowdisplaybreaks

\definecolor{mygreen}{RGB}{10,110,230}
\definecolor{myred}{RGB}{10,110,230}

\hypersetup{
     colorlinks=true,
     citecolor= mygreen,
     linkcolor= myred,
     urlcolor=mygreen
}

\makeatletter
\renewcommand{\paragraph}{%
  \@startsection{paragraph}{4}%
  {\z@}{10pt}{-1em}%
  {\normalfont\normalsize\bfseries}%
}
\makeatother

\makeatletter
\patchcmd{\@algocf@start}
  {-1.5em}
  {0pt}
  {}{}
\makeatother

\title{Streaming Algorithms for Monotonicity Testing}

\date{}

\author{
Amir Azarmehr\thanks{Northeastern University, \{\texttt{azarmehr.a}, \texttt{s.behnezhad}, \texttt{ghafari.m}\}\texttt{@northeastern.edu}.  A.A., S.B., and A.G. were supported in part by NSF CAREER Award CCF-2442812, a Sloan Fellowship, and a Google Research Award.} \and
Soheil Behnezhad\footnotemark[1] \and
Lily Chung\thanks{MIT CSAIL, \{\texttt{lkdc}, \texttt{jlange}\}\texttt{@mit.edu}, \texttt{ronitt@csail.mit.edu}. L.C. and J.L. were supported by the Big George fellowship.  L.C., J.L., and R.R. were supported by NSF TRIPODS Award DMS-2022448 and NSF Award CCF-2310818.} \and 
Alma Ghafari\footnotemark[1] \and
Jane Lange\footnotemark[2] \and 
Ronitt Rubinfeld\footnotemark[2]
}

\begin{document}

\maketitle

\thispagestyle{empty}

\begin{abstract}
\setlength{\parskip}{0.2cm}
 Consider a poset---or equivalently an $n$-vertex DAG $G=(V, \vec{E})$---and a boolean function $f: V \rightarrow \{0, 1\}$ on its vertex set. We say $f$ is {\em monotone} if $f(u) \leq f(v)$ for all $(u, v) \in \vec{E}$. While there is extensive literature on the {\em query complexity} of testing monotonicity, we focus instead on the {\em space complexity} and initiate the study of this problem in the {\em streaming} setting. Namely, the edges of $G$ arrive in an arbitrary order, and the goal is to estimate distance to monotonicity of a given function $f$ using $\widetilde{O}(n)$ space. Note that while this space allows receiving and storing $f$, it is much smaller than the input graph $G$ which could have up to $\Omega(n^2)$ edges.

 Our main result is an algorithm that $(1+\epsilon)$-approximates distance to monotonicity in $\sqrt{n}^{1+o(1)}$ passes. We also prove that this is the best pass-complexity one can hope for, for any $O(1)$-approximation, short of improving the state-of-the-art streaming algorithm for $st$-reachability, which is a very well-studied problem.

 On the technical side, our algorithm approximates the size of maximum matching in (a subgraph of) the transitive closure of $G$. While the maximum matching problem has received significant attention in the streaming setting, the fact that we are computing it in the transitive closure requires very different ideas. In fact, a main contribution of our work is to connect sublinear time algorithms for estimating the maximum matching size to the streaming setting for the first time. While existing off-the-shelf sublinear time algorithms only result in an $n\sqrt{n}^{1+o(1)}$ pass algorithm in our setting, we show how to significantly improve upon them by allowing stronger queries (such as vertex and subset queries) that can be implemented just as efficiently as more standard adjacency matrix and list queries for our problem.

\end{abstract}

\clearpage
\tableofcontents
\thispagestyle{empty}
\clearpage

\setcounter{page}{1}

\section{Introduction}

\newcommand{\dmon}[1]{\ensuremath{d_{\mathsf{mon}}(#1)}}
\newcommand{\streach}[0]{\ensuremath{\mathsf{st\textsf{-}reach}_n}}

 Consider a poset---or equivalently an $n$-vertex directed acyclic graph (DAG) $G=(V, \vec{E})$. Suppose we are also given a boolean function $f: V \rightarrow \{0, 1\}$ defined on the vertex set of this graph. Function $f$ is said to be {\em monotone} if $f(u) \leq f(v)$ for all $(u, v) \in \vec{E}$. We study algorithms that efficiently determine the {\em distance to monotonicity} of $f$, denoted $\dmon{f}$, which is defined as the minimum number of entries of $f$ that have to change in order to make $f$ monotone.
 
 Monotonicity testing has been studied extensively in the literature (see \cref{sec:relatedwork}). However, nearly all these works study the {\em query complexity} of this problem. That is, they assume that the graph $G$ is entirely known in advance and can be accessed for free, and study how many queries to $f$ are needed in order to estimate its distance to monotonicity. Our focus instead is on massive inputs, where the graph $G$ cannot be stored and randomly accessed by the algorithm.
 
  More specifically, we study {\em streaming} algorithms for approximating distance  to monotonicity: The edges of $G$ arrive one by one in a stream, and the algorithm has space much smaller than this input. As standard in the literature of streaming graph algorithms, we allow $\widetilde{O}(n)$ space\footnote{Here and throughout the paper we use $\widetilde{O}(f)$ to suppress $\poly(\log n)$ factors.}, which is enough to receive and store the function $f$ entirely, but is much smaller than the graph $G$ which could include $\Omega(n^2)$ edges. 

Our main result in this paper is the following algorithm:

\begin{graytbox}
    \begin{theorem}\label{thm:main}
        For any $\epsilon > 0$, there is a randomized streaming algorithm that takes $\sqrt{n}^{1+o(1)} \cdot \poly(1/\eps)$ passes over the stream, uses $\widetilde{O}(n/\eps^2)$ space, and w.h.p. $(1+\eps)$-approximates $\dmon{f}$.
    \end{theorem}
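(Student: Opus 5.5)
The plan is to reduce the problem to estimating a maximum matching size, and then simulate a sublinear-time matching estimator over the stream.

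\textbf{Step 1: reduction to matching.} I would use the classical fact that $\dmon{f}$ equals the size of a maximum matching in the bipartite \emph{violation graph} $H$ (Fischer et al.). In $H$, one side is $A=\{u: f(u)=1\}$, the other side is $B=\{v: f(v)=0\}$, and $(u,v)$ is an edge iff $v$ is reachable from $u$ in $G$. By K\H{o}nig's theorem, a minimum vertex cover of $H$ is a minimum set of points whose values must be changed. It therefore suffices to $(1+\eps)$-approximate $\mu(H)$, where $H$ is a subgraph of the transitive closure of $G$. Note that $H$ may be dense and is never given explicitly.

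\textbf{Step 2: simulate queries by passes.} I would pick a sublinear-time matching-size estimator that achieves a $(1+\eps)$-approximation, or an additive $\eps n$ approximation combined with a multiplicative boosting step. Examples are the local simulations of randomized greedy maximal matching and augmenting-path based estimators. The key is to phrase it so that it issues $\sqrt{n}^{1+o(1)}$ \emph{rounds} of queries, where each round is a batch of up to $\widetilde O(n)$ queries that are non-adaptive within the round. One round is then implemented with $\widetilde O(1)$ passes. A natural strengthened query is the \emph{subset query}: given $u$ and a set $S$, return some (random or lowest-rank) vertex of $S$ reachable from $u$, or report none. Such queries are answerable by reachability. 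Running BFS for many sources in parallel costs one pass per layer, which is too slow when paths are long. So I would use the standard $\widetilde O(\sqrt n)$-pass, $\widetilde O(n)$-space reachability technique: sample $\sqrt n$ hub vertices, compute $\sqrt n$-hop reachability in $\sqrt n$ passes, and stitch paths through hubs. With $\widetilde O(n)$ space, this batches reachability from many sources simultaneously, for example by propagating minimum-rank labels, so that one batch costs $\sqrt{n}^{1+o(1)}$ passes. Because of this, the total pass count must come from a \emph{constant or $n^{o(1)}$ number of adaptive rounds} of batched reachability-type queries, not $\sqrt n$ rounds.

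\textbf{Step 3: a low-adaptivity estimator.} I would therefore design the estimator so that each phase consists of computing, for all vertices simultaneously, a label such as the minimum-rank reachable unmatched vertex in $B$. This is exactly what a greedy matching by random rank needs. I would run a $(1+\eps)$-approximate scheme based on short augmenting paths of length $O(1/\eps)$, organized into $n^{o(1)}\poly(1/\eps)$ phases. In each phase, all vertices in parallel compute min-label reachability restricted to the current free or matched sets; each such phase costs one ``all-sources reachability'' batch. Space stays $\widetilde O(n/\eps^2)$ because each vertex stores $O(1/\eps^2)$ labels and counters. The final estimate comes from sampling vertices and checking their matched status, which gives the high-probability guarantee.

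\textbf{Main obstacle.} The hard step is bounding the number of adaptive phases. Greedy maximal matching via random ranks has a dependency depth that can be large. A naive simulation of an off-the-shelf local algorithm needs $\Omega(n)$ queries per vertex, which gives $n\sqrt n^{1+o(1)}$ passes. I would attack this with a Luby-style/parallel greedy argument: with random ranks, the greedy process on any graph finishes in $O(\log n)$ rounds of ``pick vertices that are local minima among remaining neighbors,'' and each round is one batched min-rank reachability computation. For augmenting paths, I would use the $n^{o(1)}$-round parallel approximate-matching framework and realize each step through vertex and subset queries. Verifying that each such round truly reduces to a constant number of batched min-label reachability computations in the transitive closure, with edges restricted to currently alive vertices, is where I expect the real difficulty lies.
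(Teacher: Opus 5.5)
Your reduction in Step 1 is correct, and your batched primitive is sound. Multi-source min-label (or plain OR) propagation over $G\cup H$, where $H$ is a shortcut set of diameter $\sqrt{n}^{1+o(1)}$, is essentially how the paper answers its subset queries $N(S)$. You also rightly conclude that only $\poly(\log n,1/\eps)$ adaptive batches are affordable.

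The gap is Step 3, which carries the whole theorem and which you leave as an unverified plan. The $O(\log n)$ depth bound for randomized greedy matching holds for i.i.d.\ random \emph{edge} ranks on an explicit graph. Here the up to $n^2$ edges of $G_f$ are implicit, and a propagation that stores one label per vertex only tells each vertex its minimum neighbor under a \emph{global} vertex order. With such vertex-induced priorities the bound fails outright. If every $1$-labeled vertex reaches every $0$-labeled vertex, then $G_f$ is complete bipartite, all vertices on one side share the same minimum neighbor, and only one edge is a local minimum per round. The greedy then needs $\Theta(n)$ rounds, i.e.\ $n\cdot\sqrt{n}^{1+o(1)}$ passes. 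Moreover, a maximal matching only gives factor $2$, and your $(1+\eps)$ step rests on an unnamed $n^{o(1)}$-round parallel approximate-matching framework. Such schemes typically need per-edge messages, conflict resolution among overlapping augmenting paths, or sums over neighborhoods. An idempotent min/OR aggregation over reachability supplies none of these; even $|N(v)\cap S|$ cannot be read off it directly. So neither a bounded number of batches nor the approximation guarantee has actually been established.

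The missing idea is to avoid building a matching at all. Since $G_f$ is bipartite, $\mu(G_f)$ equals the minimum fractional vertex cover $\tau_f(G_f)$. The paper $(1+\eps)$-approximates $\tau_f$ with the MWU framework of \cite{Liu24}: for $O(\eps^{-1}\log n)$ guesses $c$ it runs $T=O(\eps^{-2}\log n)$ steps. Each oracle step needs only a $(1\pm\eps)$-estimate, for all $v$ simultaneously, of $s_v=e^{-\lambda y_v}\sum_{u\in N(v)}e^{-\lambda y_u}$. Bucketing vertices by $\lambda y_u$ into $O(T)$ classes $V_i$ reduces this to estimating $|N(v)\cap V_i|$ for all $v$. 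That counting problem is solved with OR-type queries by geometric sampling: put each vertex of $V_i$ into $S$ with probability $p=(1+\eps)^{-j}$, query $N(S)$, and use $\Pr[v\notin N(S)]=(1-p)^{|N(v)\cap V_i|}$. With $O(\eps^{-3}\log^2 n)$ queries this estimates all these degrees at once. Each $N(S)$ is one forward BFS from $S\cap f^{-1}(1)$ and one backward BFS from $S\cap f^{-1}(0)$ in $G\cup H$. Hence the $\poly(\log n,1/\eps)$ queries cost $\sqrt{n}^{1+o(1)}\poly(1/\eps)$ passes within the stated space.

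Two smaller points. First, the hub-sampling scheme you describe does not give $\widetilde O(\sqrt n)$ passes in $\widetilde O(n)$ space. Hub-to-hub reachability needs $\widetilde\Theta(\sqrt n)$ BFS's with distinct sources, costing $\widetilde O(n^{3/2})$ bits in parallel or $\widetilde O(n)$ passes sequentially; this is why the shortcut set of \cite{JLS19} is needed. Second, estimating by sampling vertices and checking their matched status gives only additive $\eps n$ error, not the multiplicative $(1+\eps)$ guarantee; a globally computed matching could simply be counted.
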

\end{graytbox}

At first glance, the $\sqrt{n}^{1+o(1)}$ pass-complexity might appear prohibitive. However, we prove that this is the best bound one can hope for short of a breakthrough in streaming algorithms. 
Specifically, we show in \cref{thm:lb} that any $p$-pass $O(1)$-approximation of \dmon{f} also solves $st$-reachability in $p$ passes and $\widetilde{O}(n)$ space.
Since the fastest known streaming $st$-reachability algorithm uses $\sqrt{n}^{1+o(1)}$ passes \cite{JLS19, AJJST22}, this shows that any faster algorithm for approximately determining the distance to monotonicity would also improve the pass-complexity of streaming $st$-reachability.

In fact, our algorithm uses the streaming version of \cite{JLS19} directly to compute a \emph{shortcut set}, a set of edges which can be added to the graph to reduce its diameter without changing the reachability relation on vertices.
Our algorithm's pass complexity is (ignoring $\poly(\log n)$ factors) equivalent to the diameter of this shortcut set.
Therefore, as long as the fastest known streaming $st$-reachability algorithms continue to work by computing a shortcut set, the pass-complexity of \cref{thm:main} will match those algorithms.





\subsection{Main Technique: Query Complexity of Maximum Matching Size}

The key contribution of our work is to tie the streaming complexity of monotonicity testing to {\em sublinear time} algorithms for estimating the size of maximum matching. The latter problem has been studied extensively on its own over the last two decades, and has a rich literature with numerous applications (see \cref{sec:relatedwork}). However, to our knowledge this is the first application of this problem in the streaming setting. Additionally, as we soon discuss, our work motivates the study of the query complexity of the maximum matching problem in alternative models, where the queries differ from the standard adjacency-list or adjacency-matrix queries. 

\paragraph{Background: Monotonicity and matchings.} The connection between the 
maximum matching problem and monotonicity of boolean functions dates back  to the early monotonicity testing
work of \cite{DodisGLRRS99}. 
Let $G_f$ be the {\em violation graph} defined on the same vertex set $V$, which has an edge $\{u, v\}$ whenever $u$ can reach $v$ but $f(u) > f(v)$. Denoting the size of a maximum matching of $G_f$ by $\mu(G_f)$, it is well-known that
$$
    \dmon{f} = \mu(G_f).
$$
Therefore, all we need to do is to estimate the size of maximum matching in $G_f$. While the maximum matching problem is extremely well-studied in the streaming setting, the main difference here is that we only receive the graph $G$ and function $f$ in the stream which define $G_f$ implicitly. Indeed, even for checking the existence of a single edge $(u, v)$ in $G_f$ one needs to solve the $st$-reachability problem in $G$, which itself takes $\sqrt{n}^{1+o(1)}$ passes.

\paragraph{Query complexity of maximum matching size.} Since direct access to the graph $G_f$ is expensive in our setting, we would like to estimate its maximum matching size without {\em querying} too many edges of $G_f$. Luckily, this is also a very well-studied problem. For example, an algorithm of \cite{Behnezhad21} obtains a $(2, \epsilon n)$-approximation of the maximum matching size by making $\widetilde{O}(n)$ adjacency-matrix queries to the graph. That is, each such query specifies two vertices and the response is whether they are adjacent. Each such query to $G_f$ can easily be answered in our setting by two calls to an $st$-reachability algorithm of \cite{JLS19}, hence the final algorithm will take
$$
    \underbrace{\sqrt{n}^{1+o(1)}}_{\text{cost of a pair query}} \cdot \underbrace{\widetilde{O}(n)}_{\text{number of pair queries}} = n^{3/2+o(1)}
$$
passes in total. Unfortunately, this significantly exceeds our desired $\sqrt{n}^{1+o(1)}$ pass-complexity. But it is known (see \cite{Behnezhad21,ParnasR07}) that $\Omega(n)$ adjacency-matrix queries are indeed needed to obtain any constant approximation of maximum matching size, so this black-box reduction cannot possibly result in our desired bound.\footnote{This lower bound holds even in the more powerful access model where adjacency-matrix, adjacency-list index ($i$-th neighbor of $v$), and degree queries can be mixed adaptively.}


Our general idea for breaking through the $n\sqrt{n}$ barrier discussed above is to allow the matching size estimator to make {\em stronger} queries to the graph that can be implemented just as efficiently as adjacency-matrix queries in our model.

\paragraph{The vertex-query model.} Our first observation is that the algorithm of \cite{JLS19} is actually not specific to $st$-reachability, but rather solves the more general single source reachability problem in $\sqrt{n}^{1+o(1)}$ passes. 
For us, this implies that for a vertex $v$, we can gather the entire neighbor set $N_{G_f}(v)$ of $v$  in just $\sqrt{n}^{1+o(1)}$ passes. 
This motivates the study of the query complexity of maximum matching in the {\em vertex query}\footnote{Some 
works in the literature of sublinear time algorithm also refer to such queries as {\em all-neighbor queries}.} model:

\begin{definition}[Vertex queries]
\label{def:vertex-query}
In the {\em vertex-query access model}, the algorithm queries a vertex $v \in V$, and receives its neighborhood $N(v)$.
\end{definition}

Indeed, we show that this stronger query access model allows us to break the $\Omega(n)$ query lower bound for estimating the maximum matching size in the more standard 
query models. Namely, we prove that:

\begin{graytbox}
\begin{theorem}
\label{thm:vertex-query}
    Given an $n$-vertex graph $G$, there is a randomized algorithm that $(3, \eps n)$-approximates the size of maximum matching of $G$ for any $\epsilon > 0$ using $\sqrt{n} \cdot \poly(\log n, 1/\epsilon)$ vertex queries to $G$ w.h.p.
\end{theorem}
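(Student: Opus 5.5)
We need a $(3,\eps n)$-approximation of $\mu(G)$ using $\sqrt{n}\cdot\poly(\log n,1/\eps)$ vertex queries. The plan is to estimate the size of a \emph{greedy maximal matching} $M$ obtained from a random vertex ordering. A maximal matching is a $2$-approximation, so even a moderately lossy estimate of $|M|$ gives a factor-$3$ approximation. The difficulty is that the usual local oracles (e.g.\ \cite{Behnezhad21}'s random greedy matching oracle) cost about $n$ adjacency queries in total, whereas a vertex query hands us a whole neighborhood at once. We therefore split the vertices by degree.

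First pick a threshold $\Delta = \sqrt{n}$. For a vertex $v$, the set $N(v)$ can be learned with one query, but we may not know $\deg(v)$ in advance, so we classify vertices by whether $|N(v)|\ge \Delta$. We then run a two-phase procedure.

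\emph{Phase 1 (high degree).} Sample a random set $S$ of about $(n/\Delta)\log n=\sqrt{n}\log n$ vertices and query all of them. With high probability, every vertex of degree at least $\Delta$ has a neighbor in $S$. Greedily match the sampled vertices, using their known neighborhoods, to build a matching $M_1$. Let $U$ be the set of vertices still unmatched after this phase. The intended property is that the residual graph induced on $U$ has maximum degree $O(\Delta\log n)$ w.h.p. This mirrors the standard ``sample-and-match reduces degree'' argument: if a vertex kept high residual degree, some sampled neighbor would, w.h.p., have been matched to it or to another residual vertex. To make the argument clean, we process $S$ in a random order, iterate $O(\log n)$ rounds of fresh samples, and track unmatched vertices.

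\emph{Phase 2 (low degree).} On the residual graph, estimate the size of a random-order greedy maximal matching $M_2$ via the local oracle of Yoshida et al./\cite{Behnezhad21}. Its expected exploration cost from a random vertex is $O(\bar d)$ edge look-ups, where $\bar d$ is the average degree. With vertex queries, each exploration step reveals a whole neighborhood, so the cost per sampled vertex in vertex queries is bounded by the number of vertices visited. We need $\poly(1/\eps)$ sampled vertices for an $\eps n$ additive error. Because the vertex-query cost is the number of visited vertices rather than edges, we hope Behnezhad's analysis yields $O(\Delta\log n)$ expected visited vertices when the maximum degree is $\Delta$. The main obstacle is checking membership in $U$: the residual graph is implicit, so during oracle exploration we must be able to tell whether a vertex was matched in Phase 1. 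Since $M_1$ is stored explicitly, this is free. The total number of queries is then $\sqrt{n}\cdot\poly(\log n,1/\eps)$.

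Finally, $M_1\cup M_2$ is a maximal matching of $G$ up to the $\eps n$ loss, and output $2(|M_1|+\widehat{|M_2|})$ scaled appropriately. The ratio $3$ rather than $2$ absorbs the slack from Phase 1's possibly non-maximal treatment of high-degree vertices: each high-degree vertex that remains unmatched is charged to a Phase-1 edge. We expect the Phase-1 degree-reduction claim, and bounding visited vertices in the greedy oracle, to be the hardest step. If the degree reduction fails, the fallback is to directly charge optimum edges at unmatched high-degree vertices to $M_1$, losing one extra factor, which again gives $3$.
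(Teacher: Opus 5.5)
There is a genuine gap: your Phase~1 degree-reduction claim is false, and the rest of the argument depends on it. After $q$ vertex queries, every edge you know is incident to a queried vertex. So the queried vertices cover the known edges, and any matching you build explicitly has at most $q$ edges. That includes $M_1$, even over $O(\log n)$ rounds, so $|M_1|=\widetilde O(\sqrt n)$. Take $G=K_{n/2,n/2}$. Phase~1 removes only $\widetilde O(\sqrt n)$ vertices, and $G[U]$ is still complete bipartite with degree $n/2-\widetilde O(\sqrt n)$. Your heuristic only shows that the residual neighbors of a residual vertex avoid $S$; since $U$ depends on $S$, no union bound turns this into a degree bound, and the example shows none holds. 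Phase~2 then runs the greedy oracle on a graph of average degree $\Theta(n)$, where its $O(\bar d\log n)$ guarantee gives only $O(n\log n)$ vertex queries per sample. The fallback fails for the same reason. Charging the $n/2$ optimal edges at unmatched high-degree vertices to the $\widetilde O(\sqrt n)$ edges of $M_1$ loses a factor $\widetilde\Omega(\sqrt n)$, not a constant. Your explanation of the factor $3$ is also unsupported: if Phase~1 did reduce degrees, $M_1\cup M_2$ would be maximal and you would get $2$.

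The missing idea is to discard high-degree vertices without matching them integrally, certifying their contribution only through degree estimates. The paper does this by fractional peeling.
\begin{itemize}
\item \textsc{ComputeDegrees}$_\theta$ queries $\widetilde O(\eps^{-2}|U|/\theta)$ random vertices of $U$ and counts how often each vertex appears in the returned neighborhoods. This gives $(1\pm\eps)$-estimates of $|N(v)\cap U|$ for every $v$ with at least $\theta=\eps^2\sqrt n/(2\ln n)$ neighbors in $U$.
\item Every edge incident to the high-degree set $H$ implicitly receives weight $\eps/(2\dmax)$. Thus $H$ is nearly saturated and deleted within $O(1/\eps)$ rounds, while the weights remain a valid fractional matching.
\item Lowering $\dmax$ geometrically down to $\sqrt n$ leaves maximum degree at most $\sqrt n$.
\item There, Behnezhad's greedy oracle on the $1/\eps$-fold blow-up (a maximal fractional matching) costs $\sqrt n\cdot\poly(\log n,1/\eps)$ vertex queries per sampled vertex. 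This part matches your Phase~2.
\end{itemize}
Every edge then has $x_u+x_v\ge 1-O(\eps)$, so $\sum_v x_v\ge(1-O(\eps))\mu(G)$. The factor $3$ comes from the $3/2$ integrality gap of the resulting fractional matching, not from slack in Phase~1.
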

\end{graytbox}

Plugging \cref{thm:vertex-query} in the final algorithm instead of the pair query algorithm of \cite{Behnezhad21} improves the number of passes to
$$
    \underbrace{\sqrt{n}^{1+o(1)}}_{\text{cost of a vertex query}} \cdot \underbrace{\widetilde{O}(\sqrt{n})}_{\text{number of vertex queries}} = n^{1+o(1)},
$$
which is much fewer than $n^{3/2+o(1)}$ but still far from our desired $\sqrt{n}^{1+o(1)}$ bound. 

\paragraph{The subset-query model.} Our final observation is that the algorithm of \cite{JLS19} can in fact be used to solve a {\em subset query} in just $\sqrt{n}^{1+o(1)}$ passes. Such queries are more powerful than vertex queries, and are defined as follows:

\begin{definition}[\textbf{Subset Queries}]
\label{def:subset-query}
In the {\em subset-query access model}, the algorithm queries a subset $S \subseteq V$, and receives its neighborhood
$
N(S)=\{u \in V : \exists v \in S \text{ such that } \{u,v\} \in E\}.
$
\end{definition}

For us, it is crucial that each such query only returns the set $N(S)$ and {\em not} which vertices of $S$ each vertex in $N(S)$ is adjacent to, as otherwise the algorithm of \cite{JLS19} would not be applicable.

We show that subset queries allow for a more dramatic improvement in the query-complexity: only polylogarithmically many subset queries are required to estimate the matching size. The following theorem is our main technical contribution in this work:

\begin{graytbox}
\begin{theorem}
\label{thm:mwu-subset-query-m}
    Given an $n$-vertex graph $G$, for any $\epsilon > 0$, there is a randomized algorithm that $(1.5 + \epsilon)$-approximates the size of maximum matching in $G$ using $\poly(\log n, 1/\epsilon)$ subset queries to $G$ w.h.p.
    If the graph is bipartite, then the approximation ratio improves to $1 + \epsilon$.
\end{theorem}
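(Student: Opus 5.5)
The plan is to simulate a multiplicative-weights (MWU) solver for the fractional matching LP using only subset queries, then round the fractional value.

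\textbf{The LP.} I work with the fractional matching LP: maximize $\sum_e x_e$ subject to $\sum_{e \ni v} x_e \le 1$ for every $v$ and $x \ge 0$. For general graphs I add odd-set constraints only implicitly. I then rely on two known facts:
\begin{itemize}
\item in bipartite graphs the LP optimum equals $\mu(G)$;
\item in general graphs the LP optimum is at most $1.5\,\mu(G)$.
\end{itemize}
This accounts for both approximation ratios in \cref{thm:mwu-subset-query-m}. So it suffices to $(1+\epsilon)$-approximate the fractional matching value, or a near-optimal dual (fractional vertex cover), with polylogarithmically many subset queries.

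\textbf{The MWU iteration.} The dual is to minimize $\sum_v y_v$ subject to $y_u + y_v \ge 1$ for all edges $\{u,v\}$. I would run a standard MWU/packing-covering scheme for $T = O(\log n/\epsilon^2)$ rounds. It maintains weights $w_v$ on vertices, where $w_v$ is the penalty on vertex $v$'s capacity constraint. Each round needs an oracle that finds an edge, or a fractional matching direction, whose endpoints have small total weight relative to the current budget.

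The key step is to implement this oracle with subset queries. First, round the weights to powers of $(1+\epsilon)$, which yields $O(\log n/\epsilon)$ weight classes $V_1,\dots,V_k$. For each pair of classes $(i,j)$ whose weight sum is below the threshold, I want a large matching, or at least a maximal-ish fractional matching, between $V_i$ and $V_j$. One subset query on $S = V_i$ returns $N(V_i)$, so intersecting with $V_j$ reveals which vertices of $V_j$ have neighbours in $V_i$. This identifies the vertices that would receive flow, but not the partner of each.

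The saving observation is that MWU only needs the induced load on each vertex, not the explicit edges. So I would replace "find an edge" with a degree-based fractional update. Vertices of $V_j$ with a neighbour in $V_i$ increase their primal load. Symmetrically, querying $V_j$ updates $V_i$. Thresholding by the current load then keeps the update a feasible, scaled fractional matching. I would try to emulate the bipartite-free "Luby-style" or proportional-allocation rounds for fractional matchings (as in Ghaffari-style vertex-weighted LP algorithms), where each step is determined only by whether a vertex has an active neighbour in an active set.

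\textbf{Bounding the cost and the main obstacle.} Each round issues $O(\log^2 n/\epsilon^2)$ subset queries over pairs of classes. With $T$ rounds the total is $\mathrm{poly}(\log n,1/\epsilon)$. To get a scalar estimate, I take the final dual $\sum_v y_v$ (computable locally from the weights) together with a primal certificate; their ratio certifies $(1+\epsilon)$-optimality of the LP.

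The main obstacle is showing that neighbourhood-of-a-set information alone suffices to make MWU progress with only polylog width. A vertex that learns only that it has some neighbour in $S$ cannot tell how much fractional mass it can actually route. I would handle this by sub-sampling: random subsets of $V_i$ at geometric rates $2^{-\ell}$. From the probability that $v \in N(S)$ one estimates $v$'s degree into $V_i$ up to constant factors, which gives proportional-allocation loads. Choosing these rates and proving that the constant-factor degree estimates still give a $(1+\epsilon)$ LP guarantee (via $\epsilon$-smoothed MWU step sizes) is where I expect most of the work to lie.
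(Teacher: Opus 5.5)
\textbf{There is a genuine gap: the MWU oracle, which is the heart of the proof, is never constructed.} Your reduction is fine and matches the paper. It suffices to $(1+\epsilon)$-approximate $\nu_f(G)=\tau_f(G)$ and then use $\nu_f=\mu$ for bipartite graphs and $\nu_f\le\tfrac32\mu$ in general. Geometric subsampling with membership tests in $N(S)$ is also the right way to estimate degrees. The gap sits exactly at the step you defer as the ``main obstacle'', and the oracle you sketch runs into the very issue you name. You put the multiplicative weights on the vertex capacity constraints of the packing LP, so each round must output edges or a fractional matching direction. That means deciding how much mass to route on which edges. Routing uniformly on all edges between cheap classes $V_i,V_j$ loads each vertex in proportion to its degree into the other class. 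The width is then governed by the degree spread rather than being polylogarithmic, and controlling it needs the partner information that $N(S)$ hides. Your suggested repair does not close this. Constant-factor degree estimates give only a constant-factor oracle, hence only a constant-factor approximation of the LP, and smaller step sizes cannot remove a multiplicative error in the oracle's guarantee. Repeating each geometric rate $(1+\epsilon)^{-i}$ about $\epsilon^{-2}\log n$ times gives $(1\pm\epsilon)$-accurate estimates for all vertices at once, which is what is needed.

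\textbf{The missing idea is to run the MWU on the covering side instead, as the paper does following Liu.} Work over fractional vertex covers $y$ with the edge potential $\Phi(y)=\sum_{\{u,v\}\in E}e^{-\lambda(y_u+y_v)}$. For each guess $c$ of $1/\tau_f(G)$, the oracle only has to output a distribution $\Delta$ over \emph{vertices} with $\sum_v\Delta_v s_v\ge(1-\epsilon)c\,\Phi(y)$ and $\Delta_v\le\epsilon/(2\lambda)$, where $s_v=e^{-\lambda y_v}\sum_{u\in N(v)}e^{-\lambda y_u}$. No edge is ever routed; $s_v$ is just a weighted degree.
\begin{itemize}
\item Bucket the vertices into classes $V_i$ by $\lambda y_u$ in steps of $\epsilon/10$. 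One degree-estimation call per class gives $|N(v)\cap V_i|$ to within $1\pm\epsilon$ for all $v$ simultaneously, and hence estimates $\tilde s_v$.
\item Take $\Delta$ uniform on the largest prefix of vertices, sorted by $\tilde s_v$, whose average is at least $(1-\epsilon)c\,\Phi(y)$.
\item The width condition follows from showing this prefix has size at least $k=\lfloor\tau_f(G)\rfloor$. Average against an optimal fractional cover $x$ with $x_v\le 1$. Since $x$ covers every edge, $\sum_v x_v\tilde s_v\ge(1-\epsilon/2)\Phi(y)$. On the other hand, $\sum_v x_v\tilde s_v\le\tfrac{\tau_f}{k}\sum_{j\le k}\tilde s_{v_j}$ over the top $k$ values.
\end{itemize}
With $O(\epsilon^{-1}\log n)$ guesses, $O(\epsilon^{-2}\log n)$ rounds and $O(\epsilon^{-2}\log n)$ classes per round, this uses $\poly(\log n,1/\epsilon)$ subset queries. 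Without this switch, or some other concrete oracle with proven polylogarithmic width and $(1+\epsilon)$ accuracy, your argument does not establish the theorem.
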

\end{graytbox}

Now when we plug \cref{thm:mwu-subset-query-m} in the final algorithm, the number of passes will be
$$
    \underbrace{\sqrt{n}^{1+o(1)}}_{\text{cost of a subset query}} \cdot \underbrace{\poly(\log n)}_{\text{number of subset queries}} = \sqrt{n}^{1+o(1)},
$$
which immediately implies \cref{thm:main}. We note that we apply \cref{thm:mwu-subset-query-m} in a white-box manner in our final algorithm for \cref{thm:main}.

We also give a self contained proof for $(2+\epsilon)$-approximation of maximum matching.

\begin{theorem}
\label{thm:subset-query}
    Given an $n$-vertex graph $G$, for any $\epsilon > 0$, there is a randomized algorithm that $(2+\epsilon)$-approximates the size of maximum matching in $G$ using $\poly(\log n) \cdot (1/\epsilon)^{O(1/\eps)}$ subset queries to $G$ w.h.p.
\end{theorem}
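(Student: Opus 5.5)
We plan to estimate the size of a maximal matching, which is a $2$-approximation of $\mu(G)$, by simulating a \emph{randomized greedy maximal matching} (RGMM) in parallel rounds. The $(1/\eps)^{O(1/\eps)}$ dependence suggests a different route, though: we will compute a maximal matching that is only \emph{approximately} maximal, with $O(1/\eps)$ levels of some parallel procedure. Each level will cost $\poly(\log n)\cdot(1/\eps)^{O(1)}$ subset queries, with branching of $1/\eps$ per level. The basic primitive is the following. Given a set $S$, one subset query returns $N(S)$. For two disjoint sets $A,B$, we can test whether there is any edge between them by checking $N(A)\cap B\neq\emptyset$. Binary search via $O(\log n)$ further subset queries on halves of $A$ then locates a single vertex $a\in A$ with a neighbor in $B$, and then some neighbor $b\in N(\{a\})\cap B$. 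So one edge between two sets costs $O(\log n)$ queries.

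The algorithm will be a Luby/Israeli--Itai style parallel maximal matching in which the whole graph is never stored, only the current matched set $M$. We maintain the set of free vertices $F=V\setminus V(M)$. In each phase we partition $F$ at random into $A\cup B$. We then find many $A$--$B$ edges simultaneously by a "proposal" step. First we query $N(A)$ restricted to $B$, which gives every $b\in B$ that has a free neighbor in $A$ (a single query). Next we randomly subsample $A$ into $\poly(\log n)$ groups at geometrically decreasing rates $2^{-i}$; for each $b$ we record the smallest rate at which $b$ still has a neighbor in the sample, which estimates $\deg_A(b)$. Finally, using random hashing of $A$ into buckets and querying $N(\text{bucket})$, each $b$ picks a bucket in which it has a neighbor, and each bucket with a unique proposer matches to it. The one piece of information we cannot obtain is which $a$ inside the bucket is the neighbor. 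We handle this by choosing bucket sizes so that, with constant probability, a bucket that $b$ sees contains exactly one neighbor of $b$ among the low-degree part. Then $\{a\}$ is identified by a further binary search, and since all $b$'s searches can be run in parallel on the same bucket splits, the cost stays $O(\log n)$ queries per round rather than per edge.

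For the analysis we will show the standard Luby-type progress claim: in each round, a constant fraction of the edges incident to "high-degree-in-$F$" vertices are removed, or the number of free vertices with free neighbors drops by a constant factor. Hence after $O(\log n)$ rounds the residual graph on $F$ is empty. To get the $2+\eps$ ratio with the stated complexity, we may instead stop after the point where the residual matching is at most $\eps\mu(G)$, and account for the degree-estimation errors by a charging argument that loses a $(1/\eps)^{O(1/\eps)}$ factor in the number of sample scales. An alternative plan, if the proposal step is too lossy, is the following. Run a greedy procedure restricted to degree classes: vertices are processed by degree thresholds $(1/\eps)^j$ for $j=0,\dots,O(1/\eps)$, and each class is matched with subset queries on random subsamples of density $\eps^{j}$. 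This again yields $(1/\eps)^{O(1/\eps)}$ queries and leaves only an $\eps$-fraction of unmatched, still-adjacent vertices. The final estimate is $|M|$, which satisfies $\mu(G)/(2+\eps)\le |M|\le \mu(G)$, and w.h.p.\ bounds follow from a union bound over the polynomially many random events.

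The main obstacle is the lack of edge identity in subset answers. Many vertices $b$ learn only that they have \emph{some} neighbor in a set, and resolving this in parallel without paying a per-vertex cost is exactly what forces the bucket/degree-scale design. Proving that a constant fraction of the remaining edges is removed per round under this ambiguity is the delicate step. I would first try to argue it through the degree-class structure, showing that within a scale the unique-neighbor-in-bucket event has probability $\Omega(\eps)$.
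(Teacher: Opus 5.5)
There is a genuine gap, and it is structural rather than a matter of tightening your Luby-type analysis. Your plan builds an explicit matching $M$ and returns $|M|$. No algorithm can explicitly produce $\Omega(n)$ verified matching edges with $o(n)$ subset queries.

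To see this, take $G=K_{k,k}$ with sides $A,B$. For any query $S$ and any $b\in B$, we have $b\in N(S)$ iff $S\cap A\neq\emptyset$, so the answer is the same no matter which vertices of $A$ are in $S$. Deleting a single edge $\{a,b\}$ changes the answer to $S$ only if $S\cap A=\{a\}$ or $S\cap B=\{b\}$. A run with $q$ queries therefore ``isolates'' at most $2q$ vertices. If the run outputs more than $2q$ disjoint edges, some output edge $e$ has no isolated endpoint, and the identical run on $K_{k,k}-e$ outputs the non-edge $e$. Averaging over the $k^2$ choices of $e$, some $K_{k,k}-e$ is answered incorrectly with probability $\Omega(1/k^2)$, which contradicts success w.h.p. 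Hence $|M|\ge \mu(G)/(2+\eps)$ forces $q=\Omega(n)$.

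Your own mechanism shows the same ceiling. Each bucket yields at most one matched edge and costs one query. The parallel bit-splits, run on a shared sample, send every vertex with the same neighborhood to the same $a$. So the claimed constant-fraction progress per round fails: in $K_{k,k}$ a round matches only about as many edges as it spends queries. Your fallback degree-class greedy plan also builds $M$ explicitly and hits the same obstruction. The stated source of the $(1/\eps)^{O(1/\eps)}$ factor (``sample scales'' in a charging argument) is not tied to any concrete step.

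The missing idea is to never construct the matching and to estimate its size implicitly.
\begin{itemize}
\item \textbf{High-degree phase.} The paper runs a fractional peeling process with a geometrically decreasing threshold $\Delta_{\max}$. The near-maximum-degree active vertices $H$ add weight $\eps/(2\Delta_{\max})$ to each incident active edge. Only the vertex saturations $x_v$ are stored, and these depend only on $|N(v)\cap U|$ and $|N(v)\cap H|$. Geometric sampling estimates these within $(1\pm\eps)$ for all $v$ simultaneously using $O(\log^2 n/\eps^3)$ subset queries. The only per-vertex information used is whether $v\notin N(S)$, which happens with probability $(1-p)^{d_v}$, so no edge identity is ever needed.
\item \textbf{Low-degree phase.} Once the maximum degree drops below $\eta=2/\eps^2$, the paper simulates $1/\eps-1$ rounds of randomized greedy maximal matching locally (a nested Yoshida--Yamamoto--Ito recursion). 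This is done only for $O(\log n/\eps^2)$ sampled vertices, at $\Delta^{O(1/\eps)}=(1/\eps)^{O(1/\eps)}$ queries each. That is where the $(1/\eps)^{O(1/\eps)}$ factor actually comes from.
\item \textbf{Lower bound on the estimate.} The ratio $2+\eps$ follows from $x_u+x_v\ge 1-2\eps$ on every edge, via vertex-cover duality.
\item \textbf{Upper bound on the estimate.} The estimate never exceeds $\mu(G)$ by the small-set blossom inequalities. High-degree edges carry weight at most $\eps^2$, so the implicit fractional matching loses only a factor $1-\eps$ to integrality.
\end{itemize}
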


\paragraph{Approximating maximum matching size using vertex queries.}

We describe the high-level approach to proving \cref{thm:subset-query}, which shows how to approximate the size of the maximum matching in $G$ using subset queries.

Our starting point is a ``peeling''-type algorithm \cite{ParnasR07, OnakR10, KKS14,CALM18, KMN20} which iteratively matches and removes highest degree vertices. While we can't implement this algorithm in our setting, it is useful to first describe it and discuss what goes wrong in implementing it. In an idealized peeling algorithm, we divide the vertices into two halves randomly. Then each vertex from the first half ``proposes'' to a single neighbor chosen uniformly at random and independently. Then every vertex in the second half that receives at least one proposal ``accepts'' one of those arbitrarily and the two vertices get matched together. Now fix a vertex $v$ and suppose that $\deg(v) \geq \Delta/2$ where $\Delta$ is the current maximum degree of the graph. Under the event that $v$ is in the second half, it receives a proposal with probability at least $1-(1-1/2\Delta)^{\Delta/2} \geq 0.1$. 
This means that after this process, a constant fraction of vertices of degree at least $\Delta/2$ get removed from the graph. We can therefore afford to 
remove all vertices of degree at least $\Delta/2$ from the graph regardless of whether they are matched, 
since a constant fraction of them have been matched in expectation. This reduces the maximum degree from $\Delta$ down to $\Delta/2$, so $O(\log \Delta)$ iterations of this process results in a constant approximate maximum matching in expectation.

As discussed, we cannot actually carry out this process in our setting. The main reason is that each vertex needs to propose to a randomly selected neighbor. Implementing this step requires $\Theta(n)$ neighbor queries, and subset queries are not helpful in reducing the query complexity to $\poly\log(n)$. 
To get around this problem, we implement a different variant of the peeling algorithm that does not explicitly build a matching, but only provides an estimate of the matching size.

Our first ingredient is a degree estimation procedure in the subset query model, which we call a \emph{subset degree estimator}.
The subset degree estimator has the following functionality:
Given a set of active vertices $U \subseteq V$,  for every vertex $v \in V$ the procedure returns a $(1+\epsilon)$-\emph{multiplicative} approximation to $d_v := |N(v) \cap U|$. The procedure uses $O\left( \frac{\log^2 n}{\eps^3} \right)$ subset queries and succeeds with probability $1-1/\poly(n)$.
Its underlying technique is a sampling argument that uses geometrically decreasing sampling probabilities.

To see how the estimator works, fix a vertex $v$ and let $d_v = |N(v) \cap U|$, noting that the value of $d$ is unknown to the algorithm. For a sampling probability $p$, include every vertex of $U$ independently with probability $p$ to set $S$ and query $N(S)$ using a subset query.
The event that $v \notin N(S)$ happens when none of the $d$ neighbors of $v$ is sampled, which happens with probability  $(1-p)^{d_v}$.
When $p \approx 1/2d_v$, this probability is bounded away from 0 and 1 by a constant, hence by enough repetitions and a Chernoff bound, one could estimate the value of $(1-p)^{d_v}$, and thus $d_v$, multiplicatively.
By trying out different values of $p$ between $1/\poly(n)$ and 1 geometrically, we ensure that for each vertex $v$, one of these trials achieves $p \approx 1/2d_v$.  The advantage of this approach is that a single set of subset queries suffices to estimate $|N(v) \cap U|$ for all vertices $v$ simultaneously.

We next modify the standard peeling algorithm so that it constructs an implicit fractional matching rather than an integral matching. Suppose the current degree threshold is $\Delta_{\max}$, and let $U$ be the set of vertices that have not yet been removed. Let $H \subseteq U$  contain the vertices whose degree in the induced graph of $U$ is approximately $\Delta_{\max}$. For every $u \in H$, we add a small amount of fractional weight, on the order of $\epsilon/\Delta_{\max}$, to each edge from $u$ to another vertex in $U$. Therefore, the total weight of a vertex (that is the sum of weights on its edges) $u \in H$ increases proportionally to its degree in $G[U]$,  whereas the total weight of a vertex $v\in U\setminus H$ increases proportionally to the number of its neighbors in $H$. Vertices are removed from $U$ once their saturation becomes close to $1$. Since every vertex in $H$ has degree comparable to $\Delta_{\max}$, its total weight increases by $\Omega(\epsilon)$ in every iteration, and hence all vertices in $H$ are removed after $O(1/\epsilon)$ iterations. We can then decrease $\Delta_{\max}$ and repeat the process on the remaining set $U$.

The subset degree estimator allows us to implement this fractional peeling process without storing all the edges of the graph. At the beginning of an iteration, we estimate $d_U(v)=|N(v)\cap U|$ for all vertices  to identify the high degree set $H$. We then run the estimator again with $H$ as the input, obtaining estimates of
$
d_H(v)=|N(v)\cap H|
$
for all vertices. The estimates $d_U(v)$ and $d_H(v)$ determine the increase in each vertex saturation, so the algorithm only needs to maintain the active set and one saturation value per vertex.

Running this fractional process until every vertex is removed would naturally approximate the maximum fractional matching value. Using only the general $3/2$ integrality gap bound for fractional matchings gives a $(3+\epsilon)$-approximation to the maximum matching. To obtain a $(2+\epsilon)$-approximation, we use a separate low degree phase. The low degree phase entails additional technicality that is described in \cref{sec:low-deg-implementation}.

Finally, using our subset degree estimator within the MWU framework of \cite{Liu24}, we obtain a $(1+\epsilon)$-approximation to the minimum fractional vertex cover using  $\poly(\log n,1/\epsilon)$ subset queries. This gives a $(1.5+\epsilon)$-approximation to maximum matching in general graphs, and a $(1+\epsilon)$-approximation for bipartite graphs. This is proved in \cref{sec:mwu}.



\subsection{Future Directions and Open Problems} 

Although adjacency-matrix and adjacency-list queries are the standard models in the literature on sublinear-time and query algorithms, our work highlights the importance of studying two stronger models: vertex queries and subset queries. For maximum matching, we establish strong separations between these models and the traditional query models, for which $\Omega(n)$ queries are necessary. An interesting open question is whether similar separations arise for other natural problems.


\subsection{Further Related Work}\label{sec:relatedwork}

\paragraph{Monotonicity testing.} The query complexity of testing whether a function is 
monotone, namely distinguishing monotone functions from those
that are $\epsilon$-far from monotone (in the Hamming distance), 
is one of the earliest property testing problems to be considered \cite{ErgunKKRV00,GoldreichGLRS00}.  
The tolerant version of the problem, 
distinguishing functions that are $\eps$-close to monotone, 
from those that are $\eps'$-far from monotone is 
considered in the work that  introduces tolerant testing \cite{ParnasRR06} 
(see also \cite{AilonCCL07}),    where it was demonstrated  that 
tolerant testing  and the estimating the distance to having the 
property are essentially equivalent tasks. 
Since then, a large body of work  has considered algorithms
and lower bounds for testing  and tolerant testing of
monotonicity over various domains and distance measures
(examples include \cite{DodisGLRRS99,FischerLNRRS02,
Fischer04,
AilonCCL08,
SaksS10,
BhattacharyyaGJJRW10,
BrietCGM12,
BermanRY14,
ChakrabartyS13a,
ChakrabartyS14,
ChenST14,
ChenDST15,
ChakrabartyS16,
ChenWX17,
KhotMS18,
BlackCS18,
ChakrabartyS19,
BelovsB15,
BelovsB21,
LangeRV22,
BlackCS20,
PallavoorRW22,
BlackC023,
BravermanKKM23,
Pinto23,
Pinto24,
ChenDLNS24,
BlackKR24,
Chakrabarty025,
ChenDHLNSY25,
FeiP25,Yoshida26}).

\paragraph{Property testing vs. streaming algorithms.}
A connection between
property testing and streaming has been explored in \cite{MonemizadehMPS17,PengS18,CzumajF0S20}
where it is shown how to transform constant query property testers
to the random-order single-pass streaming model with constant space.
Note that the aforementioned lower bounds demonstrate that in most settings
that have been considered, monotonicity testing 
requires nonconstant dependence on the domain size of the function.

\paragraph{Sublinear time algorithms for maximum matching.} Sublinear-time matching algorithms have been a central topic of study and have enjoyed sustained interest in the past few decades
\cite{ParnasR07,NguyenO08,YoshidaYI09,OnakRRR12,LeviRY17,KapralovMNT20,Behnezhad21,BehnezhadRRS-SODA23, BehnezhadRR23STOC,BhattacharyaKS23STOC,BehnezhadRR23FOCS,BehnezhadRR24,AzarmehrBRR25,MahabadiRT25}.
Beyond addressing a fundamental problem in graph theory, they are deeply connected to many other settings and problems, including spanning trees, Steiner trees, and the traveling salesman problem \cite{ChenKK20,ChenKT23ICALP,ChenKT23SODA,BehnezhadRRS24,MahabadiRTV25ITCS,MahabadiRTV26SODA}.
In addition, they have served as a key tool in dynamic algorithms for estimating the maximum matching size \cite{Behnezhad23,BhattacharyaKS23FOCS,BhattacharyaKSW24betterthantwo,AzarmehrBR24},
and more recently in \cite{BehnezhadG24,AssadiKK25} for maintaining an approximate maximum matching in dynamic graphs, with an update time tied to the density of Ruzsa-Szemer\'edi graphs. This work serves as  yet another important implication of such algorithms.


\section{Preliminaries}

\subsection{Definitions and Notation}
\label{sec:definitions}
For a graph $G = (V,E)$, $n$ will refer to $|V|$, and we will denote by $\mu(G)$ the size of a maximum matching in $G$.
We say that an algorithm succeeds with high probability if for any constant $c > 0$, the algorithm can be made to succeed with probability at least $1 - n^{-c}$. 
We call an estimate $\hat{x}$ of a value $x$ an $\alpha$-approximation if it satisfies
\[x/\alpha \le \hat{x} \le x,\]
and an $(\alpha,\beta)$-approximation if it satisfies
\[x/\alpha - \beta \le \hat{x} \le x.\]
\begin{definition}[Edge and label stream]
    \label{def:edge-label-stream}
Let $G = (V, E)$ be a directed acyclic graph and $f: V \to \{0,1\}$ be a boolean function. 
In an edge and label stream of $G$ and $f$, edges of $G$ and example-label pairs $(v, f(v))$ 
are revealed in an arbitrary order.
\end{definition}

\begin{definition}[Violation graph]
For a DAG $G=(V,E)$ and a boolean function $f : V \to \{0,1\}$, the violation graph $G_f$ is the undirected graph which has an edge $\{u, v\}$ if $u$ can reach $v$ but $f(u) > f(v)$ (or $v$ can reach $u$ but $f(v) > f(u)$).
\end{definition}


\subsection{Standard Facts}

\begin{fact}[Distance to monotonicity and matching in $G_f$]
\label{fact:matching-dmon}
For a DAG $G=(V,E)$ and a boolean function $f : V \to \{0,1\}$, we have
$\dmon{f} = \mu(G_f).$
\end{fact}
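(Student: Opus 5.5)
We prove the two inequalities $\mu(G_f) \le \dmon{f}$ and $\dmon{f} \le \mu(G_f)$ separately. Throughout, write $u \preceq v$ when $u$ can reach $v$ in $G$. Since $G$ is a DAG, $\preceq$ is a partial order, and $f$ is monotone iff $f(u)\le f(v)$ whenever $u \preceq v$. This holds because monotonicity along edges propagates along directed paths by transitivity of $\le$.

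\textbf{Lower bound $\mu(G_f)\le \dmon{f}$.} Let $M$ be any matching in $G_f$, and let $g$ be a monotone function differing from $f$ on a set $C$ with $|C| = \dmon{f}$. Each edge $\{u,v\}\in M$ with $u\preceq v$ has $f(u)=1$ and $f(v)=0$. If $g$ agreed with $f$ on both endpoints, then $g(u)=1>0=g(v)$ with $u \preceq v$, contradicting monotonicity of $g$. Hence every edge of $M$ has an endpoint in $C$. The edges of $M$ are disjoint, so $|C|\ge |M|$.

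\textbf{Upper bound $\dmon{f}\le\mu(G_f)$.} This is the main step. Note that $G_f$ is bipartite between $A=f^{-1}(1)$ and $B=f^{-1}(0)$. By K\H{o}nig's theorem, a maximum matching has the same size as a minimum vertex cover $C$. The plan is to show that flipping $f$ on $C$, or on a suitably chosen cover of that size, yields a monotone function.

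The naive flip can fail. A flipped vertex in $C\cap A$ becomes $0$, and a predecessor of it outside $C$ may still have value $1$. We therefore choose the cover more carefully. Among the uncovered vertices, no violating pair remains: there is no $u\preceq v$ with $u \in A\setminus C$ and $v\in B\setminus C$.

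Define $g(x)=1$ iff some $u\in A\setminus C$ satisfies $u\preceq x$. This $g$ is monotone by transitivity, since it is an up-set indicator. We check that $g$ agrees with $f$ outside $C$:
\begin{itemize}
\item on $A\setminus C$, $g=1=f$ because each such $u$ satisfies $u \preceq u$;
\item on $B\setminus C$, $g=0=f$ because an uncovered $0$-vertex cannot lie above an uncovered $1$-vertex.
\end{itemize}
Hence $g$ differs from $f$ only on a subset of $C$, giving $\dmon{f}\le |C| = \mu(G_f)$.

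The only subtle step is this upper-set closure construction, which replaces the naive flip. Everything else follows directly from K\H{o}nig's theorem and transitivity of reachability.
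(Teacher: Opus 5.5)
Your proof is correct. The paper gives no proof of this statement: it is asserted as a well-known fact, attributed in the introduction to the early monotonicity-testing work of Dodis et al. Your argument is the standard one. Disjointness of matching edges gives $\mu(G_f)\le \dmon{f}$. For the converse, you show that $f$ can be repaired by changing values only on a vertex cover $C$ of $G_f$, using the up-closure of the uncovered $1$-vertices, and then apply K\H{o}nig's theorem.

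One expository correction. You say you will ``choose the cover more carefully,'' but you never do, and you don't need to. Your up-set construction works for an arbitrary vertex cover $C$. What replaces the naive flip is a better choice of $g$, not of $C$, so that sentence should be removed.

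It is also worth noting what your argument actually shows. It proves directly that $\dmon{f}$ equals the minimum vertex cover size of $G_f$. It uses only reflexivity and transitivity of reachability, so acyclicity of $G$ is never needed. Bipartiteness of $G_f$ enters only when converting vertex cover size into matching size via K\H{o}nig.
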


\begin{fact}[Integrality gap bound for fractional matchings]
\label{fact:frac-gap-three-halves}
For every graph $G$,
$
\nu_f(G) \le \frac32 \mu(G),
$
where $\nu_f(G)$ denotes the maximum size of a fractional matching in $G$.
\end{fact}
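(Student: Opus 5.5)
The plan is to prove the integrality-gap bound by combining the structure of extreme points of the fractional matching polytope with a per-component rounding argument.

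Write the fractional matching polytope of $G$ as
\[
P(G)=\Bigl\{x\in\R_{\ge 0}^{E} : \textstyle\sum_{e\ni v} x_e\le 1 \ \forall v\in V\Bigr\},
\qquad
\nu_f(G)=\max_{x\in P(G)}\sum_e x_e .
\]
Since the objective is linear and $P(G)$ is a bounded polytope, the maximum is attained at a vertex $x^*$. The first step is the classical characterization of these vertices (Balinski). Every vertex $x^*$ of $P(G)$ is half-integral, i.e.\ $x^*_e\in\{0,\tfrac12,1\}$. Moreover, the edges with $x^*_e=\tfrac12$ form a vertex-disjoint union of odd cycles. The edges with $x^*_e=1$ form a matching that is vertex-disjoint from those cycles.

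To prove this characterization, take the support of $x^*$ and restrict to the edges with $0<x^*_e<1$. Any even cycle in this subgraph gives a perturbation $\pm\delta$ alternating around the cycle that stays feasible. The same holds for any path whose endpoints are vertices with slack (unsaturated vertices). It also holds for two odd cycles joined by a path (a "dumbbell"), where the path edges take $\pm2\delta$. Each of these contradicts extremality. Hence every component of the fractional part is a single odd cycle in which all vertices are tight. Tightness forces every edge of the cycle to have value $\tfrac12$.

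The second step rounds component by component. Each integral edge contributes $1$ to $x^*$ and can be kept as is. Each odd cycle of length $2k+1$ (with $k\ge1$, since the graph is simple) contributes $(2k+1)/2$ to $\sum_e x^*_e$. Such a cycle contains a matching of size $k$. These matchings and the integral edges are pairwise vertex-disjoint, so together they form a matching $M$ of $G$.

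Comparing the two sides cycle by cycle, the fractional contribution is at most $\tfrac32$ times the integral one:
\[
\frac{(2k+1)/2}{k}=1+\frac{1}{2k}\le \frac32 .
\]
Summing over all components then gives
\[
\nu_f(G)=\sum_e x^*_e\le \tfrac32 |M|\le \tfrac32\mu(G).
\]

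The main obstacle is the extreme-point characterization, specifically ruling out the configurations that are not single odd cycles. The structured way to do this is to consider any connected component $C$ of the fractional support. If $C$ has at least as many edges as vertices, it contains a cycle. An even cycle, or two cycles joined within $C$, yields a perturbation direction, using the dumbbell/theta cases. If instead $C$ is a tree, a leaf edge is fractional at a leaf vertex, which is therefore unsaturated; following a maximal path between two leaves gives a valid perturbation. What remains is exactly the odd-cycle case. Since this is a standard textbook fact, in the paper it suffices to cite it, with the above as a sketch.
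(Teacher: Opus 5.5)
The paper gives no proof of this bound; it quotes it as a standard fact. Your route is the standard argument: a vertex optimum of the fractional matching polytope is half-integral, with its $\tfrac12$-edges forming vertex-disjoint odd cycles (Balinski), and each odd $(2k+1)$-cycle is then rounded to $k$ of its edges. The rounding step and the final summation are correct, so if you simply cite Balinski's theorem the proof is complete.

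Your self-contained sketch of the extreme-point characterization does have a hole, though. The perturbations you list are an even cycle, a path between two unsaturated vertices, and a dumbbell. None of them handles a component that is an odd cycle plus exactly one unsaturated vertex, lying either on the cycle or at the end of a pendant path.

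\begin{itemize}
\item Take a triangle $abc$ with a pendant edge $cd$, and set $x_{ab}=\tfrac34$, $x_{bc}=x_{ca}=\tfrac14$, $x_{cd}=\tfrac12$. The vertices $a,b,c$ are tight and only $d$ has slack. There is one cycle, it is odd, and the component is not a tree, so none of your moves applies.
\item A bare triangle with $x_{bc}=0.7$ and $x_{ab}=x_{ca}=0.3$ has only $a$ unsaturated, and again none of your moves applies.
\end{itemize}

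Neither point is a vertex, but your case analysis does not show this. So the step ``what remains is exactly the odd-cycle case'' and the assertion that all cycle vertices are tight are not yet justified.

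One extra move fixes it. Walk from the unsaturated vertex $u$ to the cycle, once around the odd cycle, and back, alternating signs. The path edges receive $\pm 2\delta$ and the cycle edges $\pm\delta$. Every vertex other than $u$ is balanced, and $u$ absorbs $\pm 2\delta$ using its slack.

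Alternatively, replace the case analysis with a rank count. At a vertex, the values on a fractional component $C$ must be the unique solution of the tight degree equations of the vertices of $C$; these vertices carry no $1$-edges. Hence $C$ must contain at least $|E(C)|$ tight vertices. Since leaves of $C$ are never tight, this gives the following:
\begin{itemize}
\item it rules out trees;
\item it rules out $|E(C)|>|V(C)|$;
\item it forces a unicyclic $C$ to be a bare cycle with all vertices tight.
\end{itemize}
Finally, the incidence matrix of an even cycle is singular, which leaves only tight odd cycles.
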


%

%


\section{Global Matching Estimation}\label{sec:main-algorithm}
In this section, we prove validity and approximation ratio of the matching value computed by a ``global'' algorithm, \textsc{EstimateMatchingSize}
(\Cref{alg:matching}).
This algorithm accesses the graph $G$ via black-box oracles \textsc{ComputeDegrees}
and \textsc{MaximalMatching}
which provide the following functionality:
\begin{assumption}[\textsc{ComputeDegrees}]
\label{assume:compute-degrees}
There is a function $\textsc{ComputeDegrees}(G,U, \eps)$ that takes an arbitrary graph $G=(V,E)$, 
subset $U \subseteq V$, and approximation parameter $\eps > 0$, and returns a vector $\vec{d}$ such that $d_v = (1 \pm \eps)|N(v) \cap U|$ for all $v \in V$.
\end{assumption}

\begin{assumption}[\textsc{MaximalMatching}]
\label{assume:maximal-matching}
There is a function $\textsc{MaximalMatching}(G,U)$ that takes an arbitrary graph $G=(V,E)$ and 
subset $U \subseteq V$, and returns a maximal matching in $G[U]$.
\end{assumption}
Only these functions interact with the edges of $G$ (and thus require an access model to be implemented);
the rest of the algorithm only maintains 
information about the active set $U$ and the saturation vector $\vec{x}$.
In \Cref{sec:subset-query-implementation}, we will implement these oracles in
the subset-query model,
and in \Cref{sec:vertex-query-implementation} we will discuss how to adapt these ideas to the vertex-query model.

\begin{algorithm}
\DontPrintSemicolon
\caption{$\textsc{EstimateMatchingSize}(G, \eps)$}
\label{alg:matching}
\KwIn{Graph $G = (V,E)$, approximation parameter $\eps > 0$}
\KwOut{Constant-factor estimate of the maximum matching size in $G$}
\BlankLine
$U \gets V$ \tcp*{Set of active vertices; i.e. those that haven't been deleted}
$\dmax \gets n$ \tcp*{degree threshold}
$\eta \gets 2/\eps^2$ \tcp*{Transition threshold to low-degree phase}
$\vec{x} := (x_v)_{v \in V} \gets \vec{0}$ \tcp*{vector of fractional matching weights assigned to vertices}
\tcc{$\vec{y} := (y_e)_{e \in E} \gets \vec{0}$; this is a fractional matching we track in the comments to
help with the analysis, but it is never explicitly maintained by the algorithm}
\While{$\dmax \ge \eta$}{ 
    $U, \vec{x} \gets \textsc{PeelHighDegree}(G, U, \vec{x}, \dmax, \eps)$ \\
    $\dmax \gets (1 - \eps)\dmax$
}
$\vec{x} \gets \textsc{MatchLowDegree}(G,U,\vec{x}, \eps)$\\
\Return $\hat{\mu} := \tfrac{1-\eps}{1+\eps} \cdot \tfrac12\sum_{v} x_v$ 
\end{algorithm}

\begin{algorithm}
\DontPrintSemicolon
\caption{$\textsc{PeelHighDegree}(G, U, \vec{x}, \dmax, \eps)$}
\label{alg:peel-high-degree}
\KwIn{Graph $G = (V,E)$, active set $U$, saturation values $\vec{x}$, degree threshold $\dmax$, approximation parameter $\eps > 0$}
\KwOut{Updated active set and saturation values}
\BlankLine
$\vec{d_U} \gets \textsc{ComputeDegrees}(G, U, \eps)$\tcp*{counts neighbors in $U$, see \Cref{assume:compute-degrees}}
$H \gets \{v \in U : (d_U)_v \ge (1 - \eps)\dmax\}$\\
\While{$H \ne \emptyset$}{
    $\vec{d_H} \gets \textsc{ComputeDegrees}(G, H, \eps)$\\
    $x_v \gets x_v + \frac{\eps}{2} \cdot \frac{(d_U)_v}{\dmax}$ for all $v \in U$\\
    $x_v \gets x_v + \frac{\eps}{2} \cdot \frac{(d_H)_v}{\dmax}$ for all $v \in U \setminus H$   \\
    \tcc{$y_{uv} \gets y_{uv} + \frac{\eps}{2\dmax}$ for each ordered pair $(u,v) \in E(G[U])$ with $u \in H$}
    $H \gets H \setminus \{v \in H : x_v \ge 1 - 2\eps\}$ \\
    $U \gets U \setminus \{v \in U : x_v \ge 1 - 2\eps\}$ \\
}
\Return $U, \vec{x}$\\
\end{algorithm}

\begin{algorithm}
\DontPrintSemicolon
\caption{$\textsc{MatchLowDegree}(G,U,\vec{x}, \eps)$}
\label{alg:low-degree-matching}
\KwIn{Graph $G = (V,E)$, active set $U$, saturation vector $\vec{x}$, approximation parameter $\eps > 0$}
\KwOut{Updated saturation vector}
\BlankLine
\For{$1/\eps - 1$ iterations}{
    $M \gets \textsc{MaximalMatching}(G,U)$\tcp*{see \Cref{assume:maximal-matching}}
    $x_v \gets x_v + \eps$ for all $v \in M$\\
    \tcc{$y_{uv} \gets y_{uv} + \eps$ for each $\{u,v\} \in M$}
    $U \gets U \setminus \{v \in U : x_v \ge 1 - 2\eps\}$\\
}
\Return $\vec{x}$
\end{algorithm}

\subsection{Validity}
Here we will show that the value returned by \textsc{EstimateMatchingSize} is at most $\mu(G)$.
To do this, we will first show that the vertex weights $\vec{x}$ correspond to an implicit fractional matching 
$\vec{y}$ that is always valid.
Then, we will show that this fractional matching has an integrality gap of at most $1/(1-\eps)$, which we 
account for by scaling down all the $x_v$ values by $1-\eps$ at the end of the algorithm.

\begin{lemma}[Invariants of \textsc{PeelHighDegree}]
\label{lem:high-degree-validity}
Suppose \textsc{PeelHighDegree} (\Cref{alg:peel-high-degree}) is called with parameters satisfying 
\begin{align*}
\epsilon &\le \tfrac13 \qquad \text{and} \qquad
\dmax \ge (1-\eps) \max_{v \in G} |N(v) \cap U|
\end{align*}
and such that the following conditions are satisfied:
\begin{enumerate}
    \item For every $v \in V$, $(1-\eps) \sum_{u \in N(v)} y_{uv} \le x_v \le (1+\eps) \sum_{u \in N(v)} y_{uv}$.
    \item For every $v \in V$, $x_v \le 1 - \eps$. 
    \item For every $v \in U$, $x_v \le 1-2\eps$. 
\end{enumerate}
Then these three conditions are maintained by \textsc{PeelHighDegree}.

Furthermore, when \textsc{PeelHighDegree} returns,
all vertices in $U$ have degree at most $\dmax$.
\end{lemma}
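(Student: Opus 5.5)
The plan is an induction over the iterations of the \textbf{while} loop in \textsc{PeelHighDegree}: assuming the three conditions hold at the start of an iteration, I show they hold at its end. Vertices outside $U$ never change their $x_v$, and (by the bookkeeping below) never receive new $y$-weight, so the conditions are only at stake for $v \in U$.

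\textbf{Condition 1 (bookkeeping).} I fix the interpretation of the commented update to $\vec y$ so that it matches the increments to $\vec x$.
\begin{itemize}
\item Each edge $\{u,v\}$ of $G[U]$ with at least one endpoint in $H$ receives $\frac{\eps}{2\dmax}$, counted once per edge.
\item The exact $y$-increment at $v$ is then $\frac{\eps}{2\dmax}|N(v)\cap U|$ if $v \in H$, and $\frac{\eps}{2\dmax}|N(v)\cap H|$ if $v \in U\setminus H$.
\item The $x$-increments are the same expressions with $|N(v)\cap U|$ and $|N(v)\cap H|$ replaced by $(d_U)_v$ and $(d_H)_v$. By \Cref{assume:compute-degrees} these are $(1\pm\eps)$-approximations of the true counts.
\end{itemize}
So each $x$-increment lies between $(1-\eps)$ and $(1+\eps)$ times the corresponding $y$-increment. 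Adding these to the inductive two-sided bound preserves condition~1, because sums of quantities each within a $(1\pm\eps)$ factor stay within that factor.

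I expect the main obstacle to be that $\vec{d_U}$ is computed once, before the loop, while $U$ shrinks between iterations. The $y$-weight is placed only on edges to the current $U$, so the stale $(d_U)_v$ could overcount and break the upper side of condition~1. I will try to close this gap by restricting the edges that receive $y$-weight to the current $U$ and checking that the estimates used in the $x$-update are consistent with that set. If the stale $(d_U)_v$ does overcount, I would recompute $\vec{d_U}$ inside the loop, or equivalently argue with the degree counts taken at the moment each increment is made, so that the $(1\pm\eps)$ comparison above applies iteration by iteration.

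\textbf{Conditions 2 and 3.} The hypothesis on $\dmax$ gives $|N(v)\cap U|\le \dmax/(1-\eps)$, and also $|N(v)\cap H| \le |N(v)\cap U|$. Hence any single increment is at most
\[
\frac{\eps}{2}\cdot\frac{1+\eps}{1-\eps}\le \eps \qquad\text{for } \eps\le\tfrac13 .
\]
Before the increment, $v\in U$ has $x_v\le 1-2\eps$ by condition~3, so afterwards $x_v\le 1-\eps$, which is condition~2. The removal step then deletes from $U$ every vertex with $x_v\ge 1-2\eps$, so condition~3 holds again for the new $U$.

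\textbf{Termination and the final degree bound.} Every $v\in H$ has $(d_U)_v\ge(1-\eps)\dmax$, so $x_v$ grows by at least $\frac{\eps}{2}(1-\eps)$ per iteration. Thus every vertex of $H$ crosses $1-2\eps$, and leaves $H$, within $O(1/\eps)$ iterations, and the loop ends with $H=\emptyset$.

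Now consider any $v$ still in $U$ on return. If its true degree into the initial $U$ exceeded $\dmax$, then $(d_U)_v\ge(1-\eps)\dmax$, so $v$ was placed in $H$. But vertices leave $H$ only through the same threshold that removes them from $U$, so $v$ would no longer be in $U$. Hence every surviving $v$ had at most $\dmax$ neighbors in the initial $U$. Since $U$ only shrinks, it has degree at most $\dmax$ in the final $U$, as claimed.
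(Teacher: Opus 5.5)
Apart from one point, your argument is the paper's proof. It uses induction over loop iterations and compares $x$-increments with $y$-increments through the $(1\pm\eps)$ estimates; your once-per-edge convention is exactly what the paper's case analysis uses. Conditions 2 and 3 come from the bound $\frac{1+\eps}{1-\eps}\le 2$, and the final degree bound from ``every vertex of degree at least $\dmax$ enters $H$, and $H$ empties only by saturation.'' The exception is the stale $\vec{d_U}$, and there your proposal has a genuine gap.

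The obstacle you flag is real. The paper's own proof passes over it by reading $(d_U)_v$ as an estimate of $|N(v)\cap U|$ for the \emph{current} $U$. For example, let $v\in H$ start with $x_v=0$ and have about $\dmax$ neighbors in $U$. Give each neighbor an $x$-value just below $1-2\eps$, with its $y$-mass on an edge to a vertex outside $U$. All these neighbors leave $U$ after the first iteration. Yet $v$ keeps gaining about $\eps/2$ per iteration from the stale $(d_U)_v$ while gaining no $y$-weight. It ends with $x_v\ge 1-2\eps$ against $\sum_u y_{uv}\approx\eps/2$, which violates the upper side of condition 1. So your first option (keep $y$ on the current $U$ and check consistency) cannot succeed. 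Nor can you let removed neighbors absorb the weight: their $x$ is frozen, and the lower side of condition 1 may be tight there.

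Your second option, recomputing $\vec{d_U}$ each iteration, does repair condition 1. But it invalidates your termination and final-degree arguments, which rely on $(d_U)_v\ge(1-\eps)\dmax$ for every vertex still in $H$. A vertex of $H$ whose current degree has collapsed (to $0$, say) gains nothing, never saturates, and the loop never ends. The paper mixes the two versions the same way, using the stale value for its iteration count in the proof of the validity lemma that follows.

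The missing step is to refresh $H$ as well. After recomputing $\vec{d_U}$, drop from $H$ every vertex with $(d_U)_v<(1-\eps)\dmax$, and only then compute $\vec{d_H}$. Then:
\begin{itemize}
\item Every vertex of $H$ gains at least $\frac{\eps}{2}(1-\eps)$ per iteration, so the loop stops within $O(1/\eps)$ iterations.
\item Conditions 1--3 go through exactly as you wrote them.
\item The final-degree argument gains a second exit case. A vertex dropped for a low fresh estimate has $|N(v)\cap U|<\dmax$ at that moment, and this can only decrease as $U$ shrinks.
\end{itemize}
The extra \textsc{ComputeDegrees} call per iteration leaves the query bounds unchanged.
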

\begin{proof}
First, we observe that all vertices of degree at least $\dmax$ are in $H$; this is because \Cref{assume:compute-degrees} ensures 
that all such vertices have $d_U(v) \ge (1-\eps) \dmax$ and are thus placed in $H$.
Since all vertices in $H$ are removed from $U$ during the execution of \textsc{PeelHighDegree}, this ensures that all vertices remaining in $U$ afterwards have degree at most $\dmax$.

Now we show the invariants.
Assume the invariants hold at the beginning of the $t$-th execution of the loop body.
For $v \in V \setminus U$, $x_v$ and all $y_{uv}$ incident to $v$ are unchanged, so we will analyze only vertices that are in $U$
at the start of the $t$-th execution.
To show the first invariant, consider the following cases.
\begin{enumerate}
  \item $v \in H$: We increase $x_v$ by $\frac{\eps}{2} \cdot \frac{(d_U)_v}{\dmax}$ and $y_{uv}$ by 
  $\frac{\eps}{2\dmax}$ for all $u \in N(v) \cap U$. By \Cref{assume:compute-degrees}, 
  we have $(d_U)_v = (1 \pm \eps) |N(v) \cap U|$. Thus, after the update to $x_v$, we have 
  \begin{align*}
  (x_v)_{t+1}
  &= (x_v)_t + \frac{\eps}{2} \cdot \frac{(d_U)_v}{\dmax}\\
  &\ge \left[(1 - \eps) \sum_{u \in N(v) \cap U} (y_{uv})_t\right] + \frac{\eps}{2} \cdot \frac{(1-\eps)|N(v) \cap U|}{\dmax}\\
  &= (1 - \eps) \sum_{u \in N(v) \cap U} \left[(y_{uv})_t + \tfrac{\eps}{2\dmax}\right] \\
  &= (1 - \eps) \sum_{u \in N(v) \cap U} (y_{uv})_{t+1}. 
  \end{align*}
  A similar argument shows the upper bound of $(1+\eps) \sum_{u \in N(v)} y_{uv}$.

  \item $v \in U \setminus H$: We increase $x_v$ by $\frac{\eps}{2} \cdot \frac{(d_H)_v}{\dmax}$ and $y_{uv}$ 
  by $\frac{\eps}{2} \cdot \frac{1}{\dmax}$ for all $u \in N(v) \cap H$. We have:
  \begin{align*}
  (x_v)_{t+1}
  &= (x_v)_t + \frac{\eps}{2} \cdot \frac{(d_H)_v}{\dmax}\\
  &\ge \left[(1 - \eps) \sum_{u \in N(v) \cap U} (y_{uv})_t\right] + \frac{\eps}{2} \cdot \frac{(1-\eps)|N(v) \cap H|}{\dmax}\\
  &= \left[(1 - \eps) \sum_{u \in N(v) \cap (U\setminus H)} (y_{uv})_t\right] + 
  \left[(1 - \eps) \sum_{u \in N(v) \cap H} (y_{uv})_t + \tfrac{\eps}{2\dmax}\right] \\
  &= (1 - \eps) \cdot  \sum_{u \in N(v) \cap U} (y_{uv})_{t+1}. 
  \end{align*} 
  A similar argument shows the upper bound of $(1+\eps) \sum_{u \in N(v)} y_{uv}$.

\end{enumerate}

To show the other invariants, first observe that the third invariant follows from the fact that the vertices not satisfying
$x_v \le 1-2\eps$ are removed from $U$ at the end of the loop body.
For the second, observe that $x_v$ for $v \in U$ increases by $\frac{\eps}{2} \cdot \frac{(d_U)_v}{\dmax}$ 
or $\frac{\eps}{2} \cdot \frac{(d_H)_v}{\dmax}$ in each iteration of the loop. By \Cref{assume:compute-degrees} and the assumption that 
$\dmax \ge (1 - \eps) \max_{v \in G} |N(v) \cap U|$, we have 
\[\tfrac{(d_U)_v}{\dmax} \le \tfrac{1+\eps}{1-\eps} \le 2,\]
and similarly for $\tfrac{(d_H)_v}{\dmax}$.
Thus the value of $x_v$ increases by at most $\eps$ in each iteration. Since $(x_v)_t \le 1 - 2\eps$ at the start of the iteration,
$(x_v)_{t+1} \le 1 - \eps$. 
\end{proof}

We now state and prove a similar statement for \textsc{MatchLowDegree}.
\begin{lemma}[Invariants of \textsc{MatchLowDegree}]
\label{lem:low-degree-invariants}
Suppose \textsc{MatchLowDegree} (\Cref{alg:low-degree-matching}) is called under the conditions: 
\begin{enumerate}
    \item For every $v \in V$, $(1-\eps) \sum_{u \in N(v)} y_{uv} \le x_v \le (1+\eps) \sum_{u \in N(v)} y_{uv}$.
    \item For every $v \in V$, $x_v \le 1 - \eps$.
    \item For every $v \in U$, $x_v \le 1-2\eps$.
\end{enumerate}
Then these three conditions are maintained by \textsc{MatchLowDegree}.
\end{lemma}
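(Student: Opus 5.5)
We argue by induction over the iterations of the \textbf{for} loop in \textsc{MatchLowDegree}, following the structure of the proof of \Cref{lem:high-degree-validity}. Assume the three conditions hold at the start of some iteration, and let $M$ be the maximal matching of $G[U]$ returned by \textsc{MaximalMatching} (\Cref{assume:maximal-matching}). In this iteration the algorithm changes $x_v$ only for vertices matched by $M$, and it changes $y_{uv}$ only for edges $\{u,v\}\in M$. Because $M\subseteq E(G[U])$, every touched vertex and edge lies inside the current active set $U$. Vertices in $V\setminus U$ are unmatched, and all edges incident to them are unchanged, so all three conditions continue to hold for them trivially.

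For the first invariant, fix $v\in U$. If $v$ is unmatched by $M$, then neither $x_v$ nor any $y_{uv}$ changes. If $v$ is matched to $u$, then $M$ is a matching, so $\{u,v\}$ is the unique edge of $M$ at $v$. Hence $\sum_{u\in N(v)} y_{uv}$ increases by exactly $\eps$, and $x_v$ also increases by exactly $\eps$. Write $s_v=\sum_{u\in N(v)} y_{uv}$. The lower bound follows from
\[
x_v+\eps \ge (1-\eps)s_v+(1-\eps)\eps=(1-\eps)(s_v+\eps).
\]
The upper bound follows symmetrically, since $x_v+\eps\le(1+\eps)s_v+\eps\le(1+\eps)(s_v+\eps)$. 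This step is exact because, unlike in \textsc{PeelHighDegree}, the increments here involve no degree estimates.

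For the second invariant, every $v$ that is touched lies in $U$ at the start of the iteration, so the third invariant gives $x_v\le 1-2\eps$. The vertex's weight rises by at most $\eps$, so afterwards $x_v\le 1-\eps$. The third invariant is restored by the last line of the loop body, which removes from $U$ every vertex with $x_v\ge 1-2\eps$. One subtlety is that this line removes vertices with $x_v = 1-2\eps$ exactly, but removal only strengthens the condition on the remaining set. So after the update, every vertex still in $U$ has $x_v<1-2\eps$.

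The only step requiring care is to check that no vertex outside $U$ is touched. Otherwise a vertex with $x_v$ already near $1-\eps$ could exceed it. This is guaranteed because \Cref{assume:maximal-matching} returns a matching of the induced subgraph $G[U]$, which completes the induction.
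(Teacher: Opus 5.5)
Your proof is correct and follows essentially the same route as the paper's: induction over iterations, the exact $+\eps$ increment to both $x_v$ and the single matched edge at $v$ for the first invariant, and the removal step together with the at-most-$\eps$ increment for the other two. Your explicit remark that $M \subseteq E(G[U])$, so that no vertex outside $U$ is touched, spells out a point the paper uses only implicitly.
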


\begin{proof}
Assume the invariants hold at the start of iteration $t$. As before, the third invariant follows
from the fact that vertices not satisfying $x_v \le 1-2\eps$ are removed from $U$ at the end of the iteration,
and the second invariant follows from the third invariant and the fact that $x_v$ increases by at most $\eps$ in each iteration.

To show the first invariant, consider a vertex $v$ that is matched in $M$ during the $t$-th iteration, since
otherwise $x_v$ and any $y_{uv}$ incident to $v$ would be unchanged. 
Let $w$ be $v$'s matching partner and note that all $y_{uv}$ for $u \ne w$ are also unchanged. We have then:
\begin{align*}
(x_v)_{t+1}
= (x_v)_t + \eps
&\ge \left[ (1 - \eps) \sum_{u \in N(v)} (y_{uv})_t \right] + \eps \\
&\ge \left[ (1 - \eps) \sum_{u \in N(v), u \ne w} (y_{uv})_t \right]+ (1-\epsilon)\left[(y_{wv})_t + \eps \right]\\ 
&= (1 - \eps) \sum_{u \in N(v)} (y_{uv})_{t+1}.
\end{align*}
A similar argument shows the upper bound of $(1+\eps) \sum_{u \in N(v)} y_{uv}$.
\end{proof}

Now we move on to bound the integrality gap of the fractional matching $\vec{y}$.
We will use the following well-known fact, which gives a sufficient condition for a fractional matching 
to be close to an integral matching:

\begin{fact}[Small-set blossom inequalities]
\label{lem:blossom-rounding}
Suppose $y : E \to [0, 1]$ is a fractional matching such that for every odd-size set $S \subset V$ with $|S| < \frac{1}{\beta}$, the \emph{blossom inequality}
\[
\sum_{e \in E(G[S])} y_e \le \frac{|S| - 1}{2}
\]
holds.
Then there exists an integral matching $M$ such that $|M| \ge (1 - \beta)\sum_{e \in E} y_e$.
\end{fact}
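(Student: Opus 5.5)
The approach is via Edmonds' matching polytope theorem. The convex hull of incidence vectors of matchings of $G$ is described by three families of constraints: $y\ge 0$, the degree constraints $\sum_{u \in N(v)} y_{uv} \le 1$ for all $v$, and the blossom inequalities $\sum_{e\in E(G[S])} y_e \le \frac{|S|-1}{2}$ for \emph{all} odd $S$. The plan is to scale $y$ down by $(1-\beta)$ and check that the scaled vector $y' := (1-\beta) y$ satisfies every one of these constraints. Once this is done, $y'$ lies in the matching polytope, so it is a convex combination of matchings. Some matching $M$ in that combination then has $|M| \ge \sum_e y'_e = (1-\beta)\sum_e y_e$, which is exactly the claim.

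The constraints for $y'$ are checked in three groups.

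\textbf{Nonnegativity and degree constraints.} These hold immediately, since $y$ is a fractional matching and scaling by $1-\beta\le 1$ preserves them.

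\textbf{Small odd sets.} For odd $S$ with $|S|<1/\beta$, the hypothesis gives the blossom inequality for $y$ directly, and scaling only helps.

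\textbf{Large odd sets.} For odd $S$ with $|S|\ge 1/\beta$, summing the degree constraints over $v\in S$ gives
\[
\sum_{e\in E(G[S])} y_e \le \frac{|S|}{2},
\]
because each edge inside $S$ is counted twice. It then suffices to show
\[
(1-\beta)\frac{|S|}{2}\le \frac{|S|-1}{2},
\]
which is equivalent to $\beta|S|\ge 1$. This is exactly the large-set assumption.

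The only delicate point is matching the strict/non-strict threshold at $|S|=1/\beta$. Since the hypothesis covers $|S|<1/\beta$, the remaining case is $|S|\ge 1/\beta$, where $\beta|S|\ge1$ holds as required. I do not expect a real obstacle: the content is entirely in invoking Edmonds' theorem, which the paper treats as standard.
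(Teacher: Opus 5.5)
Your proposal is correct and follows the paper's argument: scale $y$ by $(1-\beta)$, use the hypothesis for odd sets with $|S|<1/\beta$ and the summed degree constraints for $|S|\ge 1/\beta$, then invoke Edmonds' matching polytope theorem \cite{E65}. You additionally spell out the large-set verification (the bound $\sum_{e\in E(G[S])} y_e\le |S|/2$ and the equivalence with $\beta|S|\ge 1$), which the paper leaves implicit.
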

This is because the fractional matching $(1 - \beta)y$ satisfies the blossom inequalities for all odd-size sets, which implies it is a convex combination of integral matchings \cite{E65}.
We will show that our fractional matching satisfies this condition with $\beta = \eps$:

\begin{lemma}
  \label{lem:fractional-validity}
Assume $\eps \le \tfrac13$, and  let $\vec{x}$ be the vector computed by \textsc{EstimateMatchingSize} (\Cref{alg:matching}). 
Then there exists a fractional matching in $G$ with vertex saturation vector $\frac{1}{1+\eps} \vec{x}$
that satisfies the blossom inequalities for all odd sets $S$ of size at most $1/\eps$.
\end{lemma}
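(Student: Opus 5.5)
The natural candidate is the implicit fractional matching $\vec{y}$ tracked in the comments of \Cref{alg:matching}, suitably rescaled. I would first establish validity and the saturation bound, and then verify the small-set blossom inequalities by separating $\vec{y}$ into a high-degree part and a low-degree part.

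\textbf{Step 1: validity and saturation.} At the start $\vec{x}=\vec{y}=\vec 0$, $U=V$ and $\dmax=n$, so the three invariants hold. The degree hypothesis $\dmax \ge (1-\eps)\max_v |N(v)\cap U|$ of \Cref{lem:high-degree-validity} holds initially. It continues to hold across rounds: after a call, every vertex of $U$ has degree at most $\dmax$, and we then multiply $\dmax$ by $(1-\eps)$. Applying \Cref{lem:high-degree-validity} inductively and then \Cref{lem:low-degree-invariants}, we get at termination
\[
(1-\eps)\,y(v)\le x_v\le(1+\eps)\,y(v), \qquad x_v\le 1-\eps \quad\text{for all } v,
\]
where $y(v)=\sum_{u\in N(v)}y_{uv}$. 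Hence $y(v)\le (1-\eps)/(1-\eps)=1$, so $\vec{y}$ is a fractional matching with $y(v)\ge x_v/(1+\eps)$. Scaling each edge down appropriately, for instance multiplying $y_{uv}$ by $\min\{1,\ \tfrac{x_u}{(1+\eps)y(u)},\ \tfrac{x_v}{(1+\eps)y(v)}\}$, should give a matching $\vec{z}\le\vec{y}$ whose saturation is governed by $\tfrac{1}{1+\eps}\vec{x}$. I read the statement in this sense, as the value that is later summed. Since $\vec{z}\le\vec{y}$, it suffices to prove the blossom inequalities for $\vec{y}$ itself, or for $\vec{y}/(1+\eps)$ if the extra slack is needed.

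\textbf{Step 2: decomposition.} Write $\vec{y}=\vec{y}^{\mathrm{hi}}+\vec{y}^{\mathrm{lo}}$, where $\vec{y}^{\mathrm{lo}}$ consists of the increments made in \textsc{MatchLowDegree}. The low part is a sum of at most $1/\eps-1$ matchings, each with weight $\eps$ per edge. An odd set $S$ contains at most $(|S|-1)/2$ edges of any matching, so
\[
\sum_{e\in E(G[S])} y^{\mathrm{lo}}_e \le (1-\eps)\,\tfrac{|S|-1}{2}.
\]
For the high part, an edge $uv$ receives weight only while $u$ or $v$ is in $H$. A vertex is in $H$ at only one threshold level, since it is removed from $U$ before that call returns. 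While in $H$ it gains $\Omega(\eps)$ saturation per iteration, because $(d_U)_v \ge (1-\eps)\dmax$, so it stays for at most $O(1/\eps)$ iterations. Each such iteration adds $\eps/(2\dmax)$ to the edge, and $\dmax\ge\eta=2/\eps^2$. Together these give $y^{\mathrm{hi}}_e = O(1/\dmax)=O(\eps^2)$ for every edge.

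\textbf{Step 3: combining the two parts, the main obstacle.} Multiplying the per-edge bound by $\binom{|S|}{2}$ gives high-part mass $O(\eps^2|S|^2)=O(\eps|S|)$ inside $S$. The low part leaves only $\eps(|S|-1)/2$ of slack, so this naive count is off by a constant factor. To close the gap I would first use the vertex-saturation bound instead of the edge count. Every vertex that received any low-degree weight survived the whole high phase with $x_v\le 1-2\eps$. So its high-part saturation is at most $(1-2\eps)/(1-\eps)$, plausibly leaving $\Omega(\eps)$ slack per vertex. Summing over $v\in S$ and halving, while charging $\vec{y}^{\mathrm{lo}}$ only through the matching count, should give total inside-$S$ weight at most $(|S|-1)/2$. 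If that falls short, I would use the extra factor $1/(1+\eps)$ from the rescaling, together with $|S|<1/\eps$, to absorb the remaining $O(\eps)$ terms. I expect the careful bookkeeping of constants in this step to be the delicate part. In particular, the case of sets $S$ mixing vertices removed during the high phase with vertices that reach \textsc{MatchLowDegree} needs care, and splitting $S$ along that partition is what I would try first.
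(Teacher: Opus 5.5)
\textbf{Gap in Step 3.} Steps 1 and 2 are essentially the paper's argument, with one correction: keep $\vec{y}$ itself. Its saturations satisfy $x_v/(1+\eps)\le y(v)\le 1$, and this lower bound is what the subsequent validity corollary uses. Your $\vec z$ only has saturations \emph{at most} $x_v/(1+\eps)$, and $\vec y/(1+\eps)$ only guarantees $x_v/(1+\eps)^2$. So neither rescaling is admissible, and in particular the extra factor $1/(1+\eps)$ you keep in reserve for Step~3 is not available.

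The genuine gap is Step~3, which you never carry out, and the route you propose would fail. Saturation bounds cannot certify blossom inequalities for small odd sets. Even if every vertex of $S$ had saturation at most $1-\eps$, you would only get $\sum_{e\in E(G[S])}y_e\le\frac{|S|}{2}(1-\eps)$, and this exceeds $\frac{|S|-1}{2}$ whenever $|S|<1/\eps$. What matters is the mass on edges \emph{inside} $S$. A vertex's high-phase saturation can be close to $1$ via edges leaving $S$, so it says nothing about that mass. Adding a saturation-based bound for $\vec y^{\mathrm{hi}}$ to the matching bound for $\vec y^{\mathrm{lo}}$ gives nearly $|S|$.

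\textbf{The missing idea.} What you need is a sharp accounting of where the high-phase mass sits. The paper splits $S$ along $L$, the active set when \textsc{MatchLowDegree} is called, using the fact that the two parts have disjoint supports. No vertex of $L$ was ever in $H$, and only edges incident to $H$ gain high-phase weight, so $\vec y^{\mathrm{hi}}$ vanishes on $E(G[L])$. Conversely, all low-phase matchings lie in $G[L]$.

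Take an odd $S$ with $|S|=k\in[3,1/\eps]$ and let $\ell=|S\cap L|$. The low part is at most $(1-\eps)\lfloor \ell/2\rfloor$, since the matching count applies to $S\cap L$ only. The high part lives on at most $k(k-\ell)$ edges touching $S\setminus L$, each of weight at most $2/\dmax\le 2/\eta=\eps^2$. It therefore totals at most $\eps^2k(k-\ell)\le\eps(k-\ell)$.

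If $\ell=k$ you are done. Otherwise, $(1-\eps)\frac{\ell}{2}+\eps(k-\ell)$ is nondecreasing in $\ell$ because $\frac{1-\eps}{2}\ge\eps$. So it is at most $(1-\eps)\frac{k-1}{2}+\eps$, which is at most $\frac{k-1}{2}$ because $k\ge 3$. You name this split only as something to try, without the disjoint-support fact that makes it work.

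\textbf{Your own estimate also closes.} The constant-factor obstruction you found comes from writing $O(\eps^2|S|^2)$. With the per-edge bound $\eps^2$ and $|E(G[S])|\le\binom{k}{2}$, your unsplit estimate gives high-phase mass at most $\eps^2\binom{k}{2}\le\eps\frac{k-1}{2}$. That is exactly the slack left by the low-part bound $(1-\eps)\frac{k-1}{2}$, so your Step~2 already suffices once the constants are tracked.
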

\begin{proof}
It follows from \Cref{lem:high-degree-validity,lem:low-degree-invariants} that the invariants on $\vec{x}$ and $\vec{y}$ are preserved throughout the execution of \textsc{EstimateMatchingSize}.
Therefore the final values of $\vec{x}$ and $\vec{y}$ satisfy $\frac{1}{1+\eps} x_v \le \sum_{u \in N(v)} y_{uv} \le 1$ for every $v$;
thus $\vec{y}$ is a fractional matching with vertex saturations at least $\frac{1}{1+\eps} \vec{x}$.

Now we claim the $\vec{y}$ values satisfy the small-set blossom inequalities.

Let $L$ denote the set of vertices in $U$ when \textsc{MatchLowDegree} is called.
First we will bound the weight on edges with an endpoint outside of $L$.
Observe that in \textsc{PeelHighDegree}, since $\frac{(d_U)_v}{\dmax} \ge 1-\eps \ge 1/2$, the value of $x_v$ for $v \in H$ increases by
at least $\eps/4$ each iteration;
thus the number of iterations is at most $4/\eps$.
Each edge incident to $H$ gains weight $\tfrac{\eps}{2\dmax}$ in each iteration,
so after $4/\eps$ iterations, each edge will have gained weight at most $2/\dmax \le 2/\eta = \eps^2$. 
Only edges incident to $H$ gain weight, and each vertex appears in $H$ at most once; thus, all edges with an endpoint
outside of $L$ satisfy $y_{uv} \le \eps^2$.

Consider a set $S \subseteq V$ of odd size $k \in [3, 1/\eps]$, with $\ell$ vertices in $L$ and $k - \ell$ vertices outside of $L$.
We will bound the matching weight in $G[S]$.
The contribution from edges with an endpoint outside of $L$ is at most $\eps^2 \cdot k(k-\ell) \le \eps(k-\ell)$ by the above argument.  
For the edges between members of $L$, all their weight comes from \textsc{MatchLowDegree}. Since there are $1/\eps - 1$ iterations
and each contributes $\eps$ weight to an integral matching, the contribution from $L$ is at most $(1-\eps)\left\lfloor\frac{\ell}{2} \right\rfloor$; thus the inequality is immediately satisfied if $\ell = k$. 

If $\ell \le k-1$, then the total weight is at most 
\begin{align*}
(1-\eps) \frac{\ell}{2} + \eps(k-\ell)
&\le (1 - \eps) \frac{k - 1}{2} + \eps \cdot 1 \\
&\le \frac{k - 1}{2},
\end{align*}
where the first inequality is valid because $\frac{1 - \eps}{2} \ge \eps$, and the second is because $k \ge 3$.
\end{proof}

\begin{corollary}[Validity of matching]
  \label{cor:validity}
Let $\vec{x}$ be the saturation vector returned by \Cref{alg:matching} with $\eps \le 1/3$ and $\eta \ge 2/\eps^2$.
Then there exists an integral matching in $G$ of value at least $\tfrac{1-\eps}{1+\eps} \cdot \tfrac12 \sum x_v$.
\end{corollary}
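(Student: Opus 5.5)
The plan is to combine \Cref{lem:fractional-validity} with \Cref{lem:blossom-rounding}, taking $\beta = \eps$.

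First, \Cref{lem:fractional-validity} (applicable since $\eps \le 1/3$ and $\eta \ge 2/\eps^2$) gives a fractional matching $\vec{y}$ in $G$. Its vertex saturations $\sum_{u \in N(v)} y_{uv}$ are at least $\frac{1}{1+\eps} x_v$ for every $v$. It also satisfies the blossom inequality for every odd set $S$ with $|S| \le 1/\eps$.

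Second, I relate the total weight of $\vec{y}$ to $\vec{x}$. Each edge contributes to the saturation of exactly two vertices, so
\[
\sum_{e \in E} y_e = \tfrac12 \sum_{v} \sum_{u \in N(v)} y_{uv} \ge \tfrac{1}{1+\eps}\cdot \tfrac12 \sum_v x_v .
\]

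Third, I apply \Cref{lem:blossom-rounding} with $\beta = \eps$. Its hypothesis requires the blossom inequality for odd sets with $|S| < 1/\eps$, and this is implied by the guarantee for $|S| \le 1/\eps$. We also need $y_e \in [0,1]$. Nonnegativity is clear, since weights only increase from zero. The upper bound follows from saturations being at most $1$, which is part of what \Cref{lem:fractional-validity} gives: invariant 1 together with invariant 2 yields $\sum_u y_{uv} \le x_v/(1-\eps) \le 1$. The fact then provides an integral matching $M$ with
\[
|M| \ge (1-\eps)\sum_e y_e \ge \tfrac{1-\eps}{1+\eps}\cdot\tfrac12\sum_v x_v .
\]

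This is the whole argument; there is no real obstacle. The only point needing care is confirming that the hypotheses of \Cref{lem:blossom-rounding} match exactly what \Cref{lem:fractional-validity} guarantees, namely the size range of the odd sets and the boundedness of the edge weights.
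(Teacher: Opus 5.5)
Your proposal is correct and follows the same route as the paper. You invoke \Cref{lem:fractional-validity} to get a fractional matching with saturations at least $\tfrac{1}{1+\eps}x_v$ that obeys the small odd-set blossom inequalities, observe that its value is half the total saturation, and round it with \Cref{lem:blossom-rounding} at $\beta=\eps$; your explicit checks of the hypotheses (the strict versus non-strict bound on $|S|$, and $y_e\le 1$ via invariants 1 and 2) are details the paper leaves implicit, and you handle them correctly.
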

\begin{proof}
By \Cref{lem:fractional-validity}, there is a fractional matching satisying the small-set blossom inequalities with
vertex saturations at least $\tfrac{1}{1+\eps} \vec{x}$. 
This matching has value at least $\tfrac{1}{1+\eps} \cdot \tfrac12\sum x_v$. 
By \Cref{lem:blossom-rounding}, 
there is thus an integral matching in $G$ of value at least $\tfrac{1-\eps}{1+\eps} \cdot \tfrac12 \sum x_v$. 
\end{proof}

\subsection{Approximation Ratio}
We have shown that the value returned by \Cref{alg:matching} is at most $\mu(G)$;
i.e. it does not overestimate the matching size.
In this subsection, we will show that \textsc{EstimateMatchingSize} achieves an approximation ratio of $2 + O(\eps)$.

\begin{lemma}
\label{lem:apx-factor}
The fractional matching value $\hat{\mu}$ returned by \Cref{alg:matching} satisfies 
\[\hat{\mu} \ge \frac{\mu(G)}{2 + O(\eps)}.\]
\end{lemma}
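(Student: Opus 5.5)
We charge a fixed maximum matching $M^*$ to the final saturation vector, showing $\sum_v x_v \ge (1-O(\eps))\,|M^*|$. Then $\hat\mu = \tfrac{1-\eps}{1+\eps}\cdot\tfrac12\sum_v x_v \ge \mu(G)/(2+O(\eps))$. The plan is to show that every edge $\{u,v\}\in M^*$ has at least one endpoint with $x \ge 1-2\eps$, or else receives comparable total saturation. Since the edges of $M^*$ are disjoint, summing these per-edge bounds gives the claim.

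\textbf{Removed endpoints.} A vertex leaves $U$, in either \textsc{PeelHighDegree} or \textsc{MatchLowDegree}, only when $x_v \ge 1-2\eps$. Since $x$ never decreases, any such vertex ends with $x_v \ge 1-2\eps$. Every $M^*$-edge with at least one endpoint removed at any point therefore contributes at least $1-2\eps$ through that endpoint.

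\textbf{Edges with both endpoints in $L$.} Let $L$ be the active set when \textsc{MatchLowDegree} starts. By \Cref{lem:high-degree-validity} and the loop over $\dmax$, when the high-degree phase ends every vertex of $L$ has degree in $G[L]$ at most $\dmax < \eta$; this fact is not actually needed for the charging. Consider an edge $\{u,v\}\in M^*$ with $u,v\in L$ that is never removed. Then in each of the $1/\eps-1$ iterations both $u$ and $v$ lie in the current $U$. Maximality of the matching returned by \textsc{MaximalMatching} means that in each iteration at least one of $u,v$ is matched and gains $\eps$. So $x_u+x_v \ge \eps(1/\eps-1) = 1-\eps$ from this phase alone.

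The subtle case is an edge where one endpoint, say $u$, is removed midway through \textsc{MatchLowDegree} while $v$ is not. In that case $x_u \ge 1-2\eps$ already, which is covered by the previous paragraph. If instead both endpoints survive every iteration, the maximality argument applies in every iteration. In all cases, each $M^*$-edge $e=\{u,v\}$ satisfies $x_u+x_v \ge 1-2\eps$.

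\textbf{Summation.} The endpoints of distinct $M^*$-edges are disjoint, so
\[
\sum_v x_v \;\ge\; \sum_{\{u,v\}\in M^*}(x_u+x_v) \;\ge\; (1-2\eps)\,\mu(G).
\]
Hence $\hat\mu \ge \tfrac{(1-\eps)(1-2\eps)}{2(1+\eps)}\mu(G) = \mu(G)/(2+O(\eps))$.

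The main obstacle I expect is the bookkeeping that guarantees every vertex not in $L$ was actually removed. This needs the high-degree loop's \texttt{while} over $H$ to terminate with all of $H$ removed, which follows since each $v\in H$ gains at least $\eps/4$ per iteration, together with the fact that vertices in $V\setminus L$ left $U$ only through the saturation test. Once that is in place, the argument reduces to the per-edge charging above.
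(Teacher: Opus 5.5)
Your proposal is correct and matches the paper's proof. Both arguments show $x_u+x_v\ge 1-2\eps$ through the same two cases (an endpoint removed by the saturation test, or the edge surviving all $1/\eps-1$ rounds of \textsc{MatchLowDegree} and being covered by maximality in each round), and then sum over a maximum matching, which is exactly what the paper's \Cref{clm:apx-factor} does. The bookkeeping you flag as the main obstacle is not actually needed: $U$ starts as $V$, and the only way a vertex leaves $U$ is the test $x_v \ge 1-2\eps$, so every vertex of $V\setminus L$ automatically ends with $x_v\ge 1-2\eps$.
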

To prove this, we will make use of the following duality relationship between vertex covers and matchings:

\begin{fact}
\label{clm:apx-factor}
Let $x_v$ be a real vector over the vertices of a graph $G$. 
If $x_u + x_v \ge c$ for every edge $\{u,v\} \in E$, then $\sum_{v \in V} x_v \ge c\mu(G)$.
\end{fact}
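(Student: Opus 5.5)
The statement is the standard weak-duality fact: a vector $\vec{x}$ with $x_u+x_v\ge c$ on every edge has total mass at least $c\,\mu(G)$. The plan is to test the edge constraint only on the edges of a maximum matching, rather than on all of $E$.

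Fix a maximum matching $M^*$ of $G$, so $|M^*|=\mu(G)$. For each $\{u,v\}\in M^*$ the hypothesis gives $x_u+x_v\ge c$. Summing over the edges of $M^*$ yields
\[
\sum_{\{u,v\}\in M^*}(x_u+x_v)\;\ge\; c\,\mu(G).
\]
Because $M^*$ is a matching, its edges are vertex-disjoint, so each vertex appears in at most one term. The left-hand side is therefore exactly $\sum_{v\in V(M^*)}x_v$.

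The remaining step is to pass from $\sum_{v\in V(M^*)}x_v$ to the full sum $\sum_{v\in V}x_v$. This needs the unmatched vertices to contribute nonnegatively, i.e.\ $\sum_{v\notin V(M^*)}x_v\ge 0$. In our application $\vec{x}$ is a saturation vector, and every update in \Cref{alg:peel-high-degree,alg:low-degree-matching} adds a nonnegative amount starting from $\vec{0}$. So $x_v\ge 0$ holds there, and I would state this as the implicit hypothesis of the fact.

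Without nonnegativity the claim is false. Take a single edge with $x_u=x_v=c/2$ and add an isolated vertex with a very negative value: the edge constraint still holds but the total sum can be made arbitrarily small. So nonnegativity is the only real point to make explicit; the rest is immediate.
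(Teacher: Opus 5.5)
Your argument is the same as the paper's: sum the edge constraint over a maximum matching $M^\star$ and use vertex-disjointness of its edges to get $\sum_{v \in V(M^\star)} x_v \ge c\,\mu(G)$. Your nonnegativity caveat is also correct: the paper's step $\sum_{v\in V} x_v \ge \sum_{\{u,v\}\in M^\star}(x_u+x_v)$ silently assumes $x_v \ge 0$ for unmatched vertices, so the fact is false for arbitrary real vectors, but it does hold for the nonnegative saturation vectors to which the paper applies it.
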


\begin{proof} 
Consider a maximum integral matching $M^\star$ in $G$ and let $V(M^\star)$ denote the set 
of its endpoints.
Then
\begin{align*}
\sum_{v \in V} x_v
&\ge \sum_{\{u,v\} \in M^\star} x_u + x_v
\ge c\mu(G). \qedhere
\end{align*}
\end{proof}
Using this we can proceed with the bound on the approximation ratio of our algorithm.
\begin{proof}[Proof of \Cref{lem:apx-factor}]
We will prove that every edge $(u,v) \in G$ satisfies $x_u + x_v \ge 1 - 2\eps$.
Let $L$ be the set of vertices in $U$ when \textsc{MatchLowDegree} is called.
Consider the following 2 cases: 
\begin{enumerate}
\item $u \in V \setminus L$ or $v \in V \setminus L$: In this case, one of $u$ or $v$ is removed by the subroutine
$\textsc{PeelHighDegree}$ (\Cref{alg:peel-high-degree}). Since vertices are only removed under the condition 
$x_v \ge 1 - 2\eps$, we have $x_u + x_v \ge 1 - 2\eps$. 
\item $u, v \in L$: Either $u$ or $v$ is removed due to its value exceeding $1-2\eps$, or the edge $(u,v)$ persists
through the entirety of \textsc{MatchLowDegree}. In the latter case, since every matching $M$ is maximal, 
either $u$ or $v$ must be matched in $M$. Thus in each iteration, $\eps$ is added to the value of $x_u$ or $x_v$.
Since there are $1/\eps - 1$ iterations, we have $x_u + x_v \ge 1-\eps$.
\end{enumerate}
We have shown that in both cases, $x_u + x_v \ge 1 - 2\eps$; thus \Cref{clm:apx-factor} shows that $\sum_{x \in V} x_v \ge (1 - 2\eps)\mu(G)$, which concludes the proof. 
\end{proof}

\section{Implementation With Subset Queries}
\label{sec:subset-query-implementation}
We have given an algorithm that estimates matching size, given an oracle that 
provides degree estimates and an oracle that provides maximal (integral) matchings.
In this section, we will modify this algorithm to work with subset-query access to the graph (\Cref{def:subset-query}).
In \Cref{sec:subset-degree-est} will straightforwardly implement the \textsc{ComputeDegrees} oracle with subset queries.

Then, in \Cref{sec:low-deg-implementation}, we will present a query-efficient alternative to \Cref{alg:low-degree-matching}: instead of computing global matchings, we give a \emph{local computation algorithm (LCA)} that computes the output value $x_v$ at a single vertex $v$ by exploring (in expectation) $\Delta^{O(1/\epsilon)}$ vertices near $v$.
The algorithm works by recursively running the LCA for randomized greedy maximal matching, with each level of recursion representing an iteration of \cref{alg:low-degree-matching}.
This allows us to determine how many times $v$ is matched and thus how much to increase $x_v$.
We use this to sample from the output vector $\vec{x}$ in order to estimate the final matching value.

\subsection{Degree Estimation}
\label{sec:subset-degree-est}
In this section, we will implement the functionality of \textsc{ComputeDegrees} as 
specified in \Cref{assume:compute-degrees}, in the subset-query model. The goal is to
prove the following:
\begin{lemma}[Estimating degrees with subset queries] \label{lem:degree-estimation-subset-queries}
There is an algorithm that uses \newline $O\left( \frac{\log^2 n}{\eps^3} \right)$ subset queries and $O(n \log (n/\eps))$ space, implements the functionality of \Cref{assume:compute-degrees} when it succeeds, and succeeds with high probability.
\end{lemma}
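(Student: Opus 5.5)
We follow the sampling idea sketched in the introduction. Fix a geometric grid of sampling probabilities $p_j = (1+\eps)^{-j}$ for $j = 0,1,\dots,J$ with $J = O(\log n/\eps)$, so that $p_J \le 1/(2n)$. For each $j$ we perform $R = O(\log n/\eps^2)$ independent trials. In each trial we form $S$ by including every vertex of $U$ independently with probability $p_j$, and issue one subset query to obtain $N(S)$. For every $v\in V$ and every $j$ we record the count $c_{v,j}$ of trials whose answer did not contain $v$. This gives $J\cdot R = O(\log^2 n/\eps^3)$ queries in total. The space is $O(n\log(n/\eps))$: the sets $S$ and $N(S)$ are processed one at a time, and we keep only running counters. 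We can even avoid storing all $n\cdot J$ counters by processing the values of $j$ sequentially and keeping one current candidate estimate per vertex.

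For a fixed $v$ with $d=|N(v)\cap U|$, the indicator that $v\notin N(S)$ is Bernoulli with mean $q_j(d)=(1-p_j)^d$, because $v\notin N(S)$ exactly when none of its neighbors in $U$ is sampled. Since $p_j\le 1/2$ except at $j=0$, we have $(1-p)^d = e^{-\Theta(pd)}$, which is strictly monotone in $d$. The estimator is then as follows. Let $j^*(v)$ be the smallest $j$ with $\hat q_{v,j}:=c_{v,j}/R \ge 1/4$, call it the ``informative'' index. Output $d_v = \ln \hat q_{v,j^*}/\ln(1-p_{j^*})$, and output $d_v=0$ if $\hat q_{v,0}$-type evidence (for instance $j=J$ never saw $v$ adjacent to a sample, or, more simply, a dedicated query with $S=U$) shows that $v\notin N(U)$. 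The case $d=0$ is handled exactly by the single query $N(U)$, so the $(1\pm\eps)$ guarantee holds trivially there.

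The main step, and the one I expect to be the obstacle, is showing that the inversion is multiplicatively accurate. By a Chernoff bound with $R=\Theta(\log n/\eps^2)$, every $\hat q_{v,j}$ whose true mean $q$ lies in a constant window, say $[1/8, 7/8]$, is within $\pm c\eps$ additively of $q$, and hence within a $(1\pm O(\eps))$ factor of $q$. This holds simultaneously for all $v,j$ by a union bound over $nJ$ events. Multiplicative accuracy of $q$ is then converted into accuracy of $d$: since $\ln q = d\ln(1-p)$ and $|\ln q| = \Theta(1)$ in the window, an additive $O(\eps)$ error in $\ln q$ becomes a $(1\pm O(\eps))$ multiplicative error in $d$. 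The remaining points are that $j^*(v)$ lands in the window and that consecutive grid values change $q$ only slightly. For $d\ge1$, at the true threshold index the mean crosses $1/4$. Because $p_{j}/p_{j+1}=1+\eps$, we have $q_{j+1}\ge q_j$ and $q_j \ge q_{j+1}^{1+\eps}$. So the first index whose empirical value exceeds $1/4$ has true mean in $[1/8,1/2]$ with high probability, and indices with mean below $1/8$ are not selected. A high-mean index such as $p=1/(2n)$ always exists, because $d\le n$ gives $(1-1/(2n))^n\ge 1/2$.

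Rescaling $\eps$ by a constant gives $d_v=(1\pm\eps)|N(v)\cap U|$ for all $v$ with probability $1-1/\poly(n)$. This matches \Cref{assume:compute-degrees} with the stated query and space bounds.
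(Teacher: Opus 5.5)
Your plan matches the paper's: a geometric grid $p_j=(1+\eps)^{-j}$, one subset query per sample giving $\Bernoulli((1-p)^{d_v})$ marginals for all $v$ at once, Hoeffding with additive error $\Theta(\eps)$, a first-crossing index, and the inversion $\ln\hat q/\ln(1-p)$. However, the step that bounds the true mean at the selected index from above is wrong as written.

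You assert $q_j\ge q_{j+1}^{1+\eps}$, which with $p=p_{j+1}$ says $1-(1+\eps)p\ge(1-p)^{1+\eps}$. Bernoulli's inequality gives the reverse, $(1-p)^{1+\eps}\ge 1-(1+\eps)p$, strictly for $p\in(0,1)$. Concretely, for $d\ge1$ you have $q_0=0<q_1^{1+\eps}$. The true inequality $q_j\le q_{j+1}^{1+\eps}$ only lower-bounds $q_{j+1}$, so it cannot give $q_{j^*}\le 1/2$. That upper bound is essential. Since $\ln(1-p_{j^*})=\ln q_{j^*}/d$, the relative error of your output is $|\ln\hat q_{v,j^*}-\ln q_{j^*}|/|\ln q_{j^*}|$, which is uncontrolled unless $q_{j^*}$ is bounded away from $1$. (Your remark that $p_j\le 1/2$ for $j\ge 1$ is also false for the first $\approx \ln 2/\eps$ indices, but nothing depends on it.)

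The repair is the paper's comparison-index argument, which avoids bounding jumps between consecutive indices altogether.
\begin{itemize}
\item For $d\ge1$, let $k$ be the first index with $p_k\le 1/(3d)$. Run the grid down to $p_J\le 1/(4n)$, as the paper does, so that $k$ exists.
\item Then $q_k\ge 1-dp_k\ge 2/3$, so w.h.p.\ $\hat q_{v,k}\ge 2/3-c\eps\ge 1/4$, and hence $j^*\le k$.
\item Since $k\ge 1$ and $p_{k-1}>1/(3d)$, you get $dp_k>1/(3(1+\eps))$.
\item By monotonicity, $q_{j^*}\le q_k\le e^{-dp_k}\le e^{-1/(3(1+\eps))}$, which is at most $e^{-1/6}<7/8$ for $\eps\le 1$.
\end{itemize}
So $q_{j^*}$ lies in your window $[1/8,7/8]$. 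The rest of your argument then goes through as in the paper: the additive-to-multiplicative conversion with $|\ln q|=\Theta(1)$, the union bound over $nJ$ events, the $JR=O(\log^2 n/\eps^3)$ query count, and processing the indices $j$ sequentially for $O(n\log(n/\eps))$ space.
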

To simplify the proof, we abstract away the details concerning the graph as follows.
The goal is to estimate a hidden value $d$ representing the (induced) degree of a vertex; i.e. the number of neighbors it has in some subset.
We are allowed to learn about $d$ by sampling independently from the distribution family $\left\{\Bernoulli_{(1-p)^d}\right\}_{p \in [0, 1]}$.
This is equivalent to sampling every vertex with a probability $p$, and checking if $v$ is a neighbor of any of them, which happens with probability $1 - (1 - p)^d$.
We prove that such Bernoulli samples, represented by $Y_i^{(r)}$ in \cref{alg:geo-sampling}, can be used to estimate the value of $d$ up to a multiplicative factor of $(1 + \epsilon)$.

\subsubsection{Degree Estimation Through Geometric Sampling}

%
\begin{algorithm}
    \caption{Degree Estimation Through Geometric Sampling}
    \label{alg:geo-sampling}
    \KwIn{
    Domain size $n$, accuracy parameters $\epsilon$ and $\delta$,  access to independent samples from distributions $\mc{D}_p := \Bernoulli((1 - p)^d)$ for $p \in [0, 1]$, where $d \leq n$ is the hidden value}
    \KwOut{
    An estimate for $d$
    }
    \BlankLine
    Let $L \gets \ceil{\log_{1 + \epsilon} 4n }$, 
    $R \gets 500 \frac{\log L + \log(1/\delta)}{\epsilon^2}$,
    and $p_i\gets (1 + \epsilon)^{-i}$

    \For{$i \in [L]$ \textnormal{and} $r \in [R]$}{

        Draw $Y_i^{(r)} \sim \mc{D}_{p_i}$ independently

        $\displaystyle Z_i \gets \frac{1}{R} \sum_{r  = 1}^R Y_i^{(r)}$
    }

    $i^* \gets \min\{i \mid Z_i \geq 1/2\}$

    \Return{$\displaystyle \dh := \frac{\ln Z_{i^*}}{\ln (1 - p_{i^*})}$}
\end{algorithm}

\begin{lemma} \label{lem:geo-sampling}
    The output of \cref{alg:geo-sampling} satisfies $\dh = (1 \pm \epsilon) d$, with probability $1 - \delta$.
\end{lemma}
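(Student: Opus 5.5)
Let $q_i := (1-p_i)^d$ denote the true mean of $Z_i$. The plan is to show two things on a single good event: (i) every $Z_i$ estimates $q_i$ with small \emph{additive} error, and (ii) the selected index $i^*$ has $q_{i^*}$ bounded away from $0$ and $1$ by constants. Together these turn the additive error in $Z_{i^*}$ into a multiplicative error in $\ln Z_{i^*}$, and hence in $\dh$. I assume $d\ge 1$; if $d=0$ then every $Z_i=1$ deterministically, $\ln Z_{i^*}=0$ and $\dh=0$.

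\textbf{Concentration.} Since each $Y_i^{(r)}$ is Bernoulli$(q_i)$, Hoeffding gives $\Pr[|Z_i-q_i|>\gamma]\le 2e^{-2R\gamma^2}$. Take $\gamma=c\eps$ for a small absolute constant $c$ (say $c=1/20$). With $R=500(\log L+\log(1/\delta))/\eps^2$ this probability is at most $\delta/L$, and a union bound over the $L$ indices shows that with probability $1-\delta$, $|Z_i-q_i|\le c\eps$ holds for all $i$ simultaneously. The rest of the argument is deterministic on this event (assume $\eps$ is at most a small constant).

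\textbf{Locating $i^*$.} The sequence $q_i$ is increasing in $i$, because $p_i$ decreases. At the last index, $p_L\le 1/(4n)\le 1/(4d)$, so $q_L\ge 1-dp_L\ge 3/4$; hence $Z_L\ge 1/2$ and $i^*$ is well defined. Because $Z_{i^*}\ge 1/2$, we get $q_{i^*}\ge 1/2-c\eps\ge 0.45$. For the upper side:
\begin{itemize}
\item If $i^*=1$, then $p_1=1/(1+\eps)$, so $q_1=(\eps/(1+\eps))^d\le \eps$.
\item Otherwise $Z_{i^*-1}<1/2$, so $q_{i^*-1}<0.55$. Since $1-p_{i^*}=1-p_{i^*-1}/(1+\eps)$, and using $1-a/(1+\eps)\le (1-a)^{1/(1+\eps)}$ (concavity of $t\mapsto t^{1/(1+\eps)}$), we get $q_{i^*}\le q_{i^*-1}^{1/(1+\eps)}\le 0.55^{1/(1+\eps)}\le 0.6$.
\end{itemize}
So $q_{i^*}\in[0.45,0.6]$, and $Z_{i^*}$ lies in a constant interval inside $(0,1)$ as well.

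\textbf{Converting to $\dh$.} Since $\dh=\ln Z_{i^*}/\ln(1-p_{i^*})$ and $d=\ln q_{i^*}/\ln(1-p_{i^*})$, we have $\dh/d=\ln Z_{i^*}/\ln q_{i^*}$. Write $Z_{i^*}=q_{i^*}(1+\theta)$ with $|\theta|\le c\eps/0.45$. Then $|\ln Z_{i^*}-\ln q_{i^*}|\le 2|\theta|$, while $|\ln q_{i^*}|\ge \ln(1/0.6)\approx 0.51$. It follows that $|\dh/d-1|\le 2|\theta|/0.51\le \eps$ for $c$ small enough.

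I expect the main obstacle to be the upper bound $q_{i^*}\le 0.6$. This is where the geometric spacing $(1+\eps)$ has to keep consecutive $q_i$ within a constant factor in logarithm. The case $i^*=1$, together with the constant bookkeeping between Hoeffding's $c\eps$ and the final multiplicative $\eps$, needs some care. If $500$ turns out insufficient for the constants used here, I would tighten the interval for $q_{i^*}$ rather than change the algorithm.
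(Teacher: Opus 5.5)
\textbf{A step fails: the bound $q_{i^*}\le 0.6$ rests on a reversed inequality.}

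Your skeleton is the same as the paper's proof. You use Hoeffding with additive error $\eps/20$ on all $Z_i$ at once, trap $q_{i^*}$ in a fixed subinterval of $(0,1)$, and turn additive error into multiplicative error through $\ln$. The concentration step, the lower bound $q_{i^*}\ge 0.45$, the well-definedness of $i^*$ via $q_L\ge 3/4$, the $d=0$ case, and the final conversion are all fine.

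The problem is exactly the step you flagged: the inequality $1-a/(1+\eps)\le(1-a)^{1/(1+\eps)}$ is false. Set $\alpha=1/(1+\eps)$. Concavity of $t\mapsto t^{\alpha}$ places the graph \emph{below} its tangent line at $t=1$, which gives $(1-a)^{\alpha}\le 1-\alpha a$, the opposite direction. At $a=1$ the left side of your inequality is $\eps/(1+\eps)$ and the right side is $0$. Equivalently, Bernoulli's inequality gives $(1-p/(1+\eps))^{1+\eps}>1-p$ for $p\in(0,1]$. So for every $d\ge 1$ one actually has $q_{i^*}>q_{i^*-1}^{1/(1+\eps)}$. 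Your bound $q_{i^*}\le 0.55^{1/(1+\eps)}\le 0.6$ is therefore unsupported, and so is the bound $|\ln q_{i^*}|\ge 0.51$ used in your last step.

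\textbf{How to repair it.} The conclusion you need, that $q_{i^*}$ stays a constant away from $1$, is still true. It has to come from a lower bound on $dp$ fed into $e^{-x}$, not from a power comparison of consecutive $q_i$.

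For example, take $i^*\ge 2$. Minimality gives $q_{i^*-1}<1/2+c\eps\le 0.55$. Since $(1-p)^d\ge 1-dp$, this forces $dp_{i^*-1}>0.45$. Hence $q_{i^*}\le e^{-dp_{i^*}}=e^{-dp_{i^*-1}/(1+\eps)}<e^{-0.45/(1+\eps)}\approx 0.66$ for $\eps\le 0.1$. Then $|\ln q_{i^*}|\ge 0.4$, and your final estimate becomes $|\dh/d-1|\le 2(c\eps/0.45)/0.4<\eps$ for $c=1/20$. The case $i^*=1$ is vacuous, since $q_1\le\eps<0.45$.

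The paper gets the upper bound differently. It lets $j$ be the first index with $p_j\le 1/(3d)$, so $dp_j\ge 1/(3(1+\eps))$ and $q_j\ge 1-dp_j\ge 2/3$. Then $Z_j\ge 1/2$ forces $i^*\le j$. Monotonicity of $q_i$ gives $q_{i^*}\le q_j\le e^{-dp_j}\le e^{-1/(3(1+\eps))}\le 3/4$. After that it uses that $\ln$ is $4$-Lipschitz on $[1/4,1]$.
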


To provide some intuition, we note that since each $Y_i^{(r)}$ is a Bernoulli random variable with mean $q_i := (1 - p_i)^d$, $Z_i$ provides an estimate of $q_i$.
Furthermore, for any $i \in [L]$, it holds that $\frac{\ln q_i}{\ln(1 - p_i)} = d$.
The choice of $i^*$ is such that $q_{i^*} = \Theta(1)$ and equivalently $p_{i^*} = \Theta(\frac{1}{d})$.
As a result, 
$\frac{\ln Z_{i^*}}{\ln (1 - p_{i^*})}$ provides an accurate estimate of $d$.
We formalize the proof below. 
The analysis uses an additional parameter, $\lambda = \epsilon/20$.

\begin{claim} \label{clm:geo-sampling-hoeffding}
    It holds that $\card{Z_i - q_i} \leq \lambda$ for all $i$, with probability $1 - \delta$.
\end{claim}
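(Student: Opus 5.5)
The claim follows from Hoeffding's inequality for each fixed $i$, followed by a union bound over the $L$ indices. Fix $i \in [L]$. By construction of \cref{alg:geo-sampling}, the samples $Y_i^{(1)}, \dots, Y_i^{(R)}$ are independent Bernoulli random variables with common mean $q_i = (1-p_i)^d$. Hence $Z_i$ is an empirical mean of $R$ independent $[0,1]$-valued variables with $\E[Z_i] = q_i$.

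Hoeffding's inequality then gives
\[
\Pr\bigl[\card{Z_i - q_i} > \lambda\bigr] \le 2\exp\bigl(-2R\lambda^2\bigr).
\]
Substitute $\lambda = \eps/20$ and $R = 500\,\frac{\log L + \log(1/\delta)}{\eps^2}$. The exponent becomes
\[
2R\lambda^2 = 2 \cdot 500 \cdot \frac{\log L + \log(1/\delta)}{\eps^2}\cdot \frac{\eps^2}{400} = \tfrac{5}{2}\bigl(\log L + \log(1/\delta)\bigr).
\]
The failure probability for a single $i$ is therefore at most $2\,(\delta/L)^{5/2}$, after converting $\log$ to $\ln$, which only changes constants.

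We want this to be at most $\delta/L$, and that reduces to checking $2(\delta/L)^{3/2} \le 1$. This holds whenever $\delta/L \le 2^{-2/3}$. This is the one step that needs any care: the regime where $L$ is very small and $\delta$ is close to $1$. Here we use $L = \ceil{\log_{1+\eps} 4n} \ge 2$ for $n \ge 1$ and $\eps \le 1$. We may also assume $\delta \le 1/2$ without loss of generality, since the claim is trivial for larger $\delta$ only up to constants. Alternatively, the constant $500$ leaves ample slack to absorb the factor $2$ directly.

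A union bound over the $L$ indices bounds the total failure probability by $L \cdot \delta/L = \delta$. So with probability at least $1-\delta$, we have $\card{Z_i - q_i} \le \lambda$ simultaneously for all $i \in [L]$, which is the claim.
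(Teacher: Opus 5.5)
Your proof is correct and follows the paper's argument exactly. For each fixed $i$, Hoeffding's inequality gives $2e^{-2R\lambda^2} \le \delta/L$, and a union bound over the $L$ indices finishes the claim; your computation $2R\lambda^2 = \tfrac52(\log L + \log(1/\delta))$ together with $L \ge 2$ (for $\eps \le 1$) supplies the arithmetic the paper leaves implicit.

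Since $L \ge 2$ and $\delta \le 1$ already give $\delta/L \le 1/2 < 2^{-2/3}$, the two hedging remarks after that step are unnecessary. The last one is also wrong: no constant in place of $500$ can absorb the factor $2$ when $\delta/L$ is close to $1$.
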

\begin{proof}
    This follows from a direct application of Hoeffding's inequality.
    $Z_i$ is the average of $R$ independent $0$-$1$ variables, with an expected value of $q_i$.
    Therefore, for each $i$ it holds:
    $$
    \Pr(\card{Z_i - q_i} > \lambda) \leq 2e^{-2R\lambda^2} \leq \frac{\delta}{L}.
    $$
    Hence, taking the union bound over $i \in [L]$ implies that with probability $1 - \delta$, it holds that $\card{Z_i - q_i} \leq \lambda$ for all $i$.
\end{proof}

\begin{claim}
    Assuming the event in \cref{clm:geo-sampling-hoeffding} holds, we have that $q_{i^*} \in [1/4, 3/4]$.
\end{claim}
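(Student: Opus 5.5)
The plan is to use two facts: $q_i$ is non-increasing in $p_i$, hence non-decreasing in $i$; and consecutive values of $q_i$ are close to each other. We then argue about the first index where $Z_i$ crosses $1/2$. Throughout we condition on the event of \cref{clm:geo-sampling-hoeffding}, so $|Z_i - q_i| \le \lambda = \eps/20$ for all $i$.

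\emph{Lower bound.} Since $Z_{i^*} \ge 1/2$, we get $q_{i^*} \ge 1/2 - \lambda \ge 1/4$.

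\emph{Existence of $i^*$.} We first check that some index has $Z_i \ge 1/2$. At $i = L$ we have $p_L \le 1/(4n)$. Bernoulli's inequality then gives $q_L = (1-p_L)^d \ge 1 - d p_L \ge 3/4$, so $Z_L \ge 3/4 - \lambda \ge 1/2$.

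\emph{Upper bound.} There are two cases.
\begin{itemize}
\item If $i^* = 1$, then $p_1 = 1/(1+\eps)$, and $q_1 = (\eps/(1+\eps))^d$. This is at most $\eps/(1+\eps) \le 3/4$ when $d \ge 1$, using $\eps \le 1/3$ as in the earlier lemmas. The degenerate case $d = 0$ gives $q_i = 1$ for all $i$. There the estimate is $\ln 1/\ln(1-p) = 0 = d$, so this case should be handled separately or excluded. I would note this explicitly.
\item If $i^* > 1$, then $Z_{i^*-1} < 1/2$, so $q_{i^*-1} < 1/2 + \lambda$. It remains to show $q_{i^*} \le q_{i^*-1} + \text{small}$.
\end{itemize}

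\emph{Ratio of consecutive terms (main obstacle).} Write $q_i = e^{d\ln(1-p_i)}$. Since $p_i = p_{i-1}/(1+\eps)$, we have $-\ln(1-p_i) \ge p_i$ and $-\ln(1-p_{i-1}) \le p_{i-1}/(1-p_{i-1})$. Hence $\ln q_i / \ln q_{i-1} \ge (1-p_{i-1})/(1+\eps)$, and therefore
\[ q_i \le q_{i-1}^{(1-p_{i-1})/(1+\eps)}. \]
With $q_{i-1} \le 1/2 + \lambda$ and the exponent at least $(1-p_{i-1})/(1+\eps)$, we want this to be at most $3/4$. The difficulty is that when $p_{i-1}$ is near $1$ the exponent is poor. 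So I would split once more.
\begin{itemize}
\item If $p_{i-1} \le 1/2$, the exponent is at least $1/(2(1+\eps)) \ge 3/8$, giving $q_i \le (0.52)^{3/8} \approx 0.78$. This is slightly too big, so the bound must be sharpened. One option is to use $-\ln(1-p) \le p + p^2$ for $p \le 1/2$, which gives an exponent near $1/(1+\eps)$ whenever $p_{i-1}$ is small. Another is to observe that $p_{i-1}$ large forces $q_{i-1}$ tiny unless $d$ is small.
\item For small $d$ with large $p$, compute directly: with $d \ge 1$ and $p_{i} \ge 1/(1+\eps)^2 \ge 1/2$, we get $q_i \le 1/2$ anyway.
\end{itemize}
So the final case analysis splits on whether $p_{i^*}$ is at least $1/2$ (then $q_{i^*} \le 1/2 \le 3/4$ directly) or below $1/2$. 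In the latter case we use the tighter logarithm estimate to show $q_{i^*}/q_{i^*-1}$ is bounded by $q_{i^*-1}^{-O(\eps)}$, so $q_{i^*} \le (1/2+\lambda)\cdot(1+O(\eps)) \le 3/4$ for $\eps \le 1/3$. I expect this constant-chasing in the last step to be the only delicate part; everything else is monotonicity plus the Hoeffding event.
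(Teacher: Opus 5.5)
Your route differs from the paper's and can be completed, but the last step has to be done differently from your sketch. You use the minimality of $i^*$ through its predecessor ($Z_{i^*-1}<1/2$, so $q_{i^*-1}<1/2+\lambda$). You then have to control the one-step growth from $q_{i^*-1}$ to $q_{i^*}$, which is what forces the logarithm estimates and the split on $p_{i^*}$.

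The paper avoids consecutive terms entirely. It fixes the reference index $j=\min\{j : p_j\le 1/(3d)\}$, so that $\frac{1}{3(1+\eps)}\le dp_j\le\frac13$. Bernoulli's inequality gives $q_j\ge 2/3$, hence $Z_j\ge 1/2$ and $i^*\le j$; this also settles the existence of $i^*$. Monotonicity of $q_i$ in $i$ then gives $q_{i^*}\le q_j\le e^{-dp_j}\le e^{-1/(3(1+\eps))}$.

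That argument is shorter and needs no case analysis. On the other hand, its last inequality $e^{-1/(3(1+\eps))}\le 3/4$ tacitly needs $\eps$ below about $0.16$, whereas your argument, once the constants are checked, works for all $\eps\le 1/3$. You are also right that $d=0$ must be excluded; the paper's choice of $j$ silently assumes $d\ge 1$ as well.

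The final step does not go through as you state it, i.e. ``$q_{i^*}/q_{i^*-1}\le q_{i^*-1}^{-O(\eps)}$, so $q_{i^*}\le(1/2+\lambda)(1+O(\eps))$''. There are two problems:
\begin{itemize}
\item A factor $q_{i^*-1}^{-c}$ cannot be bounded without a lower bound on $q_{i^*-1}$, which you do not have.
\item The exponent your $p+p^2$ estimate actually yields, $1/\bigl((1+\eps)(1+p_{i^*-1})\bigr)$, falls short of $1$ by a constant, not by $O(\eps)$, because $p_{i^*-1}$ can be close to $(1+\eps)/2$. So the constants must be checked explicitly, not absorbed into $1+O(\eps)$.
\end{itemize}

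Here is the repair. Write $q_{i^*}=q_{i^*-1}^{\,r}$ with $r=\ln(1-p_{i^*})/\ln(1-p_{i^*-1})$. Since $x\mapsto x^r$ is increasing, $q_{i^*}\le(1/2+\lambda)^r$. In the case $p_{i^*}<1/2$ you need $-\ln(1-p)\le p+p^2$ beyond $[0,1/2]$; it does hold on all of $[0,2/3]$. This gives
\[
r\;\ge\;\frac{1}{(1+\eps)(1+p_{i^*-1})}\;>\;\frac{2}{(1+\eps)(3+\eps)}\;\ge\;\frac{9}{20}\qquad(\eps\le 1/3),
\]
and therefore $q_{i^*}\le(31/60)^{9/20}\approx 0.743<3/4$. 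Together with your case $p_{i^*}\ge 1/2$, where $q_{i^*}\le 1/2$ for $d\ge 1$, this completes the argument, though with little slack.
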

\begin{proof}    
    First, note that $q_{i^*} \geq \frac{1}{4}$ holds,
    since 
    $$q_{i^*} \geq Z_{i^*} -\lambda\geq \frac{1}{2} - \lambda \geq \frac{1}{4}.$$
    To upper-bound $q_{i^*}$, let $j = \min\{j \mid p_j \leq \frac{1}{3d}\}$.
    That is, $\frac{1}{3(1 + \epsilon)}\leq d\cdot p_j \leq \frac{1}{3}$.
    As a result, we have
    $$
    q_j = (1 - p_j)^d \geq 1 - d\cdot p_j \geq \frac{2}{3}.
    $$
    Additionally, we have that $Z_j \geq \frac{2}{3}- \lambda \geq \frac{1}{2}$, and hence $i^* \leq j$.
    Therefore, it holds
    $$
    q_{i^*} \leq q_j = (1 - p_j)^d \leq e^{-d \cdot p_j} \leq e^{-1/3(1+\epsilon)} \leq \frac{3}{4},
    $$
    which concludes the proof.
\end{proof}

\begin{claim}
    It holds that $\left\lvert\frac{\ln Z_{i^*}}{\ln(1 - p_{i^*})} - d\right\lvert \leq \epsilon d$, with probability $1 - \delta$.
\end{claim}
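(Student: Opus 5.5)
We condition on the event of \cref{clm:geo-sampling-hoeffding}, which holds with probability $1-\delta$. By the previous claim, $q_{i^*} \in [1/4,3/4]$, and $|Z_{i^*} - q_{i^*}| \le \lambda = \eps/20$. Since $d = \frac{\ln q_{i^*}}{\ln(1-p_{i^*})}$ exactly, the error we must control is
\[
\left|\frac{\ln Z_{i^*} - \ln q_{i^*}}{\ln(1-p_{i^*})}\right|,
\]
and we will show this is at most $\eps d$.

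\textbf{Bounding the numerator.} Both $Z_{i^*}$ and $q_{i^*}$ lie in $[1/4-\lambda, 1] \subseteq [1/5,1]$; in fact $Z_{i^*}\ge 1/2$ by the definition of $i^*$. By the mean value theorem applied to $\ln$ on this interval,
\[
|\ln Z_{i^*} - \ln q_{i^*}| \le 5\lambda = \eps/4.
\]

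\textbf{Bounding the denominator.} Here we use $q_{i^*}\le 3/4$. Write $|\ln(1-p_{i^*})| = |\ln q_{i^*}|/d$. Since $q_{i^*} \le 3/4$, we have $|\ln q_{i^*}| \ge \ln(4/3) \ge 0.28$, so $|\ln(1-p_{i^*})| \ge 0.28/d$. This route avoids having to compare $p_{i^*}$ with $1/d$ directly.

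\textbf{Combining.} The error is at most $(\eps/4)\cdot d/0.28 \le \eps d$, as required.

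\textbf{Edge cases.} The main thing to check is that everything is well defined. First, $i^*$ must exist. If $d \ge 1$, the index $j$ from the previous proof satisfies $j \le L$, because $p_L \le 1/(4n) \le 1/(3d)$. Second, $p_{i^*} < 1$ is needed so that $\ln(1-p_{i^*}) \neq 0$. If $i$ starts at $1$, then $p_1 = 1/(1+\eps) < 1$ and this holds.

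The case $d=0$ requires separate care, since then $q_i = 1$ and $\ln q_i = 0$. Every $Y$ equals $1$, so $Z_{i^*} = 1$ and $\dh = 0$ exactly, which satisfies the claim trivially. The argument above implicitly assumed $d \ge 1$ when dividing by $d$.

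I expect the denominator step to be the only nonroutine one: the key observation is that $q_{i^*}\le 3/4$ forces $|\ln(1-p_{i^*})| = \Omega(1/d)$.
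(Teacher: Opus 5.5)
Your proof is correct and follows essentially the same route as the paper. Both rewrite the error as $|\ln Z_{i^*}-\ln q_{i^*}|/|\ln(1-p_{i^*})|$, bound the numerator by the mean value theorem using the lower bounds on $Z_{i^*}$ and $q_{i^*}$, and convert the denominator via $|\ln(1-p_{i^*})| = |\ln q_{i^*}|/d$ together with $q_{i^*}\le 3/4$; only the constants differ ($5\lambda$ and $0.28$ versus the paper's $4\lambda$ and $1/4$). Your separate treatment of $d=0$ and your check that $i^*$ exists with $p_{i^*}<1$ are worthwhile additions, since the paper's proof implicitly assumes $d\ge 1$ when it uses $q_{i^*}\le 3/4$ and divides by $|\ln q_{i^*}|$.
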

\begin{proof}
    Recall that $\frac{\ln q_{i^*}}{\ln (1 - p_{i^*})} = d$,
    and with probability $1 - \delta$,
    we have $\card{Z_{i^*}- q_{i^*}} \leq \lambda$ and $q_{i^*} \in [1/4, 3/4]$.
    Together, these imply
    \begin{align}
        \left\lvert\frac{\ln Z_{i^*}}{\ln(1 - p_{i^*})} - d\right\lvert 
        &= \left\lvert\frac{\ln Z_{i^*} - \ln q_{i^*}}{\ln(1 - p_{i^*})}\right\lvert  \notag\\
        &\leq \frac{4\lambda}{\card{\ln(1 - p_{i*})}} \label{eq:geo-sampling-eq1}\\
        &= \frac{ 4\lambda d }{ \card{\ln(q_i)}} \notag\\
        &\leq 16\lambda d \label{eq:geo-sampling-eq2} \\
        &\leq \epsilon d \notag.
    \end{align}
    Here, the two equalities follow from $\frac{\ln q_{i^*}}{\ln (1 - p_{i^*})} = d$, \eqref{eq:geo-sampling-eq1} follows from $Z_{i^*},  q_{i^*} \geq \frac{1}{4}$ and $\card{Z_{i^*} - q_{i^*}} \leq \lambda$, and \eqref{eq:geo-sampling-eq2} follows from $q_{i^*} \leq \frac{3}{4}$.
    This concludes the proof of the claim and \cref{lem:geo-sampling}.
\end{proof}
\subsubsection{Proof of \Cref{lem:degree-estimation-subset-queries}}
We now implement geometric sampling with subset queries, with the degree vector as the hidden values,
and we bound the query and space complexity. 
\begin{proof}[Proof of \Cref{lem:degree-estimation-subset-queries}]
First we will bound the space requirements of \Cref{alg:geo-sampling}, which we will call with $\delta := n^{-{(c+1)}}$ for some constant $c$.
The counters $i, r,$ and $Z_i$ can all be stored in $O(\log \log (n/\delta) + \log(1/\eps))$ space,
while $\hat{d}$ itself can require up to $\log((1+\eps)n) = O(\log n)$ space. 
Thus we can bound the space by $O(\log (n/\eps))$.

Observe that if \Cref{alg:geo-sampling} is run for each vertex $v$ with hidden value
$|N(v) \cap U|$ and failure probability $\delta := n^{-{(c+1)}}$, then it implements the behavior specified in \Cref{assume:compute-degrees} with probability $\ge 1 - n^{-c}$ by a union bound over vertices.
We will implement the sampling of $\Bernoulli((1 - p)^d)$ in parallel for all vertices
by sampling each member of $U$ independently with probability $p$.
Let $S$ be the set of sampled vertices and let $T$ be the set returned by the subset query. 
We have for each $v$:
\[\Pr[v \not\in T] = \prod_{u \in N(v) \cap U} u \not\in S;\]
thus the vector $(\one[v \not\in T])_{v \in V}$ has marginal distributions 
$\Bernoulli((1 - p)^{|N(v) \cap U|})$.

We run \Cref{alg:geo-sampling} in parallel for all vertices,
so the space requirement becomes $O(n\log (n/\eps))$.
The number of samples required for each vertex is $LR = O\left(\frac{\log^2 n}{\eps^2\log(1+\eps)} \right) = O\left( \frac{\log^2 n}{\eps^3} \right)$,
so this is the number of subset queries required.
\end{proof}

\newcommand{\EdgeValue}{\textsc{EdgeValue}}
\newcommand{\VertexValue}{\textsc{VertexValue}}

\subsection{Low-Degree Phase Implementation}
\label{sec:low-deg-implementation}

This section is devoted to the implementation of the low-degree integral matching phase (\cref{alg:low-degree-matching}).  Our implementation of this phase happens to use only vertex queries (which are a special case of subset queries), but we do not make use of this fact. We prove the following lemma.

\begin{lemma} \label{lem:low-deg-implementation}
Given a graph $G$ with maximum degree $\Delta$, active set $U \subseteq V(G)$, saturation vector $\vec{x}$, and an approximation parameter $\epsilon > 0$, let $v$ be a vertex chosen uniformly at random.
There is an algorithm that computes the update saturation value of $v$ as in \cref{alg:low-degree-matching} using $\Delta^{O(1/\epsilon)}$ vertex queries in expectation,
where the expectation is over the randomness of the algorithm and the choice of $v$.
\end{lemma}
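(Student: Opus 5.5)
We implement \cref{alg:low-degree-matching} with each call to \textsc{MaximalMatching} instantiated as \emph{random greedy maximal matching} (RGMM). Fix independent random ranks $\pi_1,\dots,\pi_T$ on the edges of $G$, with $T = 1/\eps - 1$. Iteration $t$ computes the greedy maximal matching $M_t$ with respect to $\pi_t$ on $G[U_t]$, where $U_t$ is the active set at the start of iteration $t$. Here $U_1 = U$, and $U_{t+1}$ removes from $U_t$ every vertex whose saturation $x_w + \eps\cdot|\{s\le t : w\in M_s\}|$ reaches $1-2\eps$. The updated saturation of $v$ is $x_v + \eps\cdot|\{t : v \in M_t\}|$, so it suffices to decide $v\in M_t$ for each $t\le T$. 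Ranks are generated lazily and stored in a hash map, so all recursive calls see a consistent $\pi_t$. We may assume $\vec{x}$ and $U$ are known to the algorithm, since they are maintained globally. A vertex query on $w$ returns $N(w)$, and we intersect it with the current active set locally.

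We use the standard recursive oracles. $\textsc{VertexValue}_t(w)$ answers ``is $w$ matched in $M_t$ and active in $U_t$.'' $\textsc{EdgeValue}_t(e)$ answers ``is $e\in M_t$.'' To evaluate $\textsc{EdgeValue}_t(\{a,b\})$, we first check $a,b\in U_t$. Then we query $N(a)$ and $N(b)$, list the incident edges of rank below $\pi_t(e)$ in increasing rank order, and recursively test them with $\textsc{EdgeValue}_t$, stopping at the first one in $M_t$. The edge $e$ is in $M_t$ iff none of them is. Deciding $w\in U_t$ requires, for each $s<t$, whether $w\in M_s$, which is a call to $\textsc{VertexValue}_s(w)$. $\textsc{VertexValue}_t(w)$ in turn scans the edges at $w$ in increasing $\pi_t$-order and calls $\textsc{EdgeValue}_t$ on each. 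This gives a nested recursion of depth $T$: level $t$ calls level $t$ for greedy exploration and levels $s<t$ for membership in $U_t$.

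The key step is the expected query bound. We invoke the known analysis of the RGMM local algorithm (Yoshida--Yamamoto--Ito, and its refinements by Behnezhad): for a random vertex, or averaged over a random edge, the expected number of recursive \textsc{EdgeValue} calls under a uniformly random ranking is $O(\Delta)$, and for a fixed vertex it is $\Delta^{O(1)}$. Level $t$ sees the graph $G[U_t]$ with degree at most $\Delta$. Each edge explored at level $t$ triggers membership checks for its two endpoints at all levels $s<t$. Bounding crudely, each level multiplies the cost by a factor $\Delta^{O(1)}$. An induction on $t$ with hypothesis ``$\E[\text{cost of } \textsc{VertexValue}_t(w)] \le \Delta^{c t}$ for every vertex $w$'' then yields total cost $\Delta^{O(T)} = \Delta^{O(1/\eps)}$. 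With memoization, vertex queries are charged per distinct explored vertex, which is at most the number of recursive calls.

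The main obstacle is that $U_t$ depends on $\pi_1,\dots,\pi_{t-1}$, and the recursion at level $t$ queries arbitrary (non-random) vertices at lower levels. The clean $O(\Delta)$ average-case bound therefore does not apply directly. I would handle this in three steps. First, condition on $\pi_1,\dots,\pi_{t-1}$; this fixes $G[U_t]$, and $\pi_t$ remains uniformly random and independent. Second, use a worst-case-over-start-vertex bound for RGMM, e.g.\ that the expected exploration from any fixed vertex is at most $\Delta^{O(1)}$ (or simply $e^{O(\Delta)}$-free bounds via counting decreasing-rank paths: a path of length $k$ has probability $1/k!$ of being rank-decreasing, and there are at most $\Delta^k$ such paths, giving expected size $\sum_k \Delta^k/k! \le e^{\Delta}$, which is too weak). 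So I would first try to cite the Behnezhad worst-case bound, and apply it per level by linearity of expectation over the explored set. If it is unavailable, I would fall back to a more careful path-counting argument that yields a $\Delta^{O(1)}$ bound per level. Independence of $\pi_t$ from lower levels makes the nested expectations multiply, which gives $\Delta^{O(1/\eps)}$.
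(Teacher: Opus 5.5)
\paragraph{Gap: the worst-case start-vertex bound.}
Your architecture matches the paper's: one RGMM per iteration of \cref{alg:low-degree-matching}, nested recursive oracles, and lazily drawn consistent ranks. You also correctly identify the obstacle that level $t$ sends non-uniform queries to lower levels. Your way around it, however, has a genuine gap. Your upward induction needs, at every level, a $\Delta^{O(1)}$ bound on the expected RGMM exploration from an \emph{arbitrary fixed} start vertex, uniformly over all possible active graphs $G[U_t]$. You never supply this bound. The Yoshida--Yamamoto--Ito and Behnezhad bounds you want to cite are averaged over a uniformly random start vertex or edge; the paper's own citation of Behnezhad, \cref{lem:d-query-gmm}, is stated for uniform $v$. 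The only worst-case bound you actually derive is the $e^{\Delta}$ decreasing-path count, which you concede is too weak, and the ``more careful path-counting argument'' is exactly the missing content.

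Even granting such a bound, independence of $\pi_t$ does not make the expectations multiply. The set of vertices explored at level $t$ depends on $U_t$, and $U_t$ is determined by the same ranks $\pi_1,\dots,\pi_{t-1}$ that determine those vertices' lower-level costs. After conditioning on $\pi_{<t}$, the natural bound is therefore $\Delta^{O(1)}\cdot \E[\max_w \mathrm{cost}_{<t}(w)]$. Your hypothesis only controls $\max_w \E[\mathrm{cost}_{<t}(w)]$.

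\paragraph{Missing idea: count in-queries and induct downward.}
\Cref{lem:yyi} is worst-case over the \emph{receiving} edge: in any graph, if $\GMM$ is invoked once on every edge, each fixed edge receives at most $1+2\Delta$ calls in expectation. The paper defines $Q(e,t)$ as the number of calls to $\EdgeValue(e,t)$ when $\VertexValue(v,1/\eps-1)$ is invoked once for every $v$. It then proves $\E[Q(e,1/\eps-i)\mid \pi^{(<t)}]\le(4\Delta)^{2i}$ for every fixed history $\pi^{(<t)}$.
\begin{itemize}
\item The base level follows from \cref{lem:yyi} applied to $G[U_t]$.
\item For the step, the calls entering level $t$ from level $t+1$ number at most $(1+2\Delta)(4\Delta)^{2(i-1)}\le(4\Delta)^{2i-1}$ per edge in expectation. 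After padding every edge up to exactly this load, \cref{lem:yyi} contributes one more factor of $1+2\Delta$.
\item Proving the hypothesis conditionally on arbitrary histories is what lets the load bound coming from level $t+1$ be combined with the fresh randomness of $\pi^{(t)}$. This avoids the correlation problem above.
\item Each call to $\EdgeValue(e,t)$ triggers a call to $\EdgeValue(e,t-1)$. Summing $Q(e,0)$ over the at most $n\Delta$ edges therefore bounds the total work by $n\Delta^{O(1/\eps)}$, i.e.\ $\Delta^{O(1/\eps)}$ for a uniformly random $v$.
\end{itemize}
This is why the lemma is stated in expectation over the choice of $v$: the argument only needs the load on each edge to grow by a $\poly(\Delta)$ factor per level, never a worst-case start-vertex bound.
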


To implement the low-degree phase, we employ the scheme of \cite{YoshidaYI09}.
Before describing the implementation, we review the definition of randomized greedy maximal matching (RGMM), which is used as a subroutine.
Given a graph $G$, a maximal matching can be computed by iterating the edges in a random order $\pi$, and adding edges to the matching greedily (i.e., any iterated edge that has no adjacent edge in the matching is added to the matching).
We refer to this matching as the greedy maximal matching corresponding to order $\pi$, $\textsc{GMM}(G, \pi)$, and the output of the entire process as the randomized greedy maximal matching.

\cite{YoshidaYI09} analyze a recursive implementation of this algorithm that, given a specific edge $e$ and an ordering $\pi$, determines whether $e$ appears in $\textsc{GMM}(G, \pi)$.
To do so, the algorithm recursively checks whether any adjacent edge $e'$ that appears earlier in the order is in $\textsc{GMM}(G, \pi)$.
If so, then $e$ does not. Otherwise, $e$ must be added to the maximal matching.
The algorithm is formalized below (\cref{alg:low-deg-gmm}) where, instead of a permutation, $\pi$ is a mapping from the edges to $[0, 1]$.
The key idea is that the neighboring edges can be checked in order of $\pi$, and as soon as any of them is added to the maximal matching, the rest need not be checked.
Using this improvement, they show that the following holds.

\begin{lemma}[{follows from \cite[Lemma 2.3]{YoshidaYI09}}]
    \label{lem:yyi}
    Let $Q(G, \pi, e)$, the number of \textbf{in-queries} of $e$, be defined as the number of (recursive) calls made to $\GMM(G, \pi, e)$ as a result of calling $\GMM(G, \pi, e')$ once for every edge $e'$. Then, for \textbf{any} edge $e$, it holds that
    $$
    \Exp_{\pi}[Q(G, \pi, e)] \leq 1 + 2\Delta,
    $$
    where $\Delta$ is the maximum degree of the graph, and the expectation is over the ordering $\pi$.
\end{lemma}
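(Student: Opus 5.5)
The statement is the claim ``follows from \cite[Lemma 2.3]{YoshidaYI09}'': for every edge $e$, $\Exp_\pi[Q(G,\pi,e)] \le 1+2\Delta$. I would prove it directly with the standard path-counting argument of Yoshida--Yamamoto--Ito.

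First I characterize when a recursive call on an edge is made. Consider the \emph{line graph} $L(G)$, in which edges of $G$ are vertices and two are adjacent if they share an endpoint. A call $\GMM(G,\pi,e)$ issued as a descendant of the top-level call on some $e'$ corresponds to a path $e' = e_0, e_1, \dots, e_k = e$ in $L(G)$. Along this path the ranks strictly decrease, $\pi(e_0) > \pi(e_1) > \dots > \pi(e_k)$, because recursion only descends to earlier edges. Moreover, for each $i$, the call on $e_i$ reaches $e_{i+1}$ only if every neighbor $f$ of $e_i$ with $\pi(f) < \pi(e_{i+1})$ is not in the greedy matching. This is the early-termination rule.

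Summing over all top-level calls, $Q(G,\pi,e)$ equals the number of such ``query paths'' ending at $e$. The plan is to use the symmetry argument of \cite{YoshidaYI09}. Fix a path $P = (e_0,\dots,e_k)$ ending at $e$. The decreasing-rank condition alone already gives probability $1/(k+1)!$. The number of paths of length $k$ ending at $e$ is at most $(2\Delta-2)^k$, since each edge has at most $2\Delta-2$ neighbors in $L(G)$. This naive bound gives $\sum_k (2\Delta)^k/(k+1)!$, which is exponential in $\Delta$, so it is too weak.

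The real content is the early-termination refinement, and this is the main obstacle. Following YYI, I would show that for a path with $\pi$ decreasing along it, the step from $e_i$ to $e_{i+1}$ also requires that no earlier neighbor of $e_i$ is matched. Conditioning on this cuts the effective branching. YYI prove it by an injective map: for each pair (ordering $\pi$, valid query path of length $k$), they exhibit a corresponding ordering in which a shorter prefix is queried. The consequence is that $\sum_{P}\Pr[P \text{ is a query path}]$ telescopes to at most $1 + 2\Delta$. The $1$ counts the trivial path, i.e., the top-level call on $e$ itself, and $2\Delta$ bounds the contribution of the neighbors.

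Concretely, I would reproduce their Lemma 2.3. The expected number of queries made by a single top-level call equals the average over edges of the expected in-queries. They bound the former by $1 + \bar m/\Delta$-type quantities, and then translate via $\sum_e Q(G,\pi,e) = \sum_{e'}(\text{queries from } e')$ together with their per-edge statement.

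Since the paper explicitly attributes the lemma to \cite{YoshidaYI09}, I would largely cite it. My own verification would be limited to two points. First, that our \cref{alg:low-deg-gmm}, with $\pi: E\to[0,1]$ i.i.d.\ uniform and neighbors checked in increasing $\pi$ order, is exactly their procedure. Second, that their in-query definition matches $Q$.
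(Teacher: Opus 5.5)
Your overall plan is the paper's: view $\GMM$ as the greedy-MIS oracle on the line graph $L(G)$ and invoke Yoshida--Yamamoto--Ito. The paper offers nothing beyond that citation. Your identification of $Q(G,\pi,e)$ with the number of rank-decreasing query paths ending at $e$, trivial path included, is correct.

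What is missing is the step that turns YYI into a bound for \emph{every} $e$, and the route you spell out does not supply it. Bounding the expected cost of a top-level call and using $\sum_e Q(G,\pi,e)=\sum_{e'}(\text{calls generated from } e')$ only controls the average $\frac{1}{|E|}\sum_e \Exp_\pi[Q(G,\pi,e)]$. That average does not give an $O(\Delta)$ bound for an individual edge. The worst-case form is exactly what the bold ``any'' asserts. It is also what the proof of \cref{lem:low-deg-implementation} relies on, since the calls passed down to lower levels there are not uniform. ``Telescoping'' and ``an ordering in which a shorter prefix is queried'' do not describe a mechanism that produces the $2\Delta$. The ``per-edge statement'' you defer to is never actually stated.

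The ingredient to cite is the following. For every edge $\{e'',e\}$ of $L(G)$, i.e.\ two edges of $G$ sharing an endpoint, consider the query paths (summed over all top-level starting edges) whose last step uses $\{e'',e\}$. Their expected number over $\pi$ is at most $1$. YYI prove this by an injection from such pairs $(\pi,P)$ into the set of orderings; fixing the last $L(G)$-edge is what makes the map invertible. Summing this over all $L(G)$-edges is what yields their averaged bound, not the other way around.

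With this in hand, group the query paths ending at $e$ by their last step. The trivial path contributes $1$. Each of the at most $2\Delta-2$ neighbors of $e$ in $L(G)$ contributes at most $1$. Hence $\Exp_\pi[Q(G,\pi,e)]\le 1+2(\Delta-1)\le 1+2\Delta$.
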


\begin{algorithm}
    \caption{$\textsc{GMM}(G, \pi, e)$ \cite{YoshidaYI09}}  
    \label{alg:low-deg-gmm}
    \KwIn{Graph $G = (V, E)$, edge ranks $\pi: E \to [0, 1]$, and an edge $e$}
    \KwOut{A boolean denoting whether $e$ appears in the greedy maximal matching corresponding to $\pi$ }
    \BlankLine
    \For{$e'$ adjacent to $e$, with $\pi_{e'} < \pi_e$, in increasing order of $\pi_{e'}$}{
        \If{$\textsc{GMM}(G, \pi, e')$}{
            \Return{\False}
        }
    }
    \Return{\True}
\end{algorithm}

With the \RGMM{} subroutine at hand, we move on to the implementation of \cref{alg:low-degree-matching}, outlined as \cref{alg:low-deg-vertex-value,alg:low-deg-edge-value}.
For the maximal matching, in each of the $1/\eps$ iterations, we use an \RGMM{} on the active vertices.
We implement this by adding another layer of recursion to \cref{alg:low-deg-gmm}, on the iteration number.
More accurately, the algorithm consists of two recursive functions $\EdgeValue(e, t)$ and $\VertexValue(v, t)$ which compute the updated saturation values after $t$ iterations, for an edge $e$ or a vertex $v$, respectively.
To compute the value of a vertex $v$ after $t$ iterations, we simply compute the value of the edges adjacent to it and add them to the initial saturation value $x_v$.
To compute the value of an edge after $t$ iterations, we need to determine two things: (1) its value after $t - 1$ iterations, and (2) whether it appears in the $t$-th maximal matching (if it does, the value shall increase by $\epsilon$).
To assert whether an edge appears in the $t$-th maximal matching, we use the recursive algorithm for $\RGMM$ on the active vertices.
Here, the active neighbors of a vertex are determined on the fly by recursively computing the vertex values after $t-1$ iterations, and checking if they are at most $1 - 2\epsilon$.

\begin{algorithm}
    \DontPrintSemicolon
    \caption{\VertexValue$(G, U, \vec{x}, v, t)$}
    \label{alg:low-deg-vertex-value}
    \KwIn{Graph $G = V(V, E)$, active set $U$, saturation vector $\vec{x}$, vertex $v$, and the number of iterations $t$}
    \KwOut{The updated saturation value of $v$, after $t$ iterations of $\textsc{MatchLowDegree}(G, U, \vec{x})$ (i.e., \cref{alg:low-degree-matching})}
    \BlankLine
    $x'_v \gets x_v$
    
    \For{$u \in N(v) \cap U$}{
        $x'_v \gets x'_v + \EdgeValue(G, U, \vec{x}, (u, v), t)$
    }

    \Return{$x'_v$}
\end{algorithm}

\begin{algorithm}
    \DontPrintSemicolon
    \caption{$\EdgeValue(G, U, \vec{x}, e, t)$}
    \label{alg:low-deg-edge-value}
    \KwIn{Graph $G = V(V, E)$, active set $U$, saturation vector $\vec{x}$, edge $e \in G[U]$, and the number of iterations $t$}
    \KwOut{The fractional value of $e$, after $t$ iterations of $\textsc{MatchLowDegree}(G, U, \vec{x})$ (i.e., \cref{alg:low-degree-matching}), and a boolean denoting whether $e$ appears in the $t$-th maximal matching}
    \BlankLine
    \uIf(\tcp*[f]{base case}){$t = 0$}{
        \Return{$0$, \False}
    }
    \Else{
        \tcc{Recursively recover the state of $e$ after $t - 1$ iterations, and implement $\RGMM$ if the endpoints of $e$ have not been saturated}
        
        $u, v \gets $ endpoints of $e$

        $x^{(t-1)}_e \gets \EdgeValue(G, U, \vec{x}, e, t-1)$

        $x^{(t-1)}_u \gets \VertexValue(G, U, \vec{x}, u, t - 1)$
        
        $x^{(t-1)}_v \gets \VertexValue(G, U, \vec{x}, v, t - 1)$

        \If(\tcp*[f]{$e$ was deactivated in the previous iterations}){$x^{(t-1)}_u \geq 1 - 2\epsilon$ and $x^{(t-1)}_v \geq 1 - 2\epsilon$}{
            \Return{$x^{(t-1)}_e$, \False}
        }

        \BlankLine
        \BlankLine
    
        $N(e) \gets \{e' \in G[U] \mid e' \text{ shares an endpoint with } e\}$ \tcp*{query adjacency lists}

        \For(\tcp*[f]{assign ranks to the neighboring edges}){$e' \in N(e)$}{
            \If{$e'$ has not already been assigned a rank}{
                Draw $\pi^{(t)}_{e'}$ uniformly from $[0, 1]$
            }
        }

        \BlankLine
        \BlankLine
        
        \For{$e' \in N(e)$, such that $\pi^{(t)}_{e'} < \pi^{(t)}_e$, in increasing order of $\pi^{(t)}_{e'}$}{
            $x^{(t)}_{e'}, b \gets \EdgeValue(G, U, \vec{x}, e', t)$

            \If(\tcp*[f]{$e'$ appears in the $\GMM$, hence $e$ does not}){$b = \True$}{
                \Return{$x^{(t-1)}_{e}$, \False}
            }
        }
        
        \Return{$x^{(t-1)}_{e} + \epsilon$, \True}\tcp*{no lower-rank neighboring edges appear in the \GMM, hence $e$ does}
    }
\end{algorithm}

The proof of correctness is straightforward.
For the query complexity, note that the guarantee of \cref{lem:yyi} bounds the number of in-queries for any edge $e$ by $O(\Delta)$, in expectation, when we call $\GMM(e')$ for a uniformly random edge $e'$.
While this bounds the number of (out-)queries required for computing $\GMM(e')$ for an average edge $e'$, certain edges might have a much larger expected number of out-queries.
Here, the number of out-queries of $e'$ refers to the total number of recursive calls made as a result of invoking $\GMM(e')$.
The main challenge is that with the added recursion on the iteration $t$,
a uniform query to the $t$-th level might result in non-uniform queries to the $(t-1)$-th level.
That is, as a result of querying $\EdgeValue(e, t)$ for a uniformly random $e$, the queries to $\EdgeValue(\cdot, t-1)$ may be concentrated on an edge $e'$ for which the number of expected number of out-queries is significantly larger than $\Delta$.
As a result, \cref{lem:yyi} does not yield a trivial $\Delta^{O(1/\epsilon)}$ bound for the query complexity of \EdgeValue.
The key intuition here is that while the queries made to the lower levels may be non-uniform, they are not far from uniform.
Loosely speaking, the distribution is distorted by a factor of $\poly(\Delta)$ with each level of recursion, which leads to a $\Delta^{O(1/\epsilon)}$ query complexity.
The proofs are formalized below.

\begin{proof}[Proof of \cref{lem:low-deg-implementation}]
    For the sake of brevity, we elide the inputs $G$, $U$, and $\vec{x}$ of $\VertexValue$ and $\EdgeValue$. We also note that $\EdgeValue$ is always called on edges that are initially active, i.e., they are in $G[U]$.
    As such, we only consider those edges in the analysis below.
    
    \paragraph{Correctness.} First, we prove correctness of these routines, i.e., that $\VertexValue(v, t)$ always correctly computes the updated saturation value, and $\EdgeValue(e, t)$ always correctly computes the updated value and whether $e$ appears in the $t$-th selected maximal matching (here, correctness is defined with respect to sequentially selecting $t$ \RGMM s in the active graph, as in \cref{alg:low-degree-matching}).
    This is established by induction on $t$.
    For the base case $t = 0$, the claim holds trivially, as $\EdgeValue(e, 0)$ returns $(0, \False)$, and $\VertexValue(v, 0)$ returns the initial value $x_v$.

    For $t > 0$, note that the correctness of $\VertexValue(v, t)$ is implied by the correctness of $\EdgeValue(e, t)$, since $\VertexValue$ simply computes the updated values of the neighboring edges and adds them to the initial saturation $x_v$.
    Thus, it suffices to prove the correctness of $\EdgeValue(e, t)$.
    By the induction hypothesis, the recursive calls of $\EdgeValue(e, t)$ to $\EdgeValue(\cdot, t - 1)$ and $\VertexValue(\cdot, t - 1)$ yield the correct output.
    As a result, $\EdgeValue(v, t)$ can correctly determine the active subgraph after $t - 1$ iterations, and compute the $\RGMM$ corresponding $\pi^{(t)}$.
    This concludes the proof of correctness.

    \paragraph{Query Complexity.} Moving on to the query complexity, we introduce some notation.
    Let $\pi^{(<t)}$ denote ranks used for computing the first $t - 1$ \RGMM s. For the sake of analysis, we can assume that all the ranks are drawn for the entire graph at the start of the algorithm.
    Consider the process of drawing all the ranks and invoking $\VertexValue(v, 1/\epsilon - 1)$ once for each vertex $v$.
    Let $Q(e', t)$, \emph{the in-degree of an edge $e'$ at level $t$}, be a random variable denoting the number of recursive calls made to $\EdgeValue(e', t)$ as the result of the aforementioned process.

    We claim that it suffices to prove $\Exp[Q(e, 0)] = \Delta^{O(1/\epsilon)}$ for all $e$.
    To see this, first note that the adjacency lists are accessed at most twice in each call to $\EdgeValue$ (once for each endpoint, to obtain all the neighboring edges), and once in each call to $\VertexValue$ (to obtain the adjacent edges).
    Therefore, it suffices to bound the total number of (recursive) calls to $\VertexValue$ and $\EdgeValue$ by $n \Delta^{O(1/\epsilon)}$.
    Furthermore, except the initial calls to $\VertexValue(v, 1/\epsilon - 1)$, every call to $\VertexValue(\cdot, t)$ is triggered by a call to $\EdgeValue(\cdot, t + 1)$, and each call to $\EdgeValue(\cdot, t + 1)$ is responsible for at most two of the calls to $\VertexValue(\cdot, t)$.
    Therefore, it suffices to bound the total number of calls just to $\EdgeValue$ by $n \Delta^{O(1/\epsilon)}$.
    Additionally, for $t > 0$, each call to $\EdgeValue(e, t)$ makes a call to $\EdgeValue(e, t-1)$.
    Hence, since there are $1/\epsilon$ levels, it suffices to bound the total number of calls made to $\EdgeValue(\cdot, 0)$ by $n \Delta^{O(1/\epsilon)}$.
    Finally, note that there are at most $m \leq n\Delta$ edges.
    As a result, it suffices to bound the number of calls to $\EdgeValue(e, 0)$ by $\Delta^{O(1/\epsilon)}$ for each edge $e$.

    To prove $\Exp[Q(e, 0)] = \Delta^{O(1/\epsilon)}$, we use a downward induction on $t = 1/\eps - i$, and show $$\Exp[Q(e, 1/\epsilon - i)] \leq (4\Delta)^{2i}.$$
    In fact, we prove a stronger claim that this holds even when we condition on an arbitrary set of ranks $\pi^{(<t)}$ (i.e., no matter what the graph is after the first $t - 1$ iterations).
    The base case, $t = 1/\epsilon - 1$, follows directly from \cref{lem:yyi}.
    Conditioning on an arbitrary set of ranks $\pi^{(<t)}$, yields a set of active vertices $U_t$ at the beginning of the $t$-th iteration.
    We are drawing a random rank $\pi^{(t)}$ for the $t$-th level,
    and invoking $\VertexValue(v, t)$ once for each $v$.
    These result in direct calls to $\EdgeValue(e, t)$ twice for each edge $e$.
    Then, $\EdgeValue(e, t)$ essentially simulate \RGMM{} (\cref{alg:low-deg-gmm}) on $G[U_t]$.
    Therefore, by \cref{lem:yyi}, we can conclude:
    $$
    \Exp[Q(e, t) \mid \pi^{(<t)}] \leq 2(1 + 2\Delta) \leq (4\Delta)^2,
    $$
    for all $e$.

    For the induction step, take $t = 1/\eps - i$ for $i > 1$, fix a set of ranks $\pi^{(<t)}$, and assume that the hypothesis holds for $t + 1$,
    i.e.\ for any $\pi^{(<t+1)}$, we have
    $$
    \Exp[Q(e, t + 1) \mid \pi^{(<t + 1)}] \leq (4\Delta)^{2(i - 1)},
    $$
    for all $e$, where we are specifically interested in $\pi^{(<t + 1)}$ that extend $\pi^{(<t)}$ by adding one level $\pi^{(t)}$.
    Lifting the condition on $\pi^{(t)}$, the above implies:
    $$
    \Exp[Q(e, t + 1) \mid \pi^{(<t)}] \leq (4\Delta)^{2(i - 1)}.
    $$
    We consider the calls initially made to level $t$ as a result of the calls to level $t + 1$.
    Each call to $\EdgeValue(e, t + 1)$ makes a call to $\EdgeValue(e, t)$ and two calls to $\VertexValue(\cdot, t)$, one for each endpoint.
    The calls to $\VertexValue(\cdot, t)$, in turn, make a call to $\EdgeValue(e', t)$ for each adjacent edge $e'$. We refer to these as the initial calls to level $\EdgeValue(\cdot, t)$.
    Combining with the induction hypothesis, the number of initial calls made to $\EdgeValue(e, t)$ can be bounded by $(1 + 2\Delta)(4\Delta)^{2(i - 1)} \leq (4\Delta)^{2i -1}$ in expectation for any $e$.
    
    The key step here is to artificially increase the expected number of calls to level $t$ to reach $(4\Delta)^{2i -1}$.\footnote{This is parallel to Lemmas 3.4 - 3.6 of \cite{YoshidaYI09} for controlling the distortion in the uniform distribution.}
    That is, if the expectation is lower for any edge $e$, we add extra queries to $\EdgeValue(e, t)$.
    With each $\EdgeValue(e, t)$ called exactly $(4\Delta)^{2i - 1}$ times in expectation, we can invoke \cref{lem:yyi} again to obtain:
    $$
    \Exp[Q(e, t) \mid \pi^{(<t)}] \leq (4\Delta)^{2i - 1} \cdot (1 + 2\Delta) \leq (4\Delta)^{2i}.
    $$
    This concludes the proof of the induction and the claim.
\end{proof}

\subsection{Proof of \Cref{thm:subset-query}}

We will now combine the bounds of \Cref{lem:degree-estimation-subset-queries}
and \Cref{lem:low-deg-implementation} to give a bound on the number of subset queries required to implement \textsc{EstimateMatchingSize} in its entirety.
From this, we will conclude \Cref{thm:subset-query}. 

First---primarily because the streaming algorithms will use a different approach for the low-degree phase---we 
will just bound the number of subset queries made during the high-degree phase of the algorithm, i.e. before \textsc{MatchLowDegree} is called.
\begin{claim}
\label{clm:subset-query-bound}
\textsc{EstimateMatchingSize}, outside of \textsc{MatchLowDegree},
makes $\poly(\log n, 1/\eps)$ subset queries.
\end{claim}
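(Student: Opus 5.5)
The plan is to count the calls to \textsc{ComputeDegrees} made outside \textsc{MatchLowDegree} and multiply by the cost of a single call. By \Cref{lem:degree-estimation-subset-queries}, each call uses $O(\log^2 n/\eps^3)$ subset queries. Every other step of \textsc{EstimateMatchingSize} and \textsc{PeelHighDegree} only updates $U$, $H$ and $\vec{x}$ locally and makes no queries. So it suffices to show that the total number of \textsc{ComputeDegrees} calls is $\poly(\log n, 1/\eps)$.

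\textbf{Step 1: outer loop.} In \Cref{alg:matching}, $\dmax$ starts at $n$ and is multiplied by $(1-\eps)$ each round until it drops below $\eta = 2/\eps^2$. The number of calls to \textsc{PeelHighDegree} is therefore at most
\[
\log_{1/(1-\eps)}(n\eps^2/2) + 1 = O(\log n/\eps).
\]

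\textbf{Step 2: inner loop.} Within one call to \textsc{PeelHighDegree}, there is one call computing $\vec{d_U}$ plus one call per iteration of the while loop. Following the argument in the proof of \Cref{lem:fractional-validity}, every $v \in H$ satisfies $(d_U)_v \ge (1-\eps)\dmax \ge \dmax/2$. Hence $x_v$ grows by at least $\eps/4$ per iteration. A vertex leaves $H$ once $x_v \ge 1-2\eps$, so every vertex of $H$ is removed within $4/\eps$ iterations, and the loop runs at most $O(1/\eps)$ times.

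The main subtlety is that this bound relies on $\vec{d_U}$ being computed once per call and not updated, so $H$ only shrinks and the per-iteration increment for $v \in H$ stays fixed. One must also confirm that the hypothesis of \Cref{lem:high-degree-validity}, namely $\dmax \ge (1-\eps)\max_v|N(v)\cap U|$, holds at every call. This follows from the lemma's conclusion that after each call all remaining degrees are at most $\dmax$, and then $\dmax$ is scaled by $(1-\eps)$. Initially the hypothesis holds trivially with $\dmax = n$.

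\textbf{Step 3: combine.} The total number of \textsc{ComputeDegrees} calls is $O(\log n/\eps)\cdot O(1/\eps) = O(\log n/\eps^2)$, so the total number of subset queries is
\[
O(\log^3 n/\eps^5) = \poly(\log n, 1/\eps).
\]
For the high-probability claim, each call succeeds with probability $1 - n^{-c}$, and a union bound over the polylogarithmically many calls preserves this up to a change in the constant $c$.
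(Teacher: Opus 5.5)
Your proposal is correct and follows the paper's proof. Both bound the number of \textsc{PeelHighDegree} calls by $O(\log n/\eps)$, bound the while-loop iterations per call by $O(1/\eps)$ via the $\eps/4$ per-iteration increase of $x_v$ for $v \in H$ (as in \Cref{lem:fractional-validity}), and multiply by the per-call cost from \Cref{lem:degree-estimation-subset-queries}. Your count of $O(\log n/\eps^2)$ \textsc{ComputeDegrees} calls is the correct product, whereas the paper writes $O((\log n)/\eps)$, a slip that does not affect the $\poly(\log n, 1/\eps)$ bound; and checking the hypothesis of \Cref{lem:high-degree-validity} is unnecessary, since the iteration bound follows deterministically from the definition of $H$.
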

\begin{proof}
Subset queries outside of \textsc{MatchLowDegree} are only made within \textsc{ComputeDegrees}, so we will bound the number of calls to \textsc{ComputeDegrees}.
Each call to \textsc{PeelHighDegree} makes $O(1/\eps)$ calls to \textsc{ComputeDegrees} (see proof of \Cref{lem:fractional-validity} for this argument), and there are 
$\le \frac{\log n}{\log((1-\eps)^{-1})}$ calls to \textsc{PeelHighDegree}.
Thus we can bound the total number of calls to \textsc{ComputeDegrees} by $O((\log n)/\eps)$. 
By \Cref{lem:degree-estimation-subset-queries}, each call to \textsc{ComputeDegrees} makes $O((\log (n/\delta) \log n)/\eps^3)$ subset queries,
where $\delta$ is the failure probability of \textsc{ComputeDegrees}.
For our algorithm to have a global failure probability of $1/n$, 
we will let $\delta := c \cdot \eps/n \log n$, where $c$ is determined by the hidden constant in the $O((\log n) /\eps)$ bound on the number of calls
to \textsc{ComputeDegrees}.

This gives a final subset-query complexity of 
\[O\left(\frac{\log n}{\eps} \cdot \frac{\log (n/\eps) \log n}{\eps^3} \right) = \poly(\log n, 1/\eps). \qedhere\]
\end{proof}

Now we will combine this with the bounds on the complexity of \textsc{MatchLowDegree} to conclude \Cref{thm:subset-query}.
In fact we will show a slight relaxation of the theorem with additive-multiplicative approximation and expected query complexity; standard arguments show that the stronger version follows from that.
\begin{theorem}[Relaxation of \Cref{thm:subset-query}]
Given an $n$-vertex graph $G$, for any $\epsilon > 0$, there is a randomized algorithm that $(2, \epsilon n)$-approximates the size of maximum matching in $G$ w.h.p. The expected subset-query complexity is $\poly(\log n) \cdot (1/\epsilon)^{O(1/\eps)}$.
\end{theorem}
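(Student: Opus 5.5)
The plan is to run \textsc{EstimateMatchingSize} (\Cref{alg:matching}) with the high-degree phase carried out globally and the low-degree phase replaced by sampling. First I execute every call to \textsc{PeelHighDegree} exactly as written, implementing \textsc{ComputeDegrees} through \Cref{lem:degree-estimation-subset-queries}. By \Cref{clm:subset-query-bound} this uses $\poly(\log n, 1/\eps)$ subset queries and succeeds w.h.p. At the end of this phase the algorithm stores, in $\widetilde{O}(n)$ space, the active set $U=L$ and the saturation vector $\vec{x}$. By \Cref{lem:high-degree-validity}, every vertex of $L$ has degree at most $\eta$ in $G[L]$, since $\dmax<\eta$ when the loop exits. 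Conditioned on the success of these calls, the vector $\vec{x}^{\mathrm{final}}$ that \textsc{MatchLowDegree} would produce is a well-defined random object. The three invariants hold for it, so \Cref{cor:validity} and \Cref{lem:apx-factor} apply: the quantity $\hat\mu^\star := \tfrac{1-\eps}{1+\eps}\cdot\tfrac12\sum_v x^{\mathrm{final}}_v$ satisfies $\mu(G)/(2+O(\eps)) \le \hat\mu^\star \le \mu(G)$.

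Second, I estimate $\sum_v x^{\mathrm{final}}_v$ by sampling instead of computing it. I draw $k = O(1/\eps^2)$ vertices $v_1,\dots,v_k$ uniformly at random. For each one I call $\VertexValue(G,L,\vec{x},v_j,1/\eps-1)$ from \Cref{lem:low-deg-implementation}, sharing the lazily drawn ranks $\pi^{(t)}$ across all calls so that every call reads the same run of \textsc{MatchLowDegree}. Each vertex query to $G[L]$ is a single subset query with $S=\{v\}$, intersected with the stored $L$. Each value lies in $[0,1]$, so Hoeffding's inequality shows that $\frac{n}{k}\sum_j x^{\mathrm{final}}_{v_j}$ is within $\eps n$ of $\sum_v x^{\mathrm{final}}_v$ with constant probability. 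Taking a median over $O(\log n)$ independent repetitions makes this hold w.h.p. I then output this estimate, scaled by $\tfrac{1-\eps}{1+\eps}\cdot\tfrac12$ and decreased by $\eps n$ so that it does not overestimate. Rescaling $\eps$ by a constant turns the $2+O(\eps)$ ratio and the additive error into the claimed $(2,\eps n)$ guarantee; the leftover multiplicative slack $1+O(\eps)$ is absorbed into the additive term because $\mu(G)\le n/2$.

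For the query count, \Cref{lem:low-deg-implementation} gives $\eta^{O(1/\eps)} = (2/\eps^2)^{O(1/\eps)} = (1/\eps)^{O(1/\eps)}$ vertex queries in expectation per sampled vertex. Multiplying by the $O(\log n/\eps^2)$ samples and adding the high-degree cost gives $\poly(\log n)\cdot(1/\eps)^{O(1/\eps)}$ in expectation.

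The main obstacle is that the sampled calls share randomness and the vertices are chosen adaptively-looking, so I must check that the expectation bound of \Cref{lem:low-deg-implementation} still applies. Its proof bounds the total calls made when $\VertexValue$ is invoked once for every vertex, conditioned on arbitrary earlier ranks. Running the samples as independent fresh copies, each with its own rank draws, makes each copy fall exactly under that lemma, and linearity of expectation then gives the total. Independent copies are still fine for the estimator, because each copy yields an unbiased sample of the expected final saturation. For validity I would then argue that $\mathbb{E}[\sum_v x^{\mathrm{final}}_v]$, an average over valid runs, still lies in the interval given by \Cref{cor:validity} and \Cref{lem:apx-factor}, since those bounds hold pointwise for every run.
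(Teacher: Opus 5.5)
Your proposal is correct and matches the paper's proof. It runs the high-degree phase globally with the subset-query \textsc{ComputeDegrees} of \Cref{lem:degree-estimation-subset-queries}. It then evaluates the final saturation of $O(\log n/\eps^2)$ uniformly sampled vertices with the local algorithm of \Cref{lem:low-deg-implementation}, applies Hoeffding, subtracts $\eps n$, and combines \Cref{cor:validity} with \Cref{lem:apx-factor}.

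Switching to independent rank draws per sample, and estimating $\mathbb{E}_\pi[\sum_v x_v]$ (fine, since both bounds hold for every run), is valid but not needed. With shared ranks each $(v_j,\pi)$ is still marginally a uniform vertex with random ranks, so linearity of expectation already gives the per-sample cost bound. Conditioned on $\pi$ the sampled values are i.i.d., so Hoeffding applies to that single run, as in the paper.
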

\begin{proof}
We will call \textsc{EstimateMatchingSize}$(G,\eps, \eta=2/\eps^2)$ using the subset-query implementations of \textsc{ComputeDegrees} and \textsc{MatchLowDegree},
where instead of returning the exact value 
\[\hat{\mu} := \frac{1-\eps}{1+\eps} \cdot \frac 12 \sum_{v \in V} x_v,\]
we will sample a random set $S$ of $O(\tfrac{\log n}{\eps^2})$ indices and compute 
\[\hat{\mu}_S :=  \frac{1-\eps}{1+\eps} \cdot \frac {n}{2|S|} \sum_{v \in S} x_v, \]
then return $\hat{\mu}_S' := \hat{\mu}_S - \eps n$.
First we observe that we can choose the constant such that 
\[\Pr[|\hat{\mu} - \hat{\mu}_S| > \eps n] \le n^{-c}.\]
This follows from the fact that $x_v \in [0,1]$ for all $v$ and a Hoeffding bound.
Combined with the validity and approximation guarantees of \Cref{cor:validity,lem:apx-factor},
this yields an estimate $\hat{\mu}_S'$ such that
\[\frac{\mu(G)}{2+O(\eps n)} - 2\eps n \le \hat{\mu}_S' \le \mu(G).\]
Since $\mu(G) < n$, we conclude the multiplicative-additive approximation guarantee:
\[\mu(G)/2 - O(\eps n) \le \hat{\mu}_S' \le \mu(G),\]
where the $O(\eps n)$ can be reduced to $\eps n$ by scaling $\eps$ by the appropriate constant.

Now we bound the query complexity. 
For $v \in S$ such that $v$ is removed from $U$ during a call to \textsc{PeelHighDegree}, $x_v$ can just be retrieved from the vector $\vec{x}$ 
without making a call to the local implementation of \textsc{MatchLowDegree}.
For $v$ that remains in $U$ after all calls to \textsc{PeelHighDegree}, \textsc{MatchLowDegree} is called on a graph of degree at most $\eta =O(1/\eps^2)$, 
so by \Cref{lem:low-deg-implementation}, $x_v$ takes $(1/\eps)^{O(1/\eps)}$ subset queries to compute, in expectation over a random vertex.

Thus, combined with \Cref{clm:subset-query-bound}, the expected total number of subset queries is bounded by
\[\poly(\log n) \cdot (1/\epsilon)^{O(1/\eps)},\]
which concludes the proof.
\end{proof}

\begin{remark}
\label{remark:strengthening}
To conclude the stronger version of the theorem, we restart the algorithm if it has not returned within twice its expected running time, and make at most $O(\log n)$ attempts.
This gives an algorithm that still succeeds w.h.p.\ and always has $\poly(\log n) \cdot (1/\eps)^{O(1/\eps)}$ query complexity.

To go from a multiplicative-additive to a multiplicative approximation, we first get rid of singleton vertices in the low-degree case. After that, the maximum matching of the remaining vertices will be of size at least $\Omega(k/\Delta)$ where $\Delta=O(1/\epsilon^2)$ is the maximum degree in the low-degree case and $k$ is the number of non-singletons. 
Sampling $\log n \cdot (\eps/\Delta)^{-2}$ vertices instead of $
\log n/ \eps^2$ gives a multiplicative approximation. 
This technique is standard (see \cite{Behnezhad21}).



\end{remark}

\section{Streaming Distance-to-Monotonicity Estimation}\label{sec:streaming-monotonicity}

In this section we prove \Cref{thm:main}, giving a $\tilde{O}(n)$-space, $\sqrt{n}^{1 + o(1)}$-pass streaming algorithm for distance-to-monotonicity estimation.
We do this by giving a streaming implementation of a subset query in the violation
graph $G_f$,
which will allow us to use the subset-query algorithm to estimate the size of the 
maximum matching in the violation graph.
From \Cref{fact:matching-dmon}, it will follow that this gives an estimator for 
distance to monotonicity.

\begin{lemma}[Streaming implementation of subset query]
\label{lem:subset-query-streaming}
A subset query in the conflict graph $G_f$ can be implemented in $\sqrt{n}^{1 + o(1)}$ passes over the edge and label stream of $G, f$ and $\tilde{O}(n)$ space.
The query succeeds with high probability.
\end{lemma}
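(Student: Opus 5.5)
We reduce a subset query in $G_f$ to two multi-source reachability computations in $G$, and implement each with the streaming reachability algorithm of \cite{JLS19}. Fix a query set $S \subseteq V$ and write $S_1 = \{v \in S : f(v) = 1\}$ and $S_0 = \{v \in S : f(v) = 0\}$. By the definition of the violation graph, a vertex $u$ lies in $N_{G_f}(S)$ in exactly two situations. Either $f(u) = 0$ and some $v \in S_1$ reaches $u$ in $G$. Or $f(u) = 1$ and $u$ reaches some $v \in S_0$ in $G$, which is the same as some $v \in S_0$ reaching $u$ in the reversed graph $G^{\mathrm{rev}}$. Hence
\[
N_{G_f}(S) = \bigl(\mathrm{Reach}_G(S_1) \cap f^{-1}(0)\bigr) \cup \bigl(\mathrm{Reach}_{G^{\mathrm{rev}}}(S_0) \cap f^{-1}(1)\bigr).
\]
Because $f$ maps to $\{0,1\}$, a vertex with $f(u)=0$ is never a violation partner of another vertex with value $0$, so no extra filtering is needed; in particular vertices of $S$ reached trivially are removed by the $f$-filter. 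First, using one pass over the edge and label stream, we store $f$ explicitly in $O(n)$ space; this is needed to split $S$ and to apply the final filters.

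Next, we reduce multi-source reachability to single-source reachability. Add a virtual super-source $s^\ast$ with an edge $s^\ast \to v$ for every $v \in S_1$. These edges are known to the algorithm, so it inserts them itself during every pass instead of reading them from the stream. Single-source reachability from $s^\ast$ in the augmented graph then returns $\mathrm{Reach}_G(S_1)$. For $G^{\mathrm{rev}}$ we simply reverse each edge as it arrives in the stream and use a second super-source attached to $S_0$. The graph has $n+1$ vertices, so the pass and space bounds are unchanged up to constants.

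The main step is to invoke the streaming version of \cite{JLS19} (as refined in \cite{AJJST22}). As noted in the introduction, it computes the full set of vertices reachable from a single source in $\sqrt{n}^{1+o(1)}$ passes and $\tilde{O}(n)$ space, with high probability. The point to check is that it really outputs the entire reachable set, not just a single $st$ bit. Its structure makes this so: it builds a shortcut set of diameter $\sqrt{n}^{1+o(1)}$ and then runs a BFS-like frontier expansion from the source, one pass per hop, keeping a reachability bit for each vertex. We must also check that it tolerates the on-the-fly edges from $s^\ast$ and the on-the-fly reversal, which holds because it only ever reads edges sequentially. Running the algorithm twice, intersecting each output with the appropriate level set of $f$, and taking the union gives $N_{G_f}(S)$ in $\sqrt{n}^{1+o(1)}$ passes and $\tilde{O}(n)$ space. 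A union bound over the two calls preserves success with high probability.
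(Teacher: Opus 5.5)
Your proposal is correct and follows the paper's proof. Like the paper, you store $f$, split $S$ into $S_0$ and $S_1$, and return the label-$0$ vertices reachable from $S_1$ together with the label-$1$ vertices that reach $S_0$, computing both via the \cite{JLS19} shortcut set followed by BFS. The differences are cosmetic and leave the $\sqrt{n}^{1+o(1)}$ pass and $\tilde{O}(n)$ space bounds unchanged: the paper builds a single shortcut set $H$ and runs multi-source forward and backward BFS directly in $G \cup H$, whereas you use a super-source and two black-box single-source calls on $G$ and its reversal.
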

We cite the following result, originally presented by \cite{JLS19} as a PRAM algorithm.
See Proposition 4 of \cite{AJJST22} and the discussion beneath it for an explanation of how it is implemented in the streaming model.
\begin{proposition}[Streaming shortcut set computation \cite{JLS19, AJJST22}]
\label{prop:hopset}
There is a $\sqrt{n}^{1 + o(1)}$-pass, $\tilde{O}(n)$-space algorithm that given directed $G=(V,E)$, with high probability computes a \emph{shortcut set}\footnote{Some authors make a distinction between a shortcut set and a hopset, where the latter consists of additional \emph{weighted} edges which approximately preserve shortest paths.  For clarity, we use the term ``shortcut set'' for the unweighted concept instead of ``hopset'' as in \cite{JLS19, AJJST22}.} $H \subset V \times V$ of $\tilde{O}(n)$ additional edges such that 
\begin{enumerate}
    \item $\mathsf{diam}(G \cup H) \le \sqrt{n}^{1 + o(1)}$, and
    \item For all $(u,v) \in V \times V$, $v$ is reachable from $u$ in $G \cup H$ iff $v$ is reachable from $u$ in $G$.
\end{enumerate}
\end{proposition}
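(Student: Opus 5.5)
Since this proposition is imported from \cite{JLS19, AJJST22}, my plan is to reconstruct the argument rather than treat it as a black box. It has two parts: the Jambulapati--Liu--Sidford recursive pivot-sampling construction of a shortcut set, and a pass-by-pass simulation of that construction in the streaming model. The construction works on a partition of $V$ into \emph{subproblems}, starting from the single subproblem $V$. In each round, every subproblem $X$ samples pivots at rate roughly $\tilde{O}(1)/|X|$, with a geometric range of rates used to handle different scales. For each pivot $p$ we compute its descendant set $R^+(p)$ and ancestor set $R^-(p)$ restricted to $G[X]$, using only paths of at most $D$ hops, where $D = \sqrt{n}^{1+o(1)}$ is a hop budget to be tuned. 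We then add the star edges $(p,w)$ for $w \in R^+(p)$ and $(w,p)$ for $w \in R^-(p)$ to $H$. Finally we refine $X$ according to each vertex's pattern of (descendant / ancestor / unrelated) with respect to the sampled pivots, and recurse.

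Property 2 of the statement is the easy part. Every added edge $(u,v)$ is witnessed by a path of $G$ from $u$ to $v$, since it joins a pivot to a vertex it reaches, or a vertex that reaches it. Hence $G \cup H$ has exactly the same reachability relation as $G$.

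\textbf{Streaming simulation.} A $D$-hop restricted reachability computation from a set of sources takes $D$ passes. In each pass we relax every streamed edge $(u,w)$ with $u,w$ in the same subproblem, propagating the labels ``reached by pivot $p$'' from $u$ to $w$. Backward reachability is handled symmetrically. Each vertex needs to store only its subproblem identifier, its current saturation-free label sets, and the pivots that reach it or that it reaches. The key claim to establish is that w.h.p.\ every vertex is related to only $\tilde{O}(1)$ sampled pivots of its own subproblem. I would argue this as in \cite{JLS19}: the sampling rate is inversely proportional to subproblem size, and the refinement step ensures that any pivot comparable to $v$ lies in a set of size at most $|X|$. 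This gives $\tilde{O}(n)$ working memory and $|H| = \tilde{O}(n)$ edges in total. The recursion depth is $O(\log n)$ w.h.p., because the refinement shrinks the relevant sets by a constant factor in expectation. So the total number of passes is $O(\log n)\cdot D = \sqrt{n}^{1+o(1)}$.

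\textbf{Main obstacle: the diameter bound (property 1).} Fix a shortest path $P$ in $G$ between a reachable pair. The goal is to show that after the recursion, $P$ is replaced by a path in $G \cup H$ with $\sqrt{n}^{1+o(1)}$ hops. The approach is to charge each round via a potential counting, for each vertex of $P$, the number of ``path-related'' vertices remaining in its subproblem. Whenever a pivot lands on a vertex comparable to a segment of $P$, that segment collapses to two hops through the pivot. Segments that are split across subproblems are compared to the sum of the subproblem sizes. The delicate point is that exploration is truncated at $D$ hops, so a pivot may miss vertices it actually reaches. I would first try the JLS choice of a $\sqrt{n}$-scale hop budget, multiplied by $n^{o(1)}$ slack from a multi-level scheme with $O(\sqrt{\log n})$ levels. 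That scheme would show that long unshortcut segments occur only with polynomially small probability, giving diameter $D\cdot n^{o(1)} = \sqrt{n}^{1+o(1)}$ w.h.p. after a union bound over all $O(n^2)$ pairs.
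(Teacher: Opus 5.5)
The paper does not prove this proposition. It imports it from \cite{JLS19}, a PRAM construction with $\tilde O(m)$ work and $n^{1/2+o(1)}$ depth, and points to Proposition 4 of \cite{AJJST22} and the discussion beneath it for the streaming simulation. Reconstructing it is legitimate, but what you have written is a plan rather than a proof, and two load-bearing steps are missing. Two parts are correct: your argument for item 2 (every shortcut is witnessed by a path of $G$), and simulating a $D$-hop multi-source search in $D$ passes. The latter needs one qualification: each pass may propagate only labels present at the start of that pass. Otherwise a label can travel several hops in one pass, and the sets you compute depend on stream order instead of being the $D$-hop sets the analysis is about.

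The first gap is item 1 itself. The diameter bound is the entire technical content of \cite{JLS19}, and you only describe what you would try. You define no potential. You name the effect of truncating searches at $D$ hops but do not handle it: hop-bounded reachability is not transitive, so a pivot can miss parts of $P$ it does reach, and $P$ need not split into contiguous pieces. The $n^{o(1)}$ factor and the ``$O(\sqrt{\log n})$ levels'' are guessed, not derived.

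The second gap is the pass and space accounting, which does not hold together as stated. Suppose each part $X$ samples at rate $\tilde O(1)/|X|$. Then ``each vertex is related to $\tilde O(1)$ pivots'' is trivial, but parts need not shrink. If most of $X$ consists of isolated vertices, refinement removes only the $\tilde O(1)$ pivots, so a long path $P$ inside $X$ receives a pivot only after roughly $|X|/|P|$ rounds (up to logarithmic factors), each costing $D$ passes. The constant-factor shrinkage you invoke concerns path-relevant sets for a fixed path. That is an analysis device; it says nothing about how many rounds the algorithm runs, so your ``recursion depth $O(\log n)$'' claim fails.

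Suppose instead you run $O(\log n)$ rounds with geometrically increasing sampling rates ending at $1$. The round count is then fine, but bounding how many pivot searches reach a given vertex becomes a real lemma. Your justification (``any comparable pivot lies in a set of size at most $|X|$'') is vacuous. Without hop limits, in a DAG, the lemma follows from transitivity:
\begin{itemize}
\item After a round at rate $p$, an ancestor $w$ of $v$ stays in $v$'s part only if no vertex of $\mathrm{Anc}(v)\setminus\mathrm{Anc}(w)$ was sampled.
\item At most $L$ ancestors $w$ satisfy $|\mathrm{Anc}(v)\setminus\mathrm{Anc}(w)|\le L$.
\item Hence only $O(\log n/p)$ ancestors survive w.h.p.
\end{itemize}
With $D$-hop searches that transitivity is gone, and handling this is exactly what the hop-restricted analysis of \cite{JLS19} is built for. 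This per-vertex bound is what gives both $|H|=\tilde O(n)$ and the $\tilde O(n)$ memory of your streaming simulation, so those claims are also unproven. You should either cite the result as the paper does, or supply these two lemmas.
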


\begin{proof}[Proof of \Cref{lem:subset-query-streaming}]
Let $S$ be the queried subset. 
We will first use one pass and $O(n)$ space to store all the labels of $f$. 
Partition $S$ into $S_0 := S \cap f^{-1}(0)$ and $S_1 := S \cap f^{-1}(1)$.
We will then use $\sqrt{n}^{1 + o(1)}$ passes and $\tilde{O}(n)$ space to build a $\sqrt{n}^{1 + o(1)}$-diameter shortcut set as specified in \Cref{prop:hopset}. 
We will run a (directed) parallel BFS in $G \cup H$ from $S_1$ as follows: 
\begin{enumerate}
    \item For one pass, whenever a (directed) edge $(u,v)$ appears in the stream where $u \in S_1$, place a mark on $v$.
    \item For $\mathsf{diam}(G \cup H)$ passes, whenever a (directed) edge $(u,v)$ appears in the stream where $u$ is marked, mark $v$.
    \item Let $T_1$ be the set of marked vertices labeled 0.
\end{enumerate}
We will then run the analogous BFS from $S_0$:
\begin{enumerate}
    \item For one pass, whenever a (directed) edge $(u,v)$ appears in the stream where $v \in S_0$, place a mark on $u$.
    \item For $\mathsf{diam}(G \cup H)$ passes, whenever an edge $(u,v)$ appears in the stream where $v$ is marked, mark $u$.
    \item Let $T_0$ be the set of marked vertices labeled 1.
\end{enumerate}
First, observe that the BFS marks exactly the set of descendants of $S_1$ and ancestors of $S_0$ in $G \cup H$.
By the fact that $H$ preserves the reachability relation in $G$, these are also exactly the descendants and ancestors in $G$.
A vertex $u$ violates monotonicity with $v \in S_0$ iff $f(u)=1$ and $u$ is an ancestor of $v$, so $u$ violates monotonicity with some $v \in S_0$ iff $u \in T_0$.
Similarly, $u$ violates monotonicity with $v \in S_1$ iff $f(u)=0$ and $u$ is a descendant of $v$,
so $u$ violates monotonicity with some $v \in S_0$ iff $u \in T_0$.

Thus, $T_0 \cup T_1$ is the correct answer to the subset query $S=S_0 \cup S_1$; i.e. it contains exactly the set of vertices adjacent to $S$ in the conflict graph.

Since $\mathsf{diam}(G \cup H) \le \sqrt{n}^{1 + o(1)}$ and we only need to store the marks, the number of passes required for the BFS is $\le \sqrt{n}^{1 + o(1)}$ and the space is $O(n)$.
This gives a total of $\sqrt{n}^{1 + o(1)}$ passes and $\tilde{O}(n)$ space for the entire subset query.
\end{proof}

We will now complete the proof of \Cref{thm:main} by combining the pass complexity
of this subset query implementation with the bound on the subset-query complexity.
Rather than using the subset-query implementation exactly as presented in \Cref{sec:subset-query-implementation}, we simplify the low-degree stage and 
improve the $\eps$ dependence by taking advantage of the fact that once the 
degree falls low enough, we can build an explicit representation of $G_f$ in one pass.
So instead of using the local algorithm which makes $(1/\eps)^{O(1/\eps)}$
subset queries, we will build $G_f$ and use the global algorithm \Cref{alg:low-degree-matching} as written.


\begin{proof}[Proof of \Cref{thm:main}]
We will run \textsc{EstimateMatchingSize} in the violation graph $G_f$ using the subset-query implementation of \textsc{ComputeDegrees}, 
and a single-pass global implementation of \textsc{MatchLowDegree}. Since only these functions require access to the edges of $G_f$,
no passes are made outside of them. 

We will first bound the space requirements of \textsc{EstimateMatchingSize} and \textsc{PeelHighDegree}
outside of the calls to \textsc{ComputeDegrees} and \textsc{MatchLowDegree}.
We maintain $\dmax$, which requires $O(\log n)$ space, the set of active vertices $U$, which requires $O(n)$ space,
and the saturation vector $\vec{x}$, which requires $O(n \log(1/\eps))$ space to store weights of granularity $O(\eps)$.

We will now bound the space complexity of \textsc{MatchLowDegree}. We will build the entire conflict graph $G_f[U]$ in one pass by storing all the edges and adding an edge between every reachable pair that violates monotonicity.
Then we will run the rest of the subroutine on the stored graph, making no more passes.
The graph at this point has degree at most $O(1/\eps^2)$; 
thus the space complexity of the graph is $O(n \log n/\eps^2)$.
Each matching takes $O(n \log n)$ space, but the matchings are not stored between iterations --- only $U$ and $\vec{x}$ persist, which take $O(n \log (1/\eps))$ space.
This subroutine then takes $O(n \log n/\eps^2)$ space in total.

Now we will analyze the space and pass complexity of \textsc{ComputeDegrees}.
By \Cref{lem:subset-query-streaming}, \textsc{ComputeDegrees} takes $O(n \log(n/\eps))$ space, so the total space complexity is still $\tilde{O}(n/\eps^2))$.

By \Cref{clm:subset-query-bound}, the number of subset queries is bounded by $\poly(\log n, 1/\eps)$.
Combined with \Cref{lem:subset-query-streaming}, this gives a total pass complexity of $\sqrt{n}^{1 + o(1)} \cdot \poly(1/\eps)$.
By \Cref{lem:fractional-validity,lem:apx-factor}, the output is a $(2+O(\eps))$-approximation to the maximum matching size in $G_f$;
the standard technique of rescaling $\eps$ by the appropriate constant reduces this to a $(2+\eps)$-approximation.
By \Cref{fact:matching-dmon}, it is thus a $(2+\eps)$-approximation to the distance to monotonicity,
from which the theorem follows.
\end{proof}

\section{Implementation With Vertex Queries}\label{sec:vertex-query-implementation}

In this section we show how to adapt the ideas of \Cref{sec:main-algorithm,sec:subset-query-implementation} to the vertex-query model.  The main difficulty in this model, compared to the subset-query model, is that accurate multiplicative degree estimates are not possible for low-degree vertices:  for instance, identifying the set of degree-zero vertices in a graph would require $\Theta(n)$ vertex queries.  This rules out directly implementing \Cref{assume:compute-degrees}.  Instead, we use the following alternate assumption:

\begin{assumption}[\textsc{ComputeDegrees}$_\theta$]
\label{assume:compute-degrees-theta}
There is a function $\textsc{ComputeDegrees}_{\theta}(G,U, \eps)$ that takes an arbitrary graph $G=(V,E)$, 
subset $U \subseteq V$, and approximation parameter $\eps > 0$, and returns a vector $\vec{d}$ such that:
\begin{itemize}
\item $d_v$ is nonzero for all $v \in V$ with $|N(v) \cap U| \ge \theta$, and zero for all $v \in V$ with $|N(v) \cap U| \le (1 - \eps)\theta$.
\item If $d_v$ is nonzero, then $d_v = (1 \pm \eps)|N(v) \cap U|$.
\end{itemize}
\end{assumption}

We will show in \Cref{lem:degree-estimation-vertex-queries} that it is possible
to implement $\textsc{ComputeDegrees}_\theta$ for $\theta \approx \sqrt{n}$ 
using about $\sqrt{n}$ vertex queries.
This will allow us to run \textsc{PeelHighDegree} with degree thresholds above $\sqrt{n}$; however it will not allow us to drop the degree all the way to $2/\eps^2$ as we do in the subset query implementation.
Since we must transition to the low-degree phase at a threshold $\eta \approx \sqrt{n}$, 
it will also no longer be feasible to use the result of \Cref{lem:low-deg-implementation},
as each $x_v$ would take $\eta^{O(1/\eps)} = n^{O(1/\eps)}$ queries to compute.

To eliminate the $O(1/\eps)$ in the exponent, we switch from computing an adaptive sequence of $1/\eps$ maximal integral matchings to computing a single maximal fractional matching,
which increases the integrality gap from $1/(1-\eps)$ to $3/2$.
Thus, at the end of the algorithm, we will scale the estimate by $2/3$ instead of $1-\eps$, achieving an approximation ratio of 3. 

In summary, the modified global algorithm \textsc{EstimateMatchingSizeVQ} differs from \Cref{alg:matching} in four ways:
\begin{enumerate}
\item The transition from high to low degrees is defined as $\eta = \sqrt{n}$ instead of $2/\eps^2$.
\item The \textsc{PeelHighDegree} subroutine uses \textsc{ComputeDegree}$_\theta$ with parameter $\theta = \frac{\eps^2 \eta}{2\ln n}$ in place of \textsc{ComputeDegree}.
\item The \textsc{MatchLowDegree} subroutine is replaced with \textsc{MatchLowDegreeVQ} defined in \Cref{alg:low-degree-matching-vq}
\item The final estimate returned by the algorithm is $\hat\mu = \tfrac{1}{1 + 3\eps} \cdot \tfrac13 \sum_v x_v$ instead of $\tfrac{1 - \eps}{1 + \eps} \cdot \tfrac12 \sum_v x_v$. 
\end{enumerate}

\begin{algorithm}[H]
\DontPrintSemicolon
\caption{$\textsc{MatchLowDegreeVQ}(G,U,\vec{x}, \eps)$}
\label{alg:low-degree-matching-vq}
\KwIn{Graph $G = (V,E)$, active set $U$, saturation vector $\vec{x}$, approximation parameter $\eps > 0$}
\KwOut{Updated saturation vector}
\BlankLine
$V' \gets V \times [1/\eps]$\;
$E' \gets \{\{(v_1, i_1), (v_2, i_2)\} : \{v_1, v_2\} \in G\}$\;
$G' \gets (V', E')$\;
$U' \gets \{(v, i) \in V' \times \{0, \ldots, 1/\eps\} : v \in U \land \eps i \le x_v\}$\;
$M \gets \textsc{MaximalMatching}(G', U')$\;
\For{$\{(u, i_1), (v, i_2)\} \in M$}{
    $x_u \gets x_u + \eps$\;
    $x_v \gets x_v + \eps$\;
    \tcc{$y_{uv} \gets y_{uv} +\eps$}
}
\Return $\vec{x}$
\end{algorithm}

\subsection{Correctness and Approximation Ratio of the Modified Global Algorithm}

Now we argue that the correctness conditions of \Cref{sec:main-algorithm} apply to the modified algorithm, with slight changes.

\begin{lemma}\label{lem:high-degree-validity-vq}
    After the last call to \textsc{PeelHighDegree} in \textsc{EstimateMatchingSizeVQ},
    the vectors $\vec{x}, \vec{y}$ satisfy the three conditions of \Cref{lem:high-degree-validity},
    except that the first condition is replaced by $(1-\eps) \left[ \sum_{u \in N(v)} y_{uv}\right] - \eps \le x_v \le (1+\eps) \sum_{u \in N(v)} y_{uv}$.
\end{lemma}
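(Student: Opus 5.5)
We follow the proof of \Cref{lem:high-degree-validity} step by step, inducting over the executions of the loop body of \textsc{PeelHighDegree} across all calls. The only new ingredient is the weaker oracle of \Cref{assume:compute-degrees-theta}. The key observation is that $\textsc{ComputeDegrees}_\theta$ differs from \Cref{assume:compute-degrees} only when it returns $d_v = 0$ for a vertex with $0 < |N(v)\cap H| < \theta$ (or $|N(v)\cap U|<\theta$). In that case $x_v$ receives no increment, while the implicit $y_{uv}$ on edges to $H$ still grows by $\tfrac{\eps}{2\dmax}$ per neighbor in $H$. The upper bound $x_v \le (1+\eps)\sum_u y_{uv}$ is therefore still preserved: either $x_v$ grows by at most $(1+\eps)$ times the $y$-increase, or $x_v$ does not grow at all. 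Only the lower bound needs an additive slack.

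For the lower bound, we charge the missing mass. Fix a vertex $v$. Its degree to $U$ is at least its degree to $H$, and vertices in $H$ have true degree at least about $(1-\eps)^2\dmax \ge \eta/2 \gg \theta$, so they always get nonzero estimates. Hence the deficit occurs only for $v \in U\setminus H$ in iterations where $|N(v)\cap H| < \theta$. In each such iteration the uncounted $y$-mass at $v$ is less than $\theta\cdot\tfrac{\eps}{2\dmax}$. There are at most $4/\eps$ iterations per call to \textsc{PeelHighDegree} (by the argument in \Cref{lem:fractional-validity}, which still applies to $H$ since $H$-vertices get accurate estimates). The calls have thresholds $\dmax$ decreasing geometrically from $n$ down to $\eta$. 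Summing,
\[
\sum_{\text{calls}} \frac{4}{\eps}\cdot\frac{\eps\theta}{2\dmax} \;\le\; 2\theta \sum_{j\ge 0} \frac{(1-\eps)^{j}}{\eta} \;\le\; \frac{2\theta}{\eps\eta} \;=\; \frac{\eps}{\ln n} \;\le\; \eps,
\]
using $\theta = \frac{\eps^2\eta}{2\ln n}$. So the missing $y$-mass at $v$ is at most $\eps$, giving $(1-\eps)\sum_u y_{uv} - \eps \le x_v$. (The $\ln n$ slack is generous; if the count of calls needs a $\log n$ factor instead of the geometric sum, it absorbs that too.)

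The other two invariants carry over verbatim. Removal of vertices with $x_v \ge 1-2\eps$ gives the third invariant. For the second, the per-iteration increment of $x_v$ is at most $\eps$, since nonzero estimates are still $(1\pm\eps)$-accurate and zero estimates only decrease increments. The "furthermore" degree claim also holds: vertices of degree $\ge\dmax\ge\theta$ get nonzero accurate estimates and are placed in $H$.

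The main obstacle is the summation bound. It requires checking that the per-call iteration count stays $O(1/\eps)$ and that the geometric sum over thresholds is dominated by its smallest term $\dmax\approx\eta$. That is exactly where the choice of $\theta$ is calibrated, so we would verify this step first.
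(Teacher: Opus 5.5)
Your proposal is correct and follows essentially the same argument as the paper. The weaker oracle only zeroes out $(d_H)_v$ for $v \in U\setminus H$ with fewer than $\theta$ neighbors in $H$, so conditions 2 and 3 and the upper bound in condition 1 survive unchanged, and the lower bound loses at most $\frac{\eps}{2}\cdot\frac{\theta}{\dmax}$ per iteration over at most $4/\eps$ iterations per call.

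The only difference is the final summation. The paper bounds the number of calls by $\frac{\ln n}{\eps}$ and uses $\dmax \ge \eta$ in every call, giving total slack $\frac{2\theta\ln n}{\eps\eta} = \eps$. Your geometric sum over the thresholds gives the sharper $\frac{2\theta}{\eps\eta} = \eps/\ln n$; either bound suffices.
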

\begin{proof}
    The behavior of \textsc{PeelHighDegree} is almost the same under \Cref{assume:compute-degrees-theta} instead of \Cref{assume:compute-degrees}.
    The identity of vertices in $H$ is unchanged, as is the approximation guarantee on $(d_U)_v$ for $v \in H$.  The only difference is that $(d_H)_v$ may be 0 for vertices $v \in U \setminus H$ with fewer than $\theta$ neighbors in $U$.
    For those vertices, the value of $x_v$ may be too small, as compared to the original \textsc{PeelHighDegree}.
    It immediately follows that conditions 2 and 3 as well as the upper bound of condition 1 are satisfied.

    Now we bound the amount of error in $x_v$.
    The size of the error accumulates by at most $\frac{\eps}{2} \cdot \frac{\theta}{\dmax} \le \frac{\eps\theta}{2\eta}$ in each of at most $4/\eps$ iterations of \textsc{PeelHighDegree},
    and thus the cumulative error across at most $\frac{\ln n}{\eps}$ calls to \textsc{PeelHighDegree} is bounded by $\frac{2\theta \ln n}{\eps \eta}$, which is at most $\eps$ by choice of $\theta$.
\end{proof}

\begin{lemma}[VQ analogue of \Cref{cor:validity} (validity)]
\label{lem:fractional-validity-vq}
Assume $\eps \le \tfrac13$, and  let $\vec{x}$ be the vector computed by \textsc{EstimateMatchingSizeVQ} (\Cref{alg:matching}). 
Then there exists a matching in $G$ of value 
\[\hat\mu = \frac{1}{1 + 3\eps} \cdot \frac13 \sum_v x_v.\]
\end{lemma}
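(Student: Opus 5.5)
The plan is to exhibit a fractional matching $\vec{y}$ in $G$ whose total value is at least $\frac{1}{1+3\eps}\cdot\frac12\sum_v x_v$, and then invoke \Cref{fact:frac-gap-three-halves}. That fact gives an integral matching of size at least $\frac23$ times the fractional value, which yields $\frac{1}{1+3\eps}\cdot\frac13\sum_v x_v$. Here ``a matching of value $\hat\mu$'' is read as $\mu(G)\ge\hat\mu$, exactly as in \Cref{cor:validity}. The candidate $\vec{y}$ is the implicit one tracked in the comments. It consists of the increments $\frac{\eps}{2\dmax}$ added during \textsc{PeelHighDegree}, plus an increment of $\eps$ on $\{u,v\}$ for each edge $\{(u,i_1),(v,i_2)\}\in M$ in \textsc{MatchLowDegreeVQ}.

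The first step is to take the state after the last \textsc{PeelHighDegree} call from \Cref{lem:high-degree-validity-vq}. There, $x_v\le 1-2\eps$ for $v\in U$, $x_v\le 1-\eps$ for all $v$, and $(1-\eps)\sum_u y_{uv}-\eps\le x_v\le(1+\eps)\sum_u y_{uv}$.

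Next comes the low-degree phase. A vertex $u$ gains $\eps$ in both $x_u$ and $\sum_w y_{uw}$ once for each copy $(u,i)$ matched in $M$. Because $M$ is a matching in $G'$, each copy is used at most once. The copies in $U'$ are those with $\eps i\le x_u$, so there are at most $x_u/\eps+1$ of them. I will need to check the indexing carefully here: the exact number of available copies determines whether the final $x_u$ stays at most $1+O(\eps)$. The intended bound is that $\sum_w y_{uw}\le 1+O(\eps)$ (equivalently, that the final saturation is about $1$). Rescaling by $\frac{1}{1+O(\eps)}$ then gives a valid fractional matching $\vec{y}'$.

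The main obstacle is reconciling the two-sided error. The total fractional value is $\sum_e y_e=\frac12\sum_v\sum_u y_{uv}$. The upper-bound side of the invariant, $x_v\le(1+\eps)\sum_u y_{uv}$, together with the exact matching of increments in the low phase, gives $\sum_u y_{uv}\ge x_v/(1+\eps)$. So the value is at least $\frac{1}{1+\eps}\cdot\frac12\sum_v x_v$. For feasibility, the lower-bound side gives $\sum_u y_{uv}\le (x_v+\eps)/(1-\eps)$ after the high phase. Adding the low-phase increments, which are charged identically to $x$ and $y$, yields $\sum_u y_{uv}\le(1+O(\eps))$. Combining the two scalings should give the factor $\frac{1}{1+3\eps}$ for $\eps\le 1/3$, and I will tune the constants to land exactly there: the rescale is about $1+2\eps$ and the value loss is $1+\eps$. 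Finally, apply \Cref{fact:frac-gap-three-halves} to $\vec{y}'$.
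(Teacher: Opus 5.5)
Your route is the same as the paper's. You take the implicit $\vec y$ and get $\sum_u y_{uv}\ge x_v/(1+\eps)$ from the upper side of the invariant, since the low-phase increments are charged equally to $x_v$ and to $\vec y$. You then bound all saturations by $1+O(\eps)$, rescale, and apply \Cref{fact:frac-gap-three-halves}.

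The gap is the saturation upper bound. It is the only step with real content, you leave it open, and it is false for the copy count you actually wrote. Suppose $U'$ consists of the copies with $\eps i\le x_u$. Then a vertex $u\in L$ that ends the high phase with $x_u\approx 1-2\eps$ has roughly $x_u/\eps$ copies. Nothing prevents all of them from being matched. For example, take two such adjacent vertices with no other active neighbors; their copies form a complete bipartite component of $G'[U']$, and any maximal matching there is perfect. The low phase then adds about $x_u$ to both $x_u$ and $\sum_w y_{uw}$, so the saturation approaches $2$, and no $1+O(\eps)$ rescaling yields a fractional matching.

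What you need is that copies are allotted by \emph{residual} capacity. The condition defining $U'$ must be read as, e.g., $\eps i\le 1-x_v$; the literal $\eps i\le x_v$ gives the fewest copies to the least saturated vertices. The paper's appeal to ``the behavior of \Cref{alg:low-degree-matching-vq}'' implicitly uses this capacity reading. With it, the number $k_u$ of matched copies satisfies $\eps k_u\le 1-x_u+\eps$. So the final saturation is at most $1+\eps+\bigl(\sum_w y_{uw}-x_u\bigr)$, with both terms taken after the high phase. \Cref{lem:high-degree-validity-vq} together with $x_u\le 1-2\eps$ gives $\sum_w y_{uw}\le (x_u+\eps)/(1-\eps)\le 1$. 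Hence $\sum_w y_{uw}-x_u\le \eps(1+x_u)/(1-\eps)\le 2\eps$. Vertices removed during the high phase have saturation at most $(x_v+\eps)/(1-\eps)\le 1/(1-\eps)$.

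On the constant: your factors $1+2\eps$ and $1+\eps$ multiply to $1+3\eps+2\eps^2$, so no tuning of them lands exactly on $\frac{1}{1+3\eps}$. The paper gets that constant by asserting $\sum_u y_{uv}\le 1+\eps$ and using $(1+\eps)^2\le 1+3\eps$. The bookkeeping above from \Cref{lem:high-degree-validity-vq} only gives about $1+2\eps$, or $1+3\eps$ if the index-$0$ copy is allowed. So either sharpen the saturation bound, or state the conclusion with a $\frac{1}{1+O(\eps)}$ factor. The latter is all that \Cref{lem:apx-factor-vq} and \Cref{thm:vertex-query} need.
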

\begin{proof}
First we will show that there exists a fractional matching in $G$ with vertex saturation vector $\frac{1}{1+3\eps} \vec{x}$.
It follows from \Cref{lem:high-degree-validity-vq} and the behavior of \Cref{alg:low-degree-matching-vq} that the final values $\vec{x}$ and $\vec{y}$ satisfy $\frac{1}{1+\eps} x_v \le \sum_{u \in N(v)} y_{uv} \le 1 + \eps$ for every $v$;
thus $\frac{1}{(1+\eps)} \vec{y}$ is a fractional matching with vertex saturations at least $\frac{1}{(1+\eps)^2} \vec{x} \ge \frac{1}{1 + 3\eps} \vec{x}$.

This fractional matching has value at least $\tfrac{1}{1 + 3\eps} \cdot \tfrac 12 \sum_v x_v$. 
By \Cref{fact:frac-gap-three-halves}, there is then an integral matching of value at least
$$\frac 23 \cdot \frac{1}{1 + 3\eps} \cdot \frac 12 \sum_v x_v = \frac{1}{1 + 3\eps} \cdot \frac13 \sum_v x_v.$$
\end{proof}

\begin{lemma}[VQ analogue of \Cref{lem:apx-factor} (approximation ratio)]
\label{lem:apx-factor-vq}
The fractional matching value $\hat{\mu}$ returned by \textsc{EstimateMatchingSizeVQ} satisfies 
\[\hat{\mu} \ge \frac{\mu(G)}{3 + O(\eps)}.\]
\end{lemma}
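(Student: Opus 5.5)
Mirror the proof of \Cref{lem:apx-factor}: show that the final vector $\vec{x}$ is (approximately) a fractional vertex cover, i.e.\ $x_u + x_v \ge 1 - O(\eps)$ for every edge $\{u,v\}\in E$. Then apply \Cref{clm:apx-factor} with $c = 1-O(\eps)$ to get $\sum_v x_v \ge (1-O(\eps))\mu(G)$. Multiplying by $\tfrac{1}{1+3\eps}\cdot\tfrac13$ gives $\hat\mu \ge \mu(G)/(3+O(\eps))$.

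Let $L$ be the set $U$ at the moment \textsc{MatchLowDegreeVQ} is called, and split into the same two cases as before.

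\textbf{Case 1: $u\notin L$ or $v\notin L$.} Some endpoint was removed during a call to \textsc{PeelHighDegree}. Removal only happens when $x_w \ge 1-2\eps$, and $x$ values never decrease afterwards. So $x_u+x_v \ge 1-2\eps$. The fact that \textsc{ComputeDegrees}$_\theta$ may return $0$ for low-degree vertices (\Cref{lem:high-degree-validity-vq}) does not matter here: underestimation only affects vertices in $U\setminus H$ and never undoes a removal. Note also that \textsc{EstimateMatchingSizeVQ} leaves \textsc{PeelHighDegree} while $\dmax$ is still about $\eta$, so vertices of degree up to about $\sqrt n$ may remain. This is fine, because the argument never uses degrees of vertices in $L$.

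\textbf{Case 2: $u,v\in L$.} Here I use maximality of $M$ in $G'[U']$. Vertex $w\in L$ has copies $(w,i)\in U'$ for each $i$ with $\eps i\le x_w$. Reading the index set as $\{0,\dots,1/\eps\}$, and assuming $1/\eps$ is an integer, $w$ has about $(1-x_w)/\eps$ copies. Precisely, $w$ has at least $\lfloor (1-2\eps - x_w)/\eps\rfloor + 1 \ge (1-3\eps-x_w)/\eps$ copies with $i\ge x_w/\eps$. This uses $x_w \le 1-2\eps$, which is invariant 3.
\begin{itemize}
\item Since $\{u,v\}\in E$, every copy of $u$ is adjacent in $G'$ to every copy of $v$.
\item By maximality, either all copies of $u$ are matched or all copies of $v$ are matched; otherwise an unmatched copy of $u$ and an unmatched copy of $v$ would form an addable edge.
\item Each matched copy of $w$ adds $\eps$ to $x_w$.
\end{itemize}
So if all copies of $u$ are matched, the final value satisfies $x_u^{\text{final}} \ge x_u + \eps\cdot(1-3\eps-x_u)/\eps = 1-3\eps$. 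The same holds for $v$ in the other case. Hence $x_u+x_v \ge 1-3\eps$ in Case 2.

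The main obstacle is the copy-counting in Case 2. The definition of $U'$ as written ($\eps i \le x_v$) looks inverted: it gives about $x_v/\eps$ copies, not $(1-x_v)/\eps$. With that literal definition, a vertex with $x_v = 0$ would get a single copy, and the bound fails. I would therefore interpret $U'$ as giving each $v$ a number of copies equal to its remaining capacity, $\lfloor (1-x_v)/\eps\rfloor$ up to rounding. That is the reading under which both \Cref{lem:fractional-validity-vq} (saturation at most $1+\eps$) and this lemma hold. I would make this explicit, and then verify that the $\pm\eps$ rounding and the additive $-\eps$ error from \Cref{lem:high-degree-validity-vq} only shift constants inside $O(\eps)$. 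That error affects only the validity side, not the lower bound, since we only need lower bounds on $x$.
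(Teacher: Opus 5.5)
Your proposal is correct and follows the paper's proof. It uses the same split into an endpoint removed during \textsc{PeelHighDegree} (so $x \ge 1-2\eps$) versus both endpoints in $L$ (maximality of $M$ in $G'[U']$), then \Cref{clm:apx-factor}. Your explicit copy count fills in the step the paper only asserts. Your correction of $U'$ to the copies with $\eps i \ge x_v$ (about $(1-x_v)/\eps$ of them) is the reading this lemma needs, and it is also the one consistent with the bound $\sum_u y_{uv} \le 1+\eps$ used in \Cref{lem:fractional-validity-vq}.
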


\begin{proof}
We observe that every edge $(u,v) \in G$ satisfies $x_u + x_v \ge 1 - O(\eps)$.
Let $L$ be the set of vertices in $U$ when \textsc{MatchLowDegreeVQ} is called; the case where $u$ or $v$ is in $V \setminus L$ is identical to the analogous case in the proof of \Cref{lem:apx-factor}.
When $u$ and $v$ are both in $L$, this condition follows from the maximality of $M$ in \textsc{MatchLowDegreeVQ}.  The lemma statement then follows from \Cref{clm:apx-factor}.
\end{proof}

\subsection{Implementing the Modified Algorithm Using Vertex Queries}

We start by showing that \Cref{assume:compute-degrees-theta} can be implemented in the vertex query model.

\SetKwFor{RepTimes}{repeat}{times}{end}
\begin{algorithm}[H]
    \DontPrintSemicolon
    \caption{Implementing \textsc{ComputeDegrees}$_\theta$ via vertex queries}
    \label{alg:deg-vertex}
    \KwIn{
        Graph $G = (V, E)$ with $|V| = n$, subset $U \subseteq V$, threshold $\theta > 0$, approximation parameter $\eps > 0$, failure probability $\delta > 0$
    }
    \KwOut{
        Degree estimates $\vec{d}$
    }
    \BlankLine
    $r \gets C\eps^{-2}\frac{|U|}{\theta}\log(n \delta^{-1})$ for a sufficiently large constant $C$\;
    $\vec{X} \gets 0$\;
    \RepTimes{$r$}{
      Choose $u \in U$ uniformly at random\;
     \For{$v \in N(u)$}{
        $X_v \gets X_v + 1$
      }
    }
    \For{$v \in V$}{
    $d_v \gets \frac{|U|}{r} X_v$\;
    \If{$d_v \le (1 - \eps/2)\theta$}{
    $d_v \gets 0$\;
    }
    }
    \Return{$\vec{d}$}
\end{algorithm}

\begin{lemma}[Degree estimation via vertex queries] \label{lem:degree-estimation-vertex-queries}
The functionality of \Cref{assume:compute-degrees-theta} can be implemented using $O\left(\eps^{-2} \frac{|U|}{\theta} \log(n \delta^{-1}) \right)$ vertex queries, with a failure probability of $\delta$.
\end{lemma}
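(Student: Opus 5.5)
We analyze \Cref{alg:deg-vertex}. The query count is immediate, since the algorithm makes exactly $r = C\eps^{-2}\frac{|U|}{\theta}\log(n\delta^{-1})$ vertex queries, one per repetition. It remains to show that, with probability at least $1-\delta$, the output satisfies both bullets of \Cref{assume:compute-degrees-theta}.

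Fix a vertex $v$ and write $d := |N(v)\cap U|$. In each repetition the sampled $u$ is uniform in $U$, and $X_v$ increases by one exactly when $u \in N(v)\cap U$. This happens with probability $d/|U|$, independently across repetitions. Hence $X_v \sim \mathrm{Bin}(r, d/|U|)$ and $\E[\tfrac{|U|}{r}X_v] = d$. The plan is to apply a multiplicative Chernoff bound to $X_v$, with mean $\mu_v = rd/|U|$, and then take a union bound over all $n$ vertices. This leaves a per-vertex failure probability of $\delta/n$.

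The analysis splits into two cases according to $d$.

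\textbf{Case $d \ge (1-\eps)\theta/2$.} Here $\mu_v \ge r(1-\eps)\theta/(2|U|) = \Omega(\eps^{-2}\log(n/\delta))$. Chernoff with relative error $\eps/4$ gives $\tfrac{|U|}{r}X_v = (1\pm\eps/4)d$ except with probability $2\exp(-\Omega(\eps^2\mu_v)) \le \delta/n$, provided $C$ is large enough. The consequences are as follows.
\begin{itemize}
\item If $d \ge \theta$, the estimate is at least $(1-\eps/4)\theta > (1-\eps/2)\theta$. So $d_v$ is not zeroed and is a $(1\pm\eps)$-approximation of $d$.
\item If $(1-\eps)\theta/2 \le d \le (1-\eps)\theta$, the estimate is at most $(1+\eps/4)(1-\eps)\theta \le (1-\eps/2)\theta$ for $\eps \le 1$ (a routine check). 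So $d_v$ is zeroed.
\item For intermediate $d$, whichever outcome occurs is allowed. A nonzero output is still a $(1\pm\eps/4)$-approximation, which gives the second bullet.
\end{itemize}

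\textbf{Case $d < (1-\eps)\theta/2$.} We only need the estimate to stay below $(1-\eps/2)\theta$, which is more than $\tfrac{3}{2}$ times $d$ and at least $\theta/2$ in absolute terms (assuming $\eps \le 1/3$). We use the Chernoff upper tail in the form $\Pr[X \ge t] \le 2^{-t}$ for $t \ge 2e\mu$, or more generally the bound $\exp(-\Omega(t))$ for $t \ge \tfrac{3}{2}\mu$ once $t$ is a constant multiple above $\mu$. We apply it with threshold $t = r(1-\eps/2)\theta/|U| = \Theta(\eps^{-2}\log(n/\delta))$. Monotonicity in $d$ lets us reduce to the worst case $d = (1-\eps)\theta/2$, already covered by the first case, so the estimate is zeroed with failure probability at most $\delta/n$.

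The main obstacle is handling the threshold cases cleanly: choosing the cutoff $(1-\eps/2)\theta$ and the Chernoff accuracy $\eps/4$ so that the gaps between $\theta$, $(1-\eps/2)\theta$ and $(1-\eps)\theta$ absorb the estimation error, and arguing that the lower tail of the mean $\mu_v$ may be small. The monotone-coupling argument (a binomial with smaller success probability is stochastically dominated) handles the latter. A union bound over the $n$ vertices then concludes the proof.
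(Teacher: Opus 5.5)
Your proof is correct and takes essentially the same route as the paper: model $X_v$ as a binomial, split on $|N(v)\cap U|$ at a constant fraction of $\theta$, apply a multiplicative Chernoff bound in the high-degree case, handle the low-degree upper tail by coupling/stochastic domination, and take a union bound over the $n$ vertices. The differences are only in constants: the paper splits at $\theta/4$ with relative error $\eps/3$ and threshold $\tfrac34\cdot\tfrac{r}{|U|}\theta$, whereas you split at $(1-\eps)\theta/2$ with relative error $\eps/4$ and reduce the low-degree case to the boundary of the high-degree one.
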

\begin{proof}
  Refer to \Cref{alg:deg-vertex}.
  Fix a vertex $v \in V$ and consider $X_v$, the number of times $v$ appears as a neighbor of one of the queried vertices.  Then $\E X_v = \frac{r}{|U|} \cdot |N(v) \cap U|$.  
   If $|N(v) \cap U| \ge \theta/4$ then by a Chernoff bound,
  \begin{align*}
  \Pr[|X_v - \E X_v| \ge (\eps/3)|\E X_v|]
  &\le 2\exp\left(-\frac{(\eps/3)^2}{3} \E X_v\right) \\
  &\le n^{-1} \delta,
  \end{align*}
  provided $C$ is sufficiently large.
  On the other hand, if $|N(v) \cap U| \le \theta/4$ then
  \begin{align*}
  \Pr\left[X_v \ge \frac34\cdot\frac{r}{|U|}\cdot \theta\right]
  &\le \exp\left(-\frac{r}{|U|}\theta/4\right) \\
  &\le n^{-1} \delta,
  \end{align*}
  again provided $C$ is sufficiently large.  (This uses the fact that the one-sided Chernoff upper bound holds with $\E X_v$ replaced by the upper bound $\frac{r}{|U|}\theta/4$, which follows from a coupling argument.)
  Thus by taking a union bound over all $v \in V$, with probability at least $1 - \delta$
  the returned $\vec{d}$ correctly distinguishes vertices with at least $\theta$ neighbors in $U$ from those with at most $(1 - \eps)\theta$, and correctly estimates degrees whenever it returns a nonzero value.
\end{proof}


Now we use the following result of \cite{Behnezhad21} to implement query access to the output of \textsc{MaximalMatching}.

\begin{theorem}[{\cite[Theorem 3.5]{Behnezhad21}}]\label{lem:d-query-gmm}
Let $G=(V,E)$ be a graph with average degree $\bar{d}$. Let $\Pi$ be the set of all permutations over $E$. For a permutation $\pi \in \Pi$. For a vertex $v \in V$, there is a randomized oracle $\mathrm{VO}(v,\pi)$ that determines whether $v$  is matched in $\textsc{GMM}(G,\pi,e)$ (\cref{alg:low-deg-gmm}) and, if so, identifies the matching edge. Let $T(v,\pi)$ denote the total number of recursive calls to the edge oracle generated during the execution of  $\mathrm{VO}(v,\pi)$. Then for a vertex $v$ chosen uniformly at random from $V$ and a permutation $\pi$ chosen uniformly at random from $\Pi$, independently from $v$,
\[
    \mathbb{E}_{v \sim V,\, \pi \sim \Pi}[T(v,\pi)] = O(\bar{d} \cdot \log n).
\]
\end{theorem}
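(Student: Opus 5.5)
The plan is to follow the vertex-oracle analysis of \cite{Behnezhad21}, which builds on \cite{YoshidaYI09} and on the edge oracle $\textsc{GMM}(G,\pi,e)$ of \cref{alg:low-deg-gmm}. The vertex oracle $\mathrm{VO}(v,\pi)$ scans the edges incident to $v$ in increasing order of $\pi$. For each such edge $e$ it calls $\textsc{GMM}(G,\pi,e)$, and it stops at the first edge that returns \True, which is then reported as $v$'s matching edge. If no call returns \True, $v$ is unmatched. Correctness is immediate from the correctness of the edge oracle together with the greedy rule: the matching edge at $v$ is the lowest-ranked incident edge that lies in $\textsc{GMM}(G,\pi)$.

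For the cost bound I will charge recursive calls to their target edges. Consider the experiment that runs $\mathrm{VO}(v,\pi)$ once for every $v \in V$, and let $Q_{\mathrm{VO}}(e,\pi)$ be the number of recursive edge-oracle calls to $e$ that this experiment generates. Then
\[
\sum_{v} T(v,\pi) = \sum_{e} Q_{\mathrm{VO}}(e,\pi),
\]
and averaging over a uniform $v$ gives
\[
\Exp_{v,\pi}[T(v,\pi)] = \frac{1}{n}\sum_{e}\Exp_{\pi}[Q_{\mathrm{VO}}(e,\pi)].
\]
Since $|E| = n\bar d/2$, it suffices to show $\Exp_\pi[Q_{\mathrm{VO}}(e,\pi)] = O(\log n)$ for every edge $e$, or even just on average over edges.

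The key step is this per-edge bound. Unlike the setting of \cref{lem:yyi}, where the bound carries a factor $\Delta$, the vertex oracle's early termination means the top-level calls at $v$ are made only to edges ranked below $v$'s matched edge. I would follow the argument of \cite{Behnezhad21}. Every recursive call to $e$ arrives along a path of edges $e_1, e_2, \dots, e_k = e$ whose ranks decrease and in which consecutive edges are adjacent. Moreover, a call along this path occurs only if each earlier edge on the path was explored before the scan at its predecessor terminated.

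The main obstacle is bounding the expected number of such ``query paths'' ending at $e$ by $O(\log n)$ rather than by a quantity depending on $\Delta$. For each path, the probability that the ranks decrease along it is $1/k!$. Summing this naively over all paths gives degree-dependent counts, so the weight of each path has to be controlled by the stopping condition. I would first try to show that, for each length $k$, the expected number of explored paths reaching $e$ is at most $O(1)$ for all $k \le O(\log n)$. Above that length I would use a separate bound: a $\pi$-decreasing explored path of length $\omega(\log n)$ exists only with probability $1/\poly(n)$. Combining the two regimes gives $O(\log n)$ in-queries per edge in expectation, which yields the claimed $O(\bar d \log n)$ bound.

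Since this is a cited theorem (\cite[Theorem 3.5]{Behnezhad21}), in the paper I would simply invoke it. The sketch above records only the structure of its proof.
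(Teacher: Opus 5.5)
Invoking \cite[Theorem 3.5]{Behnezhad21} is exactly what the paper does; it gives no proof of this statement. Your description of $\mathrm{VO}$ (scan incident edges in increasing rank, stop at the first edge on which \textsc{GMM} returns \True) is correct, and so is the charging identity $\sum_v T(v,\pi)=\sum_e Q_{\mathrm{VO}}(e,\pi)$. The sketch itself, however, is not a proof, and it should not be presented as one.

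The reduction only restates the theorem. A bound on $\Exp_\pi[Q_{\mathrm{VO}}(e,\pi)]$ averaged over edges is equivalent to the claim. The per-edge version is stronger than needed and you have not established it. All the content of \cite{Behnezhad21} sits in the step you leave as ``I would first try to show'', namely an $O(1)$ bound on the expected number of explored paths of each length ending at an edge. You give no mechanism for this bound. The $1/k!$ probability of decreasing ranks cannot give a $\Delta$-free bound, because on the order of $(2\Delta)^{k-1}$ paths of length $k$ can end at $e$. What is needed is a genuine charging argument over permutations, in the style of \cite{YoshidaYI09}, that exploits the stopping rule of the oracles.

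The long-path regime also fails as stated. Knowing that an explored path of length $\omega(\log n)$ exists only with probability $1/\poly(n)$ does not bound its contribution to $\Exp_\pi[Q_{\mathrm{VO}}(e,\pi)]$. The oracle does no memoization, and you have no a priori polynomial cap on the number of recursive calls for a fixed $\pi$. What you would actually need is a first-moment bound on the expected number of explored paths of each length $k>C\log n$ that is summable in $k$, not a high-probability statement.
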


\begin{lemma}\label{lem:low-deg-matching-vertex}
There exists a randomized vertex-query algorithm that computes $\eps n \pm \sum_v x_v'$, where $x_v'$ is the output of \textsc{MatchLowDegreeVQ} (\Cref{alg:low-degree-matching-vq}) on inputs $G, U, \vec{x}, \eps$, using vertex queries to $G$.
The algorithm uses $O(\overline{d}_U \cdot \eps^{-3}(\log n)(\log \delta^{-1}))$ vertex queries in expectation and succeeds with probability $1 - \delta$,
where $\overline{d}_U$ is the average degree in $G[U]$.
\end{lemma}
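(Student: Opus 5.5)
The plan is to estimate $\sum_v x'_v$ by sampling and to answer each sampled vertex with the local greedy-matching oracle of \Cref{lem:d-query-gmm}, run on the copy graph $G'[U']$ built inside \textsc{MatchLowDegreeVQ}.

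First, write $x'_v = x_v + \eps\cdot m_v$, where $m_v$ is the number of copies $(v,i)\in U'$ matched in $M$. Each $x_v\in[0,1]$ is stored explicitly, so only the term $\eps\sum_v m_v$ has to be estimated. We fix $M := \textsc{GMM}(G'[U'],\pi)$ for a uniformly random permutation $\pi$. This is a valid maximal matching, so the estimate is exact for some run of \textsc{MatchLowDegreeVQ}.

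Second, we simulate vertex queries to $G'[U']$ using vertex queries to $G$. A query on $(v,i)$ is answered by querying $v$ once and returning
\[
\{(u,j) : u\in N(v)\cap U,\ 0\le j\le x_u/\eps\}.
\]
Membership in $U'$ is decidable because $U$ and $\vec{x}$ are known. The graph $G'[U']$ has at most $n/\eps$ vertices and average degree at most $\overline{d}_U/\eps^2$, since each edge of $G[U]$ lifts to at most $(1/\eps+1)^2$ edges of $G'$. Random ranks for lifted edges are generated lazily and consistently, indexed by the lifted edge.

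Third, we sample $k = O(\eps^{-2}\log\delta^{-1})$ vertices $v\in V$ uniformly. For each sample we pick a uniform copy index $i\in\{0,\dots,1/\eps\}$ and check whether $(v,i)\in U'$. If it is, we run $\mathrm{VO}((v,i),\pi)$. This gives an indicator whose expectation is proportional to $\sum_v m_v/(n(1/\eps+1))$. Rescaling gives an unbiased estimator of $\eps\sum_v m_v$ with range $O(n)$ per sample. Hoeffding then yields additive error $\eps n$ with probability $1-\delta$.

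For the query count, a uniform $(v,i)$ over all of $V\times[1/\eps+1]$ is not uniform over $U'$. Conditioned on landing in $U'$, however, it is uniform over $U'$. So the expected cost per sample is at most the \Cref{lem:d-query-gmm} bound $O(\overline{d}_{G'[U']}\log n)=O(\eps^{-2}\overline{d}_U\log n)$. The factor $|U'|/(|V|/\eps)\le 1$ only helps. Multiplying by $k$ samples gives $O(\overline{d}_U\,\eps^{-4}\log n\log\delta^{-1})$. This is one $1/\eps$ worse than claimed.

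\textbf{Main obstacle.} The main obstacle is recovering the stated $\eps^{-3}$. Two tightenings could each save a factor of $1/\eps$:
\begin{itemize}
\item Bound the average degree of $G'[U']$ by $\overline{d}_U/\eps$ rather than $/\eps^2$, by normalizing against $|U'|$, which can be as large as $|U|/\eps$.
\item Observe that each sampled vertex's contribution $\eps m_v$ lies in $[0,1]$, so summing over all copies of $v$ (one oracle call per copy, $1/\eps$ calls) gives a $[0,1]$-bounded sample with variance small enough that $O(\eps^{-2}\log\delta^{-1})$ samples suffice.
\end{itemize}
A further subtlety is that the oracle calls across samples share $\pi$ and are correlated. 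Since the bound in \Cref{lem:d-query-gmm} is an expectation over $(v,\pi)$ jointly, linearity of expectation still handles the total cost.
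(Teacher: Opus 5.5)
You follow the same route as the paper: sample a uniform copy $(v,i)\in V'$, run the $\mathrm{VO}$ oracle of \Cref{lem:d-query-gmm} on $G'[U']$ simulated by vertex queries to $G$, and apply Hoeffding to $\Theta(\eps^{-2}\log\delta^{-1})$ samples. Those parts of your argument are fine. The gap is that you only prove $O(\overline{d}_U\,\eps^{-4}\log n\log\delta^{-1})$ and leave the stated bound to two speculative tightenings. The missing $1/\eps$ is exactly the factor you discarded as one that ``only helps''. The oracle runs only when the copy lands in $U'$, and then on a uniform vertex of $U'$, so one sample costs in expectation
\[
\frac{|U'|}{|V'|}\cdot O\Bigl(\overline{d}_{G'[U']}\log n\Bigr)
= \frac{|U'|}{|V'|}\cdot O\Bigl(\frac{|E(G'[U'])|}{|U'|}\log n\Bigr)
= O\Bigl(\frac{|E(G'[U'])|}{|V'|}\log n\Bigr).
\]
We have $|E(G'[U'])|\le(1/\eps+1)^2|E(G[U])|$, $|V'|=(1/\eps+1)n$, and $|E(G[U])|=\overline{d}_U|U|/2\le\overline{d}_U n/2$. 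Hence each sample costs $O(\eps^{-1}\overline{d}_U\log n)$, and $\Theta(\eps^{-2}\log\delta^{-1})$ samples give the claimed $\eps^{-3}$.

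Neither proposed tightening works on its own. The first, $\overline{d}_{G'[U']}=O(\overline{d}_U/\eps)$, is what the paper's one-line computation implicitly uses; it writes $O(\overline{d}_{G'[U']}\eps^{-2}\cdots)=O(\overline{d}_U\eps^{-3}\cdots)$. As a bound on the average degree alone, it fails for general $\vec{x}$. Let $c_v$ be the number of copies of $v$ in $U'$. Then $\overline{d}_{G'[U']}=2\sum_{\{u,w\}\in E(G[U])}c_uc_w\big/\sum_{v\in U}c_v$. Suppose $U$ has $N$ vertices isolated in $G[U]$ with $c_v=O(1)$, plus $k\ll\eps N$ vertices carrying all edges with $c_v=\Theta(1/\eps)$. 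Then this is $\Theta(\overline{d}_U/\eps^2)$. Only the product with the landing probability $|U'|/|V'|$ is controlled, as computed above; that is the rigorous way to justify the paper's step. The second tightening is worse, not better. Querying every copy of a sampled $v\in V$ costs $\frac1n\sum_{(v,i)\in U'}\E[T((v,i),\pi)]=O(|E(G'[U'])|\log n/n)$, which can be $\Theta(\eps^{-2}\overline{d}_U\log n)$ per sample. Hoeffding still needs $\Theta(\eps^{-2}\log\delta^{-1})$ samples for additive error $\eps n$, so you are back at $\eps^{-4}$.
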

\begin{proof}
    Let $G' = (V', E')$ and $U' \subset V'$ be as in \textsc{MatchLowDegreeVQ}.

    It suffices to estimate to within an additive error of $\eps$ the proportion $p$ of vertices in $V'$ which are matched by a maximal matching of $G'[U']$,
    since given such an estimate $\hat p$ we can output $\hat p n + \sum_{v \in V} x_v = \eps n \pm \sum_{v \in V} x_v'$.
    We can accomplish this by sampling $\Theta(\eps^{-2}\log \delta^{-1})$ random vertices and determining if they are matched.

    Draw a permutation $\pi$ uniformly at random over $E(G'[U'])$ and let $M$ be the the matching computed by the algorithm of \Cref{lem:d-query-gmm} on input $G'[U']$.
 
    Pick a vertex $v \in V'$ uniformly at random.  If $V' \notin U'$ then it is not matched, so assume it is in $U'$.  We run the oracle $\mathrm{VO}(v, \pi)$ on $G'[U']$, implementing the oracle's adjacency-list access to $G'[U']$ using vertex queries to $G$.
    The expected total number of vertex queries is then $O(\overline{d}_{G'[U']} \cdot \eps^{-2} (\log n)(\log \delta^{-1})) = O(\overline{d}_U \cdot \eps^{-3}(\log n)(\log \delta^{-1}))$.
 \end{proof}

\subsection{Proof of \Cref{thm:vertex-query}}
Finally, we will combine the results of the previous subsections and bound the total number of vertex queries, obtaining \Cref{thm:vertex-query}.
As before, we will bound the expected number of queries; the same repetition technique discussed in \Cref{remark:strengthening} applies.
(The multiplicative error technique does not apply due to the fact that we cannot efficiently detect singleton vertices using vertex queries.)

\begin{proof}[Proof of \Cref{thm:vertex-query}]
From \Cref{lem:low-deg-matching-vertex} we obtain an estimate $\hat{x} = \eps n \pm \sum_{v \in S} x_v$, from which we compute and return
\[\hat{\mu}_S :=  \frac{1}{1+3\eps} \cdot \frac13 \hat x - \eps n.\]
Then this is a $(3, \eps n)$-approximation according to
\Cref{lem:fractional-validity-vq,lem:apx-factor-vq}.

Now we bound the number of vertex queries in the high-degree phase. 
Vertex queries in this phase are only made during calls to \textsc{ComputeDegrees}.
By the argument in the proof of \Cref{clm:subset-query-bound},
\textsc{ComputeDegrees} is called $O((\log n)/\eps)$ times.
By \Cref{lem:degree-estimation-vertex-queries}, the query complexity of $\textsc{ComputeDegrees}_\theta$ with failure probability $n^{-c}$ is 
\[O\left( \frac{n \log n}{\eps^2 \theta} \right).\]
We are calling it with 
\[\theta = \frac{\eps^2 \sqrt{n}}{2 \ln n},\]
therefore the total query complexity of all calls to \textsc{ComputeDegrees} is
\[O\left( \frac{n \log^2 n}{\eps^4 \sqrt{n}} \cdot \frac{\log n}{\eps} \right) =O\left( \frac{\sqrt{n} \log^3 n}{\eps^5} \right) =  \sqrt{n} \cdot \poly(\log n, \eps^{-1}). \]

Finally, the number of vertex queries in the low-degree phase is bounded by $\overline{d}_L \cdot \poly(\log n, \eps^{-1}) \le \sqrt{n} \cdot \poly(\log n, \eps^{-1})$ according to \Cref{lem:low-deg-matching-vertex}, so this is the query complexity for the algorithm as a whole.
\end{proof}

\section{Better Approximation Ratios with Subset Queries}
\label{sec:mwu}

In this section, we show that if we restrict our attention to subset queries, then we can outperform \cref{alg:matching} in terms of approximation ratio. 
We remark that \cref{alg:matching} applies under weaker guarantees, namely those of the $\textsc{ComputeDegrees}_\theta$ subroutine in \cref{sec:vertex-query-implementation}, and can therefore be implemented efficiently using either subset or vertex queries.
This flexibility comes at the cost of a larger approximation ratio.
In contrast, the algorithm in this section requires stronger guarantees (as in \cref{assume:compute-degrees}) that are achievable only through subset queries.

We prove
\begin{theorem}
\label{thm:mwu-subset-query}
    Given an $n$-vertex graph $G$, for any $\epsilon > 0$, there is a randomized algorithm that $(1.5 + \epsilon)$-approximates the size of maximum matching in $G$ using $\poly(\log n, 1/\epsilon)$ subset queries to $G$ w.h.p.
    If the graph is bipartite, then the approximation ratio improves to $1 + \epsilon$.
\end{theorem}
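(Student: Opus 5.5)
The plan is to approximate the minimum fractional vertex cover $\tau_f(G)$ to within $1\pm\eps$ using $\poly(\log n,1/\eps)$ calls to \textsc{ComputeDegrees} (\Cref{lem:degree-estimation-subset-queries}). By LP duality $\tau_f(G)=\nu_f(G)$. \Cref{fact:frac-gap-three-halves} gives $\mu(G)\le\nu_f(G)\le\frac32\mu(G)$, and for bipartite graphs $\nu_f(G)=\mu(G)$. So the output $\hat\tau/(1.5(1+\eps))$ in general graphs, or $\hat\tau/(1+\eps)$ in bipartite graphs, gives the claimed approximations after rescaling $\eps$.

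To approximate $\tau_f$ I would follow the MWU framework of \cite{Liu24}, written as a covering/packing feasibility problem. Guess the value $\lambda$ geometrically, which costs an $O(\log n/\eps)$ factor. For a fixed guess, keep a candidate cover $\vec z\in[0,1]^V$ with $\sum_v z_v=\lambda$, and weights on edges $w_{uv}=\exp(-\eta(z_u+z_v))/W$, normalized by the total $W$. At each step the oracle checks the averaged edge constraint $\sum_{uv}w_{uv}(z_u+z_v)\ge 1-\eps$.

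If the check fails, the oracle increases $z$ on the vertices $v$ where the weighted degree $\sum_{u\in N(v)}w_{uv}$ is largest. If no sufficiently large improvement exists, the weights $w$, rescaled, form a fractional matching of value about $\lambda$, which certifies $\tau_f>\lambda(1-\eps)$. Standard MWU analysis gives $O(\log n/\eps^2)$ rounds with width bounded by $O(1)$.

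The key implementation step is computing weighted degrees $\sum_{u\in N(v)}e^{-\eta(z_u+z_v)}$ for all $v$ at once with subset queries. Factor the sum as $e^{-\eta z_v}\sum_{u\in N(v)}e^{-\eta z_u}$. Then round the values $z_u$ to multiples of $\eps/\eta$, which changes each weight by only a $1\pm\eps$ factor, and group the vertices into $O(\eta/\eps)=\poly(\log n,1/\eps)$ classes $U_j=\{u: z_u\approx j\eps/\eta\}$. A call to $\textsc{ComputeDegrees}(G,U_j,\eps)$ returns $(1\pm\eps)|N(v)\cap U_j|$ for every $v$ simultaneously. Summing over $j$ with weights $e^{-\eta j\eps/\eta}$ gives each weighted degree up to a $1\pm O(\eps)$ factor.

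The total normalizer $W$ can be estimated the same way, since $\sum_v e^{-\eta z_v}\cdot(\text{weighted degree of } v)=2W$. Because the required guarantee is multiplicative for every vertex, including those of low degree, this is exactly where subset queries (through \Cref{assume:compute-degrees}) are necessary and vertex queries do not suffice.

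The main obstacle is showing that the MWU argument of \cite{Liu24} tolerates oracle answers that are only accurate up to $1\pm\eps$ multiplicative error, and that this error compounds only additively over the rounds. I would first try to absorb the error into the $(1\pm\eps)$ slack in the approximate-oracle version of the MWU regret bound: multiplicative errors in the gains translate into an $O(\eps)$ loss in the final approximation. I would also check that the stored state, namely $\vec z$ at granularity $\eps/\eta$ and a constant number of degree vectors, fits in $\widetilde O(n)$ space, and that the number of \textsc{ComputeDegrees} calls is $O(\log n/\eps)$ guesses times $O(\log n/\eps^2)$ rounds times $O(\eta/\eps)$ classes, which is $\poly(\log n,1/\eps)$. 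A union bound over these calls, each with failure probability $1/\poly(n)$, gives the w.h.p. guarantee.
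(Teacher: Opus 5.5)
Your overall route matches the paper's. You reduce to a $(1+\eps)$-approximation of $\tau_f(G)=\nu_f(G)$ via Liu's MWU framework, compute all weighted degrees by bucketing the exponents and calling \textsc{ComputeDegrees} once per bucket, and finish with the $3/2$ integrality gap (or $\nu_f=\mu$ in bipartite graphs).

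The gap is in the step you pass off as ``standard MWU analysis gives $O(\log n/\eps^2)$ rounds with width $O(1)$.'' The two properties you attribute to the oracle conflict. Width $O(1)$ forces each increment to be spread out, with $z_v\le 1$ and total mass equal to your guess $\lambda$. Write $a_v=\sum_{u\in N(v)}w_{uv}$ with $\sum_e w_e=1$. Then the capped oracle fails exactly when the $\lambda$ largest $a_v$ sum to less than $1-\eps$. That does not make $w$, rescaled, a fractional matching of value about $\lambda$. The best rescaling, by $\max_v a_v$, has value $1/\max_v a_v$, and failure only gives $\max_v a_v<1-\eps$.

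For a concrete instance, take five $200$-leaf stars whose $1000$ leaves are joined by a perfect matching. Put weight $0.00075$ on hub--leaf edges and $0.0005$ on leaf--leaf edges. The five hubs have $a_v=0.15$ and every leaf has $a_v=0.00125$. The capped oracle fails for $\lambda=10$, yet the rescaled $w$ has value below $7$. Conversely, if the oracle may put all its mass on $\arg\max_v a_v$, your certificate is valid. But then $z_u+z_v$ can be as large as $\lambda$, the width is $\Theta(\lambda)$, and the round count $\widetilde O(\lambda/\eps^2)$ can be linear in $n$.

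What is missing is the argument that the capped oracle succeeds whenever the guess is feasible and not too loose. This is exactly the paper's adaptation of Liu's Lemma~3.5 to general graphs. In the paper's notation ($\lambda$ is the exponent, $c$ the inverse guess), sort vertices by decreasing $\tilde s_v$. Take an optimal fractional vertex cover $x$, w.l.o.g.\ in $[0,1]^V$ with $\sum_v x_v=\tau_f(G)$.
Since $x$ covers every edge, $\sum_v x_v\tilde s_v\ge(1-\eps/2)\sum_{(u,v)\in E}e^{-\lambda(y_u+y_v)}(x_u+x_v)\ge(1-\eps/2)\Phi(y)$.
Since $x_v\le 1$, shifting weight toward the largest $\tilde s_v$ gives $\sum_v x_v\tilde s_v\le\frac{\tau_f(G)}{k}\sum_{i\le k}\tilde s_{v_i}$ with $k=\floor{\tau_f(G)}$.

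So the top-$k$ prefix already meets the averaged condition. The chosen prefix therefore has size at least $\floor{\tau_f(G)}$, and the uniform increment on it has per-vertex mass at most $1/\floor{\tau_f(G)}$. This is the bounded width, and it holds for every $c\in[\frac{1}{2\tau_f(G)},\frac{1}{\tau_f(G)}]$. The largest $c$ on which all $T$ rounds succeed then gives the approximation, and no matching certificate is needed.

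Finally, the step you single out as the main obstacle is the easy one. The paper uses one-sided estimates $(1-\eps/2)s_v\le\tilde s_v\le s_v$ and builds the $(1-\eps)$ slack into the oracle's guarantee. Each increment is then valid with respect to the true $s_v$, and Liu's analysis applies unchanged.
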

This is achieved through a $(1 + \epsilon)$-approximation of the minimum \emph{fractional} vertex cover, whose value is equal to the maximum fractional matching.
In turn, this yields a $(1.5 + \epsilon)$-approximation for general (i.e., possibly non-bipartite) graphs, due to the $1.5$ integrality gap, and a $(1 + \epsilon)$-approximation for bipartite graphs.

Recall that, for a Boolean function $f$ on a DAG $G$, the conflict graph $G_f$ satisfies $\dmon{f} = \mu(G_f)$ (\cref{fact:matching-dmon}).
Therefore, since $G_f$ is bipartite, and a subset query to $G_f$ can be implemented in $\sqrt{n}^{1 + o(1)}$(\cref{lem:subset-query-streaming}) passes and space $\widetilde{O}(n/\epsilon^2)$, we have:

\begin{theorem}\label{thm:mwu-dmon}
    For any $\epsilon > 0$, there is a randomized streaming algorithm that takes $\sqrt{n}^{1+o(1)} \cdot \poly(1/\eps)$  passes over the stream, uses $\widetilde{O}(n/\eps^2)$ space, and w.h.p. $(1+\eps)$-approximates $\dmon{f}$.
\end{theorem}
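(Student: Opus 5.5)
The plan is to derive \cref{thm:mwu-dmon} from \cref{thm:mwu-subset-query}: run the bipartite version of the subset-query algorithm on the violation graph $G_f$ and answer each of its subset queries with the streaming procedure of \cref{lem:subset-query-streaming}. By \cref{fact:matching-dmon}, $\dmon{f} = \mu(G_f)$, so any $(1+\eps)$-approximation of $\mu(G_f)$ is a $(1+\eps)$-approximation of $\dmon{f}$.

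First I would check that $G_f$ is bipartite. Every edge $\{u,v\}$ of $G_f$ joins a vertex with $f=1$ (the ancestor) to a vertex with $f=0$ (the descendant). Hence $f^{-1}(1)$ and $f^{-1}(0)$ form a bipartition. This is what upgrades the ratio from $1.5+\eps$ to $1+\eps$: in a bipartite graph the maximum fractional matching equals $\mu$, since there is no integrality gap.

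Second comes the simulation. One pass stores $f$ in $O(n)$ space. The algorithm of \cref{thm:mwu-subset-query} is adaptive, so its queries are answered one at a time. Each query costs $\sqrt{n}^{1+o(1)}$ passes and $\widetilde O(n)$ space by \cref{lem:subset-query-streaming}. The number of queries is $\poly(\log n, 1/\eps)$. Multiplying gives $\sqrt{n}^{1+o(1)}\cdot\poly(1/\eps)$ passes, since the $\poly\log n$ factor is absorbed into the $n^{o(1)}$. One space-saving option is to compute the shortcut set $H$ of \cref{prop:hopset} once and reuse it across all queries. However, storing $H$ costs $\widetilde O(n)$ space, so recomputing $H$ for each query is also fine.

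For space, the between-query state of the algorithm (an MWU weight vector plus degree estimates) is $O(n)$ numbers of $O(\log(n/\eps))$ bits each. The degree estimator of \cref{lem:degree-estimation-subset-queries} uses $O(n\log(n/\eps))$ space. Together with the query procedure this gives a total of $\widetilde O(n/\eps^2)$. This requires that the MWU implementation keeps only per-vertex state and never stores the edges of $G_f$. I would verify this against the description of the algorithm behind \cref{thm:mwu-subset-query}.

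For correctness, each streaming query fails with probability $n^{-c}$. A union bound over the $\poly(\log n, 1/\eps)$ queries, combined with the w.h.p.\ guarantee of \cref{thm:mwu-subset-query}, gives overall success w.h.p.

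I expect the main obstacle to be confirming that the algorithm of \cref{thm:mwu-subset-query} accesses the graph \emph{only} through subset queries that return $N(S)$ without edge identities, as \cref{def:subset-query} requires. In particular, its low-degree and rounding steps, if it has any, must not need explicit adjacency lists. If some step does need the explicit graph once degrees are small, I would fall back on the trick from the proof of \cref{thm:main}: build the low-degree part of $G_f$ explicitly in one pass, using $\widetilde O(n/\eps^2)$ space.
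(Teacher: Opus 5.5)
Your proposal is correct and follows the paper's argument. The paper also combines $\dmon{f}=\mu(G_f)$, the bipartiteness of $G_f$ (which you justify explicitly via the bipartition into $f^{-1}(1)$ and $f^{-1}(0)$), the bipartite $(1+\eps)$ guarantee of \cref{thm:mwu-subset-query}, and the $\sqrt{n}^{1+o(1)}$-pass implementation of each subset query from \cref{lem:subset-query-streaming}.

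Your concern about a low-degree or rounding phase does not arise. The MWU algorithm accesses $G_f$ only through \textsc{ComputeDegrees} calls on the sets $V_i$, so it never needs the fallback of building $G_f$ explicitly.
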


The algorithm relies on the \textsc{ComputeDegrees} subroutine, as defined in \cref{assume:compute-degrees}, which produces a $(1+\epsilon)$-approximation of all the vertex degrees using $O\left(\frac{\log^2 n}{\epsilon^3}\right)$ subset queries (\cref{lem:degree-estimation-subset-queries}).
We adapt Lemma 3.5 of Liu~\cite{Liu24}, which implements a multiplicative-weight-update (MWU) oracle using $O(\epsilon^{-2}\log n)$ calls to \textsc{ComputeDegrees}. Then, the value of the minimum fractional vertex cover can be $(1 + \epsilon)$-approximated using $T = O(\epsilon^{-2} \log n)$ calls to the MWU oracle.

\subsection*{Definitions}

We begin with some definitions. Given a graph $G = (V, E)$ a \emph{fractional vertex cover} is a set of values $y \in \R_{\geq 0}^V$, such that $$y_u + y_v \geq 1 \qquad \text{for every }(u,v)\in E.$$
We use $\tau_f(G)$ to denote the value of the minimum fractional vertex cover, which is the minimum possible value of $\norm{y}_1$ among all fractional vertex covers $y$. 

For the MWU framework, we consider parameters $c, \lambda, \epsilon > 0$, and define a potential function $\Phi: \R_{\geq0}^V \to \R$,
$$
\Phi(y) = \sum_{(u, v)\in E}\exp(-\lambda(y_u + y_v)).
$$
For a fixed $y$, we similarly define $s_v$ to be the sum of the terms in the potential corresponding to edges incident to $v$:
$$
s_v = \sum_{u \in N(v)}\exp(-\lambda(y_u + y_v)).
$$

\begin{definition}[MWU oracle] \label{def:mwu-oracle}
    Given a graph $G = (V, E)$, $y \in \R_{\geq0}^V$, and $c, \lambda, \epsilon > 0$, an MWU oracle either fails or outputs an update vector $\Delta \in \R_{\geq 0}^V$ such that:
    \begin{enumerate}
        \item $\sum_v \Delta_v = 1$, \label{item:mwu-oracle-1}
        \item $\sum_v \Delta_v s_v \geq (1 - \epsilon)c \cdot \Phi(y)$, and \label{item:mwu-oracle-2}
        \item $\Delta_v \leq \frac{\epsilon}{2\lambda}$ for all $v \in V$. \label{item:mwu-oracle-3}
    \end{enumerate}
    When $\frac{1}{2\tau_f(G)} \leq c \leq \frac{1}{\tau_f(G)}$, the oracle is guaranteed to succeed.
\end{definition}

\subsection*{The MWU Framework}\label{sec:mwu}

First, we give an overview of how an MWU oracle can be used to obtain a $(1 + \epsilon)$-approximation of the minimum fractional vertex cover.
The reader may refer to \cite[Section 3]{Liu24} for a full proof.
We remark that their proof is stated for $\mu(G)$, the size of the maximum matching, in bipartite graphs.
The only modification in our setting appear is in \cref{clm:mwu-oracle}, where we show how the MWU oracle is obtained from \textsc{ComputeDegrees}.

\begin{claim}[See {\cite[Section 3]{Liu24}}]
    \label{clm:mwu-framework}
    Given a graph $G$, for $\lambda = \epsilon/10c$, a $(1 + \epsilon)$-approximate minimum fractional vertex cover can be computed using $\poly(\log n, 1/\epsilon)$ calls to the MWU oracle.
\end{claim}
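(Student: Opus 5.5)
The plan is the standard packing/covering MWU argument of Liu, in three steps: fix a guess for $\tau_f(G)$, run the multiplicative-weights updates driven by the oracle, and analyze the potential.

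\textbf{Guessing $c$.} Since $\tau_f(G)\in[1,n]$ whenever $E\neq\emptyset$, we try $c=2^{-j}$ for $j=0,\dots,\lceil\log_2 n\rceil$. For one of these, $\frac{1}{2\tau_f}\le c\le\frac1{\tau_f}$, and there the oracle is guaranteed to succeed. This costs only an $O(\log n)$ factor.

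\textbf{Iteration.} For a fixed $c$, set $\lambda=\epsilon/(10c)$ and start from $y=0$. For $T=\poly(\log n,1/\epsilon)$ rounds, call the MWU oracle on the current $y$ and update $y\gets y+\Delta$. After $T$ rounds, $\norm{y}_1=T$ by \cref{item:mwu-oracle-1}. The output is $y/\min_{(u,v)\in E}(y_u+y_v)$, which is a valid fractional vertex cover. It therefore suffices to show that every edge has $y_u+y_v\ge(1-O(\epsilon))\,T c$, which gives value at most $(1+O(\epsilon))/c$. For the right guess this is $\le (1+O(\epsilon))\cdot 2\tau_f$. That is only a factor 2, so I would instead run this for all $c$ and take the largest $c$ for which the oracle never fails.

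\textbf{Potential analysis.} By \cref{item:mwu-oracle-3}, each coordinate increases by at most $\epsilon/(2\lambda)$, so the exponent $\lambda\Delta_v$ is at most $\epsilon/2$ and $e^{-x}\le 1-(1-\epsilon)x$ applies. Each edge term of $\Phi$ then shrinks by a factor $\exp(-\lambda(\Delta_u+\Delta_v))$, and summing gives
\[
\Phi(y+\Delta)\le \Phi(y)-(1-\epsilon)\lambda\sum_v\Delta_v s_v\le \Phi(y)\bigl(1-(1-\epsilon)^2\lambda c\bigr),
\]
using \cref{item:mwu-oracle-2}. After $T$ rounds,
\[
\Phi\le m\,\exp\bigl(-(1-\epsilon)^2\lambda cT\bigr).
\]
Every single edge term is bounded by $\Phi$, so
\[
\lambda(y_u+y_v)\ge (1-\epsilon)^2\lambda cT-\ln m.
\]
With $\lambda c=\epsilon/10$ and $T=\Theta(\epsilon^{-2}\log n)$, the $\ln m$ term becomes an $O(\epsilon)$ fraction of the main term, giving $y_u+y_v\ge(1-O(\epsilon))cT$ as required.

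\textbf{Fixing the factor of 2.} This is the main obstacle. I would sweep $c$ over a finer grid $c=(1+\epsilon)^{-j}$. If the oracle succeeds throughout for some $c$, the cover we obtain certifies $\tau_f\le(1+O(\epsilon))/c$. Conversely, the oracle's guaranteed success for any $c\le 1/\tau_f$ within the window ensures that some grid point within a $(1+\epsilon)$ factor of $1/\tau_f$ succeeds. So the largest successful $c$ yields a $(1+O(\epsilon))$-approximation. The total number of oracle calls is then $O(\epsilon^{-1}\log n)\cdot T=\poly(\log n,1/\epsilon)$.

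The remaining care is in failure detection: if the oracle fails at some round, we discard that $c$. We must also check that a successful run always produces a valid cover regardless of whether $c$ is in the window. The latter holds because the potential argument used only the oracle's output properties, not the window. Rescaling $\epsilon$ by a constant completes the plan.
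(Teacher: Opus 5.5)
Your proposal is correct and follows the same route as the paper's sketch. Like the paper, you use geometric guesses $c=(1+\epsilon)^{-i}$, run $T=\Theta(\epsilon^{-2}\log n)$ additive oracle updates from $y=0$ with $\lambda=\epsilon/(10c)$, and keep the largest $c$ for which every oracle call succeeds; your explicit bound $\Phi(y+\Delta)\le\Phi(y)\bigl(1-(1-\epsilon)^2\lambda c\bigr)$ supplies the potential analysis that the paper defers to Liu. (The coarse grid $c=2^{-j}$ becomes redundant once you sweep the finer grid. Also, in the query setting you should rescale by the certified bound $(1-O(\epsilon))cT$ rather than by $\min_{(u,v)\in E}(y_u+y_v)$, since computing that minimum needs edge access.)
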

\begin{proof}[Proof sketch]
    First, observe that the approximation problem can be reduced to a decision problem as follows:
    For $O(\frac{\log n}{\epsilon})$ choices of $c = (1 + \epsilon)^{-i}$, we test whether there exists a $y \in \R_{\geq 0 }^V$ such that
    \begin{equation}
    \sum_v y_v = 1 \qquad \text{and} \qquad y_u + y_v \geq c, \ \forall(u,v) \in E. \label{eq:mwu-decision}
    \end{equation}
    This is feasible if and only if $c \leq 1/\tau_f(G)$. Therefore, the largest feasible guess yields a $(1 + \epsilon)$-approximation of $\tau_f(G)$.

    On a high level, the decision problem is solved as follows.
    Let $\lambda = \epsilon/10c$ and $T = 20\epsilon^{-2}\log n$.
    Starting with $y^{(0)} = 0$, for $t = 0, 1, \ldots, T - 1$, let 
    $y^{(t+1)} = y^{(t)} + \Delta^{(t)}$, where $\Delta^{(t)}$ is the output of the MWU oracle for $y^{(t)}$.
    It can be shown that when the oracle succeeds in every step, 
    $\overline{y} = y^{(T)}/T$ satisfies \eqref{eq:mwu-decision} up to a $1 + O(\epsilon)$ factor.  
    Hence, rescaling $\overline{y}$ gives a fractional vertex cover of size $(1 + O(\epsilon))\frac{1}{c}$.
    As a result, considering the largest $c$ for which the oracle succeeds in every step, yields the $(1 + O(\epsilon))$-approximation.
    We remark that the oracle may behave arbitrarily when $c < \frac{1}{2\tau_f(G)}$.
\end{proof}

Therefore, it suffices to show that an MWU oracle can be obtained using \textsc{ComputeDegrees}. We make the necessary adaptations for general graphs and $\tau_f(v)$.

\begin{claim}[Analogous to {\cite[Lemma 3.5]{Liu24}}]
    \label{clm:mwu-oracle}
    The MWU oracle, as defined in \cref{def:mwu-oracle}, can be implemented using $O(\epsilon^{-2} \log n)$ calls to \textsc{ComputeDegrees}.
\end{claim}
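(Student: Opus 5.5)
Following Liu's Lemma 3.5, the plan is to group vertices into geometric buckets by the value of $y_v$, so that each weight $s_v$ factors into a known term times a count of neighbors per bucket. Those counts are exactly what \textsc{ComputeDegrees} returns. Concretely, round each $y_v$ down to a multiple of $\epsilon/\lambda$ (with $\lambda=\epsilon/10c$), and cap it at $K=O(\lambda^{-1}\log(n/\epsilon))=O(\epsilon^{-2}\log n)$ levels beyond which terms are negligible. Let $B_j=\{u: \lfloor \lambda y_u/\epsilon\rfloor = j\}$ for $j=0,\dots,K$. Then
\[
s_v \approx e^{-\lambda y_v}\sum_j e^{-\epsilon j}\,|N(v)\cap B_j|,
\]
up to a factor $e^{\pm\epsilon}$. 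Calling $\textsc{ComputeDegrees}(G,B_j,\epsilon)$ once per bucket gives $(1\pm\epsilon)$ estimates of each $|N(v)\cap B_j|$ simultaneously for all $v$. Combining them yields $\hat s_v=(1\pm O(\epsilon))s_v$ for every $v$ from $K+1=O(\epsilon^{-2}\log n)$ calls, and also $\hat\Phi=\frac12\sum_v\hat s_v=(1\pm O(\epsilon))\Phi(y)$. The truncated buckets, with $\lambda y_u$ very large, contribute at most $n^2 e^{-\lambda y}$. This is negligible compared to $\Phi$ because $\Phi\ge e^{-2\lambda\max y}$ for some edge and $y$ grows by at most $1$ per step, so $\max_v y_v\le T$. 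I expect to need to account for this additively, or to argue these vertices never matter for the oracle.

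With the estimates in hand, the oracle chooses $\Delta$ greedily: sort vertices by $\hat s_v$ in decreasing order and assign $\Delta_v=\epsilon/(2\lambda)$ to the top vertices until total mass $1$ is reached (the last vertex gets the fractional remainder). This directly enforces conditions \ref{item:mwu-oracle-1} and \ref{item:mwu-oracle-3}. The oracle fails if $\sum_v\Delta_v\hat s_v<(1-\epsilon/2)c\hat\Phi$. When it does not fail, condition \ref{item:mwu-oracle-2} follows from the multiplicative accuracy of $\hat s$ and $\hat\Phi$ after rescaling $\epsilon$ by a constant.

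The main step is showing the oracle succeeds when $\frac1{2\tau_f}\le c\le\frac1{\tau_f}$. Take an optimal fractional vertex cover $z$ with $\|z\|_1=\tau_f$ and set $w=z/\tau_f$, so $\sum_v w_v=1$ and $w_u+w_v\ge 1/\tau_f\ge c$ on each edge. This gives
\[
\sum_v w_v s_v=\sum_{(u,v)\in E}(w_u+w_v)e^{-\lambda(y_u+y_v)}\ge c\,\Phi.
\]
To respect the cap, note $w_v\le 1/\tau_f$ for a half-integral optimum (which exists, so $z_v\le 1$), and $1/\tau_f\le 2c\le \epsilon/(2\lambda)$ since $\epsilon/(2\lambda)=5c$. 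Hence $w$ is a feasible point of the capped fractional knapsack $\max\{\sum_v\Delta_v \hat s_v: \sum_v\Delta_v=1,\ 0\le\Delta_v\le\epsilon/2\lambda\}$. The greedy solution is optimal for that knapsack with respect to $\hat s$, so $\sum_v \Delta_v\hat s_v\ge\sum_v w_v\hat s_v\ge(1-O(\epsilon))c\Phi$, and the success test passes. The delicate part is making the bucketing error and the truncation error both multiplicative relative to $\Phi$; this is where I expect the bookkeeping to concentrate.
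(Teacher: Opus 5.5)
\textbf{Truncation step.} Your bucketing estimate of $s_v$ via \textsc{ComputeDegrees} is the paper's, but the truncation paragraph does not hold up and should be dropped. The negligibility comparison is circular. To make $n^2 e^{-\lambda y_{\mathrm{cut}}}$ small relative to $e^{-2\lambda\max_v y_v}$ you need $\lambda y_{\mathrm{cut}} \ge 2\lambda\max_v y_v + \ln(n^2/\epsilon)$, and then nothing is truncated at all. In that case the number of buckets is about $2\lambda\max_v y_v/\epsilon$. With only $\max_v y_v\le T$ this is $\Theta(T/c)$, because $\lambda=\epsilon/(10c)$. That is $\Theta(nT)$ when $\tau_f(G)=\Theta(n)$, not $O(\epsilon^{-2}\log n)$. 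The fix is the bound the paper uses. Your oracle enforces $\Delta_v\le\epsilon/(2\lambda)$, so after at most $T$ steps $\lambda y_v\le \epsilon T/2$. Hence $\lfloor T/2\rfloor+1$ buckets of width $\epsilon$ cover every vertex exactly, with no truncation error. The paper uses $5T+1$ buckets of width $\epsilon/10$.

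\textbf{Different oracle construction.} With that repair your proof is correct, and it differs from the paper in how $\Delta$ is built and why it succeeds. The paper sorts by $\tilde s$ and takes the largest prefix $U$ whose average $\tilde s$ is at least $(1-\epsilon)c\Phi(y)$. It puts uniform weight $1/|U|$ on $U$ and fails if $|U|<2\lambda/\epsilon$. To show success, it concentrates an optimal cover $x$ (with $x_v\le 1$) onto the top $\lfloor\tau_f(G)\rfloor$ vertices, which shows that prefix already qualifies. This gives $|U|\ge\tau_f(G)/2\ge 1/(4c)\ge 2\lambda/\epsilon$, so the cap follows from a size bound on $U$.

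You instead solve the capped fractional knapsack exactly by greedy. You observe that $z/\tau_f(G)$ is itself feasible, since its entries are at most $1/\tau_f(G)\le 2c\le 5c=\epsilon/(2\lambda)$. So the success bound follows from optimality in one line, and the per-coordinate cap holds by construction rather than through a counting argument on $|U|$.

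Your success test also uses $\hat\Phi=\frac12\sum_v\hat s_v$. The paper's prefix rule is written with the exact $\Phi(y)$, which the algorithm cannot compute, so your version makes that step honest. Just take the estimation accuracy to be a sufficiently small constant multiple of the $\epsilon/2$ slack in your test, so that both directions of condition 2 go through.
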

\begin{proof}
    First, we obtain estimates $\tilde{s}_v$ for $s_v$ for all vertices $v$:
    \begin{equation}
    (1 - \epsilon / 2) s_v \leq \tilde{s}_v \leq  s_v, \label{eq:mwu-oracle-s}
    \end{equation}
    where recall that 
    $$s_v = \sum_{u \in N(v)} e^{-\lambda(y_u + y_v)}
    = e^{-\lambda y_v} \sum_{u \in N(v)} e^{-\lambda y_u}.
    $$
    To do so, we partition the vertices $V$ into sets $V_0, \ldots, V_{5T}$
    where
    $$
    V_i = \{v \in V\mid i\epsilon / 10 \leq \lambda y_v < (i+1)\epsilon/10\}.
    $$
    Note that $5T$ suffices as the maximum index, since by the construction of $y$, we have $y_v \leq T \cdot \frac{\epsilon}{2\lambda}$ (due to \Cref{item:mwu-oracle-3} of \cref{def:mwu-oracle}).
    Then, we estimate $\sum_{u \in N(v) \cap V_i} e^{-\lambda y_u}$ by $\card{N(v) \cap V_i} e^{-i\epsilon /10}$,
    where $\card{N(v) \cap V_i}$ can be estimated simultaneously for all $v$, up to a $(1 + \epsilon)$ factor using \textsc{ComputeDegrees}.
    Formally, letting $\tilde{d}_{v,i}$ be the $(1 + \epsilon)$-approximation for $\card{N(v) \cap V_i}$, we have:
    \begin{align*}
    \sum_i \tilde{d}_{v, i} e^{-i \epsilon/10}
    &= (1 \pm O(\epsilon)) \sum_i \card{N(v) \cap V_i} e^{-i \epsilon/10}\\
    &= (1 \pm O(\epsilon)) \sum_i \sum_{u \in N(v) \cap V_i} e^{-\lambda y_u}\\
    &= (1 \pm O(\epsilon)) \sum_{u \in N(v)} e^{-\lambda y_u}.
    \end{align*}
    Multiplying this value by $\frac{e^{-\lambda y_v}}{1 + O(\epsilon)}$ and rescaling $\epsilon$ appropriately, yields the desired estimate $\tilde{s}_v$.

    Next, we use these estimates $\tilde{s}_v$ to construct $\Delta$.
    Sort the vertices $v_1, v_2, \ldots, v_n$ in decreasing order of $\tilde{s}_v$.
    Let $U$ be the largest prefix that satisfies:
    \begin{equation}
    \frac{1}{\card{U}}\sum_{v\in U}\tilde{s}_v
    \geq (1-\epsilon)c\Phi(y). \label{eq:mwu-oracle-u}
    \end{equation}
    Then, let
    $$
    \Delta_v = \begin{cases}
        1 / \card{U}, & \text{if } v \in U, \text{ and} \\
        0, & \text{otherwise.}
    \end{cases}
    $$

    Observe that \Cref{item:mwu-oracle-1} of \Cref{def:mwu-oracle} is satisfied trivially by the choice of $\Delta$, and \Cref{item:mwu-oracle-2} is satisfied due to \eqref{eq:mwu-oracle-s} and \eqref{eq:mwu-oracle-u}.
    When $\card{U} < \frac{2\lambda}{\epsilon}$, the algorithm declares failure.
    When $c \in \left[\frac{1}{2\tau_f(G)}, \frac{1}{\tau_f(G)}\right]$, we show that $\card{U} \geq \frac{2\lambda}{\epsilon}$ and hence \Cref{item:mwu-oracle-3} holds. 

    We show that $\card{U} \geq \floor{\tau_f(G)}$.
    In turn, it follows that
    $$
    \card{U} \geq \frac{\tau_f(G)}{2} \geq \frac{1}{4c} \geq \frac{2\lambda}{\epsilon}.
    $$
    We argue that the prefix of size $k = \floor{\tau_f(G)}$ satisfies \eqref{eq:mwu-oracle-u}.
    Let $x \in \R^V_{\geq 0}$ be an optimal fractional minimum vertex cover.
    Consider $\sum_v x_v\tilde{s}_v$ as a weighted sum of $\{\tilde{s}_v\}_v$.
    Since $x_v \leq 1$ for all $v$, the sum can only grow, by concentrating the weight on the $k$ vertices with the largest $\tilde{s}_v$.
    That is, giving a weight of $\frac{\tau_f(G)}{k}$ to the first $k$ vertices:
    \begin{equation}
    \sum_v x_v\tilde{s}_v \leq \frac{\tau_f(G)}{k} \sum_{i=1}^k \tilde{s}_{v_i}. \label{eq:mwu-oracle-1}
    \end{equation}
    On the other hand, using the fact that $x$ is a vertex cover, we get:
    \begin{align}
    \sum_vx_v\tilde{s}_v
    &\geq(1-\epsilon/2)\sum_vx_vs_v \notag\\
    &=(1-\epsilon/2)
    \sum_{(u,v)\in E}e^{-\lambda(y_u+y_v)}(x_u+x_v) \notag\\
    &\geq(1-\epsilon/2)\Phi(y). \label{eq:mwu-oracle-2}
    \end{align}

    Putting \eqref{eq:mwu-oracle-1} and \eqref{eq:mwu-oracle-2} together, we get:
    $$
    \frac{1}{k}\sum_{i=1}^k\tilde{s}{v_i}
    \geq \frac{1-\epsilon/2}{\tau_f(G)}\Phi(y)
    \geq (1-\epsilon/2)c\Phi(y).
    $$
    Therefore, $U \geq \floor{\tau_f(G)}$, which concludes the proof of \Cref{item:mwu-oracle-3}.
    We finally remark that, since there are $5T = O(\epsilon^{-2}\log n)$ vertex sets, the oracle requires as many calls to \textsc{ComputeDegrees} for one guess $c$.
\end{proof}

To conclude, we remark that \cref{clm:mwu-framework,clm:mwu-oracle} together imply that a $(1 + \epsilon)$-approximation of the minimum fractional vertex cover can be computed using $\poly(\log n, 1/\epsilon)$ subset queries.
Hence, \cref{thm:mwu-subset-query} follows which, in turn, implies \cref{thm:mwu-dmon}.

\section{The Lower Bound}\label{sec:lb-reach}

In this section, we establish a lower bound on the pass complexity of approximating the distance to monotonicity in the streaming model,
which is stated in the following theorem. 

\begin{theorem}\label{thm:lb}
    For any fixed $c \geq 1$ there is $\epsilon > 0$ such that a $(c, \epsilon n)$-approximation of $\dmon{f}$ with $\widetilde{O}(n)$ space requires $\Omega(\streach)$ passes, where $\streach$ denotes the pass-complexity of the best $\widetilde{O}(n)$ space streaming algorithm for the $st$-reachability problem.
\end{theorem}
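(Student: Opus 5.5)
We prove this by reduction. Given a streaming $st$-reachability instance on an $n$-vertex directed graph $G$ with designated $s,t$, we build, using only $\widetilde{O}(n)$ extra space and no extra passes, an edge-and-label stream for a DAG $G'$ and labeling $f$. The construction will satisfy $\dmon{f}=\Theta(n)$ if $s$ reaches $t$, and $\dmon{f}=0$ otherwise. A $(c,\eps n)$-approximation with $\eps$ small enough in terms of $c$ then distinguishes the two cases: in the first its output is at least $\Theta(n)/c-\eps n>0$, and in the second it is at most $\dmon{f}=0$. Hence any $p$-pass approximation algorithm gives a $p$-pass $st$-reachability algorithm, which is the claimed $\Omega(\streach)$ bound.

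\textbf{Making the graph acyclic.} Reachability lower bounds are usually stated for general digraphs, while $G'$ must be a DAG. We handle this with the standard layering trick: take $n$ copies $V\times[n]$ and map each edge $(u,v)$ of $G$ to the edges $(u,i)\to(v,i+1)$ for all $i$. We also add the edges $(v,i)\to(v,i+1)$, so that shorter paths can idle. Then $s$ reaches $t$ in $G$ iff $(s,1)$ reaches $(t,n)$ in the layered graph.

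This squares the vertex count, which costs too much space. We therefore prefer to cite $\streach$ as already defined on DAGs, or on layered graphs of depth $O(\sqrt n)$ with $\widetilde O(\sqrt n)$ width per layer, so the count stays at $O(n)$. The known hard instances for pass lower bounds are of this form. If the bound must be stated for general digraphs, we would argue that the best known algorithms' pass complexity is unchanged on DAGs. This is the step I expect to be the main obstacle: the normalization of $\streach$ has to be chosen so that the reduction stays in $O(n)$ vertices. The cleanest resolution is to define $\streach$ over DAG instances, which is where the lower bound is meaningful.

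\textbf{Amplifying to $\Theta(n)$ violations.} Let $N$ be the number of vertices of the DAG instance. Add $N$ fresh vertices $a_1,\dots,a_N$ with edges $a_j\to s$, and $N$ fresh vertices $b_1,\dots,b_N$ with edges $t\to b_j$. Label every $a_j$ with $1$, every $b_j$ with $0$, and every original vertex with $1$. The only label-$0$ vertices are therefore the $b_j$'s. The new edges and all labels are emitted in the stream's first pass, or at the start of each pass, using $O(N)$ space, and the graph stays acyclic.

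\textbf{Counting violations in each case.} Every violation has the form (label-$1$ vertex reaching some $b_j$), and a vertex reaches $b_j$ only through $t$.

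If $s$ does not reach $t$, every $a_j$ fails to reach $t$. The remaining label-$1$ vertices that reach $t$ are original vertices, and those are the ones that break the ``$\dmon{f}=0$'' claim. To fix this, give label $0$ to every original vertex except $s$. The only label-$1$ vertices are then $s$ and the $a_j$'s. If $s$ does not reach $t$, none of them reaches a $b_j$, and the other original vertices are $0$-labeled, so they create no violation. Hence $f$ is monotone. Note that the edges $a_j\to s$ ($1\to1$) and original $0\to1$ edges into $s$ are fine; a $1\to0$ edge out of $s$ to an original vertex is a violation, so we instead label all original vertices reachable from $s$ with $1$. We cannot compute that set, so the cleaner fix is to split $s$ off: drop the original edges out of $s$ and relabel.

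If $s$ reaches $t$, then each $a_j$ reaches each $b_k$. The violation graph contains the complete bipartite graph between $\{a_j\}$ and $\{b_k\}$, which has a matching of size $N$. By \cref{fact:matching-dmon}, $\dmon{f}\ge N=\Omega(n_{G'})$ where $n_{G'}=3N$.

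Getting the labeling right so that the no-path case is exactly monotone without knowing reachability is the delicate point. I would first try labeling all original vertices $1$ and using the $b_j$ as the sole zeros; then the only violators are label-$1$ vertices reaching $t$. I would bound the matching they contribute by $1$ via the single cut vertex $t$: when $s$ does not reach $t$ the matching has size at most $1$, versus $N$ otherwise. This gap still separates the two cases for $\eps<1/(3c)$ and $N$ large.
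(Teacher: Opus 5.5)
There is a genuine gap, and it is in the labeling and counting step, which is where the reduction has to produce its separation.

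\textbf{Where the argument breaks.} Making the no-instance exactly monotone would require labeling $R_s$ (the set reachable from $s$) with $1$, which you cannot compute. Both of your workarounds fail. Dropping the original edges out of $s$ cuts every $s$--$t$ path, so in the yes-instance no $a_j$ reaches any $b_k$ and the yes case collapses. The fallback, with all original vertices labeled $1$, rests on a false claim. $G_f$ is defined through reachability, not through the edges of $G'$, so the cut vertex $t$ does not bound the matching. Every label-$1$ vertex that reaches $t$ is adjacent in $G_f$ to \emph{every} $b_j$, so distinct ancestors of $t$ can be matched to distinct $b_j$'s. Concretely, suppose $s$ is isolated and every other original vertex has an edge into $t$. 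Then the no-instance has $\mu(G_f)=N-1$. But every instance has $\mu(G_f)\le |B|=N$, since the $b_j$ are the only $0$-labeled vertices. So with padding of size $N$ there is no constant-factor gap, the additive $\eps n$ error erases even the difference of one, and your claimed separation for $\eps<1/(3c)$ does not hold.

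\textbf{The missing idea.} You do not need $\dmon{f}=0$ in the no case. It suffices to bound the no-case distance by $n$ and make the padding $\Theta(c)$ times larger than $n$. In the no-instance no $a_j$ reaches any $b_k$, $A$-vertices have no in-edges, and $B$-vertices have no out-edges. Hence every violation has an endpoint in $V$, so $V$ is a vertex cover of $G_f$ and $\dmon{f}=\mu(G_f)\le n$ for \emph{any} labeling of the original vertices.

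The paper labels all of $V$ with $0$ and obtains the same bound by exhibiting the monotone $g$ that equals $1$ on $A\cup R_s$ (this uses $t\notin R_s$). It takes $|A|=|B|=2cn$, so the yes-instance's violation graph contains $K_{2cn,2cn}$ and $\dmon{f}\ge 2cn$. It then picks $\eps$ small enough in terms of $c$ that a $(c,\eps|V'|)$-approximation, with $|V'|=(4c+1)n$, separates $n$ from $2cn$. With that padding, even your all-$1$ labeling would go through, as would the ``only $s$ labeled $1$'' variant if you keep the edges out of $s$.

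Your worry about $G$ being acyclic is fair. However, the paper also embeds $G$ into $G'$ unchanged and makes the same implicit assumption, so that is not where your argument breaks.
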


Despite extensive studies, all we know about $\streach$ is that
$$
    \Omega\left(\frac{\log n}{(\log \log n)^2}\right) \stackrel{\text{\cite{CKPSSY21}}}{\leq} \streach \stackrel{\text{\cite{JLS19}}}{\leq} n^{1/2+o(1)}.
$$
Note that \cref{thm:lb} can also be seen as a conditional lower bound. It implies that \cref{thm:main} has optimal pass-complexity modulo improving the state-of-the-art $st$-reachability algorithm of \cite{JLS19}.


\begin{figure}[h]
    \centering
    
    \includegraphics[width=0.8\textwidth]{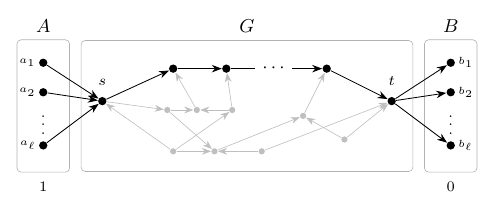}

    \caption{ If $t$ is reachable from $s$ the instance is far from monotone.
}
    \label{fig:reach}
\end{figure}

Let $G = (V, E)$ be a directed graph with $|V| = n$, $|E| = m$, and vertices
$s, t \in V$ be an instance of $st$-reachability. 
  We construct a directed graph $G' = (V', E')$ and a boolean
function $f \colon V' \to \{0,1\}$ as follows.
Let $l= 2cn$, for a fixed constant $c$.
Introduce two fresh batches of vertices
\[
    A = \{a_1, \dots, a_l\}, \qquad B = \{b_1, \dots, b_l\},
\]
disjoint from $V$ and from each other, and set $V' = A \cup V \cup B$
with $|V'| = 2l + n$.  The edge set $E'$ consists of
  $(a, s)$ for every $a \in A$, 
  $(u, v)$ for every $(u,v) \in E$ and,
 $(t, b)$ for every $b \in B$.
Finally, define $f \colon V' \to \{0,1\}$ by
\[
    f(v) =
    \begin{cases}
        1 & v \in A,\\[2pt]
        0 & v \in B,\\[2pt]
        0 & v \in V.
    \end{cases}
\]

First, we investigate the distance to monotonicity in the following two cases. The first case is when $t$ is reachable from $s$, we call this the yes case, represented in \cref{fig:reach}.  The no case is when $t$ is not reachable from $s$.

\begin{lemma}\label{lem:yes-case} 
If $s$ can reach $t$ in $G$, then $\dmon{f} \ge 2cn$. 
\end{lemma}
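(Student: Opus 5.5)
The plan is to apply \cref{fact:matching-dmon} to the instance $(G', f)$ and exhibit a large matching in the violation graph $G'_f$. In the yes case, $s$ reaches $t$ in $G$. Then every $a_i \in A$ reaches every $b_j \in B$ in $G'$ along the path $a_i \to s \leadsto t \to b_j$. Since $f(a_i) = 1 > 0 = f(b_j)$, every pair $\{a_i, b_j\}$ is an edge of $G'_f$. Consequently $G'_f$ contains the complete bipartite graph between $A$ and $B$.

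Because $|A| = |B| = l = 2cn$, the matching $\{\{a_i, b_i\} : i \in [l]\}$ is a matching of size $2cn$ in $G'_f$. Hence $\mu(G'_f) \ge 2cn$, and by \cref{fact:matching-dmon} we get $\dmon{f} = \mu(G'_f) \ge 2cn$.

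Alternatively, one can argue directly without invoking the fact. Any monotone $g$ must satisfy $g(a_i) \le g(b_i)$ for each $i$, so $g$ must differ from $f$ on at least one of $a_i, b_i$. These $l$ pairs are disjoint, so $g$ differs from $f$ in at least $l$ places.

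No step is a real obstacle. The only point needing care is checking reachability in $G'$ through the added edges $(a, s)$ and $(t, b)$.
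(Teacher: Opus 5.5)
Your proposal is correct and essentially the paper's argument. Both use the path $a \to s \to \cdots \to t \to b$ to make every pair in $A \times B$ a violation. The paper then bounds $\dmon{f}$ below by the minimum vertex cover of the resulting complete bipartite graph on $A \cup B$; your perfect matching $\{a_i,b_i\}$, or your disjoint-pairs argument, is exactly what certifies that bound. One small point in favour of your direct disjoint-pairs version: it does not need \cref{fact:matching-dmon}, which is stated for DAGs, whereas $G'$ need not be acyclic if $G$ is not.
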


\begin{proof}
Fix a directed path $s$ to $t$ in $G$.
For every $(a, b) \in A \times B$, there is a directed path $ a $ to $ b$ in $G'$ via the path from $s$ to $t$, so $a$ can reach $b$ while $f(a) = 1 > 0 = f(b)$.
All $l^2$ pairs in $A \times B$ are violations.
 
Any monotone function $g$ must fix at least one endpoint per violation.
Since these violations form the complete bipartite graph, any
vertex cover has size at least $l$ , so in $G'$
$\dmon{f} \ge l = 2cn$.
\end{proof}

\begin{lemma}\label{lem:no-case}
If $s$ cannot reach $t$ in $G$, then $\dmon{f} \le n$.
\end{lemma}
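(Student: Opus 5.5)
To prove \cref{lem:no-case}, I will bound $\dmon{f}$ from above by exhibiting an explicit monotone function $g$ that differs from $f$ on at most $n$ vertices. Equivalently, by \cref{fact:matching-dmon}, I could argue directly that the violation graph $G'_f$ has a vertex cover of size at most $n$. The explicit-repair route is cleaner, and it is the one I will take.

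I first identify all violating pairs in $G'$. The only vertices with $f=1$ are those in $A$, so every violation has the form $(a,w)$, where $a\in A$, $w\in V\cup B$, and $a$ reaches $w$ in $G'$. The vertices of $A$ have a single out-edge each, namely $(a,s)$, and no edges enter $A$. Hence $a$ reaches $w$ exactly when $w=s$ or $s$ reaches $w$ in $G'$. A path from $s$ can enter $B$ only through an edge $(t,b)$, which requires passing through $t$. Since $s$ cannot reach $t$ in $G$, and the added vertices $A,B$ create no new paths between vertices of $V$ ($A$ has no incoming edges and $B$ has no outgoing edges), $s$ reaches no vertex of $B$. Therefore every violating pair has its $0$-endpoint in $R:=\{w\in V : s \text{ reaches } w \text{ in } G\}\subseteq V$.

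Next I define $g$ by setting $g(w)=1$ for all $w\in R$ and $g=f$ elsewhere. This changes at most $|R|\le n$ entries. It remains to check that $g$ is monotone, by going through the edge types of $E'$:
\begin{itemize}
\item For an edge $(a,s)$, we have $s\in R$, so $g(a)=1\le 1=g(s)$.
\item For an edge $(u,v)$ with $(u,v)\in E$: if $u\in R$ then $v\in R$, since $R$ is closed under out-edges. Otherwise $g(u)=0$. In both cases $g(u)\le g(v)$.
\item For an edge $(t,b)$, we have $t\notin R$ by the no-case assumption, so $g(t)=0\le g(b)$.
\end{itemize}
Thus $g$ is monotone and $\dmon{f}\le n$.

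The only delicate step is the claim that $B$ is unreachable from $A$. This is where the no-case hypothesis enters, and it relies on the observation that the gadget adds no paths inside $V$. Once that is established, the closure of $R$ under out-edges gives monotonicity directly.
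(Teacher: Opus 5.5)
Your proof is correct and is essentially the paper's argument: your repaired function coincides with the paper's $g$ (equal to $1$ on $A\cup R_s$ and $0$ elsewhere), and your edge-by-edge monotonicity check is the same one the paper does. Your preliminary analysis of the violating pairs is harmless but unnecessary: the explicit-repair argument uses the hypothesis only through $t\notin R$ in the $(t,b)$ case, so the step you flag as ``delicate'' (that $B$ is unreachable from $A$) is not actually needed.
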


\begin{proof}
Let $g$ be a monotone function that agrees with $f$ on all but at
most $n$ vertices.  Let $R_s \subseteq V$ be the set of vertices
reachable from $s$ in $G$ (including $s$ itself).  Define $g$ as follows:
\[
    g(v) =
    \begin{cases}
        1 & v \in A \cup R_s,\\[2pt]
        0 & \text{otherwise}.
    \end{cases}
\]
We verify that $g$ is monotone on $G'$, i.e., for every edge $(x,y) \in E'$
we have $g(x) \le g(y)$.  
For edges $(a, s)$ where $a \in A$ we have $g(a) = 1$ and $g(s) = 1$ since $s \in R_s$.
For edges $(u,v) \in E$ where $u \in R_s$, we have $v \in R_s$ as well,
so $g(u) = g(v) = 1$.
For edges $(u,v) \in E$ where $u \notin R_s$ we have $g(u) = 0 \le g(v)$.
For edges $(t, b)$ where $b \in B$, since $s$ cannot reach $t$, we have $t \notin R_s$,
so $g(t) = 0 = g(b)$.

Therefore, the function $g$ differs from $f$ only on $R_s \subseteq V$ where $f = 0$
but $g = 1$, so $\dmon{f} \le |R_s| \le n$.
\end{proof}



\begin{proof}[Proof of \cref{thm:lb}]
Set $\epsilon = \frac{c}{2(4c+1)}$, which satisfies $\epsilon > 0$ for any fixed $c \ge 1$. Suppose $\mathcal{A}$ is a streaming algorithm that computes a $(c, \epsilon N)$-approximation of $\dmon{f}$ on posets of size $N$ in $p$ passes using space $\widetilde{O}(N)$.  The edges in $G$ arrive in the same sequence as the stream for the $s$-$t$ reachability problem. Assume $f$ arrives in the beginning of the sream.

Let $x$ be the value returned by running $\mathcal{A}$ on the stream for $G'$. 
If $t$ is reachable from $s$, by \cref{lem:yes-case}, $\dmon{f} \ge 2cn$. Since $\mathcal{A}$ is a $(c, \epsilon N)$-approximation, its output must satisfy $x \ge \dmon{f} \ge 2cn$.

If $t$ is not reachable from $s$, by \cref{lem:no-case}, $\dmon{f} \le n$. The algorithm $\mathcal{A}$ must output a value $x \le c \cdot \dmon{f} + \epsilon N$. Since $N = (4c+1)n$, given our choice of $\epsilon$ we have $\epsilon N = \frac{c}{2(4c+1)}(4c+1)n = 0.5cn$. Thus, $x \le cn + 0.5cn = 1.5cn$.

Because $1.5cn < 2cn$, the output of $\mathcal{A}$ allows us to perfectly distinguish whether $s$ can reach $t$ in $G$. Since $N = \Theta(n)$ and $|E'| = m + 2l = m + \Theta(n)$, generating the stream for $G'$ requires only $\widetilde{O}(n)$ space. Therefore, $\mathcal{A}$ effectively solves $st$-reachability in $p$ passes and $\widetilde{O}(n)$ space, implying that $p = \Omega(\streach)$.
\end{proof}

\section*{Acknowledgments}

We thank an anonymous reviewer for pointing out the alternative approach detailed in Section~\ref{sec:mwu}.

\bibliographystyle{alpha}

\bibliography{references}

\end{document}